\documentclass[11pt]{article}
\usepackage[margin=3cm]{geometry}
\usepackage{amsmath,amssymb,amsthm}
\usepackage{tikz}
\usetikzlibrary{arrows.meta}
\usepackage[hidelinks,linktocpage=true]{hyperref}
\hypersetup{
 pdftitle={A Nearly Quadratic Lower Bound for Linear Optimization over Convex Bodies in the Membership Oracle Model},
 pdfauthor={Santosh S. Vempala}
}

\newcommand{\R}{\mathbb R}
\newcommand{\E}{\mathbb E}
\newcommand{\Prob}{\mathbb P}
\newcommand{\Var}{\operatorname{Var}}
\newcommand{\tr}{\operatorname{tr}}
\newcommand{\dist}{\operatorname{dist}}

\newcommand{\OPT}{\operatorname{OPT}}
\newcommand{\distp}{d_{\mathbb P}}

\theoremstyle{plain}
\newtheorem{theorem}{Theorem}[section]
\newtheorem{lemma}[theorem]{Lemma}
\newtheorem{corollary}[theorem]{Corollary}

\theoremstyle{remark}

\title{A Nearly Quadratic Lower Bound for Linear Optimization\\ over Convex Bodies in the Membership Oracle Model}
\author{Santosh S. Vempala\\Georgia Tech\\
\texttt{vempala@gatech.edu}}
\date{}

\begin{document}
\maketitle

\begin{abstract}
We prove nearly quadratic lower bounds for randomized algorithms for
linear optimization and uniform sampling over convex bodies in the
membership oracle model. For linear optimization, this matches the known nearly
quadratic upper bound up to a polylog factor in the dimension. For uniform 
sampling, this improves on the previous linear lower bound.
Our construction also implies the same lower bound for volume estimation. 
\end{abstract}

\section{Introduction}

Optimizing a linear function over a convex body is a classical
algorithmic problem. It represents a frontier of polynomial-time
solvability and serves as the backbone of efficient algorithms in many areas.
Following Khachiyan's breakthrough analysis of the ellipsoid method
for linear programming~\cite{Kh80}, the foundational work of
Gr\"{o}tschel, Lov\'{a}sz, and Schrijver~\cite{GLS} formulated 
the optimization problem in terms of oracles for convex bodies. 
A \emph{membership oracle} for a convex body $K$ answers queries of
the form ``$x\in K$?'' The algorithm is also given an interior point
$x_0$ and bounds $r,R>0$ such that
$x_0+rB^n\subseteq K\subseteq RB^n$.
Gr\"otschel, Lov\'asz, and Schrijver~\cite{GLS} give an
algorithm for approximate linear optimization with
query complexity polynomial in the dimension and in the logarithms of
$R/r$ and the inverse accuracy. More than three decades later,
following many other developments, Lee, Sidford, and Vempala~\cite{LSV19}
improved the query complexity to nearly quadratic
in the dimension.

A second fundamental problem in the oracle model is uniformly
sampling a convex body. The celebrated work of
Dyer, Frieze, and Kannan~\cite{DyerFK89} gives a randomized
polynomial-time sampling algorithm and uses it to approximate the
volume of a convex body to arbitrary relative accuracy. The latter
consequence is particularly striking in the context of exponential
lower bounds for \emph{deterministic} volume
algorithms~\cite{E86,Barany1987}. A long and active line of work has
extended these methods to sampling and integration of logconcave
densities and substantially reduced their query
complexity~\cite{CV18,JLLV26}. The previously known lower bounds for
linear optimization and uniform sampling are linear in the dimension,
raising a basic question:
\begin{center}
\emph{Are there superlinear lower bounds for randomized algorithms
for linear optimization and uniform sampling in the membership
oracle model?}
\end{center}
We answer this question by proving nearly quadratic lower bounds for
both problems, and obtain a lower bound for volume estimation as a corollary. 
The hard bodies are centrally symmetric
parallelepipeds with polynomially bounded ratio of outer to inner
radius. 

\subsection{Results}

For a sufficiently large universal constant $C$, consider the class
\[
 \mathcal C_n=\{K\subset\R^n:\ K\text{ is a convex body and }
             B^n\subseteq K\subseteq Cn^{3/2}B^n\}.
\]

In the results below, the expectation is over the algorithm's randomness. We begin with optimization.
\begin{theorem}[Linear optimization]
\label{thm:intro-optimization}
For $K\in\mathcal C_n$ and unit vector $c$, set
$\OPT=\max_{x\in K}c^\top x$. Any randomized algorithm that, on any 
such input $(K,c)$, with probability at least $2/3$, outputs
$Z >0$ satisfying either
\[
\frac{1}{2} \OPT\le Z\le\OPT
 \qquad\text{or}\qquad
 \OPT-n\le Z\le\OPT
\]
requires $\Omega(n^2/\log^9 n)$ membership queries in expectation
on some input $(K,c)$. 
\end{theorem}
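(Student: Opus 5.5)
We will prove the bound by Yao's principle. We construct a distribution over pairs $(K,c)$ with $K\in\mathcal C_n$ and show that any deterministic algorithm making $q=o(n^2/\log^9 n)$ queries fails with probability more than $1/3$. The hard instances are random centrally symmetric parallelepipeds
\[
K_A=\{x:\ |a_i^\top x|\le 1,\ i=1,\dots,n\},
\]
whose normals $a_i$ are random and have norms between $1/(Cn^{3/2})$ and $1$. We fix the objective $c=e_1$. The parameters will be chosen so that $\OPT$ takes one of two values differing by more than a factor of $2$ and by more than $n$. For instance, we hide a rare facet direction $a_1$ that is nearly aligned with $c$: either it truncates $\OPT$ to about $\sqrt n$, or it is absent and $\OPT\approx Cn^{3/2}$. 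Under the two hypotheses the accepted ranges for $Z$ are disjoint, so a correct algorithm must distinguish them.

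The core is an information argument. Each membership query returns one bit, $\mathbb 1[\max_i|a_i^\top x|\le 1]$. We aim to show that under a suitable distribution of the $a_i$ (a Gaussian-like matrix plus a hidden perturbation), a single query reveals only about $\tilde O(1/n)$ bits about the hidden direction in total variation or KL divergence. The hidden object has $\Theta(n)$ degrees of freedom at resolution $1/\mathrm{poly}(n)$, which is about $n\log n$ bits of entropy. This count is what suggests a bound near $n^2$.

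The steps are as follows.
\begin{enumerate}
\item Define the two hypotheses $H_0$ and $H_1$ as mixtures over random rotations, so that the hidden constraint is uniformly spread in direction subject to the $\OPT$ gap.
\item For an adaptive deterministic algorithm, bound the KL divergence between the transcripts under $H_0$ and $H_1$. By the chain rule this equals the sum over queries of the expected conditional divergence of each answer bit given the past. For each query $x$, the answer differs between the hypotheses only when $x$ lies near the boundary slab of the hidden constraint while satisfying all other constraints. Concentration of $a^\top x$ for a random direction conditioned on the history, together with a union bound at $\log n$ scales, should give a per-query divergence of $\tilde O(1/n^2)$ relative to the needed resolution.
\item Conclude by Pinsker's inequality that $\tilde o(n^2)$ queries give total variation below $1/3$.
\end{enumerate}
The polylog factors $\log^9 n$ come from truncating Gaussian tails and discretizing the norms.

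The main obstacle is adaptivity. After many queries the posterior on the hidden direction is no longer rotationally uniform, and the algorithm could aim its queries at the informative slab. The first approach I would try is a ``good event'' argument. I would maintain that the posterior restricted to the orthogonal complement of the span of the previous queries remains close to uniform. Since $q\ll n^2$, while each query constrains only one linear functional, I would track projections and show that with high probability no more than $\tilde O(q/n)$ effective bits accumulate. If this fails, the fallback is to reveal extra information to the algorithm for free, namely the exact values $a_i^\top x$ at every queried point. This makes the process a linear-measurement problem, in which the divergence can be computed directly from Gaussian conditioning.
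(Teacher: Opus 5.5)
\textbf{There is a genuine gap.} Your argument rests entirely on the claim that, averaged over adaptive histories, each query adds only $\widetilde O(1/n^2)$ to the transcript divergence. Equivalently, each query has only that much posterior probability of landing where the two bodies disagree. Nothing you propose can establish this beyond the linear regime.

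\begin{itemize}
\item \emph{Bit counting does not supply it.} The relevant hidden direction has only $O(n)$ bits at constant resolution. One-bit answers therefore give $\Omega(n)$ unless you already know each answer reveals $O(1/n)$ bits, which is exactly what must be proved.
\item \emph{The good-event invariant is vacuous once $q\ge n$.} You want the posterior to stay near-uniform on the orthogonal complement of the span of past queries. But an algorithm can make its queries span $\R^n$ after $n$ steps, leaving nothing to track, so this can yield at most a linear bound.
\item \emph{The fallback destroys the problem.} If each query $x$ also reveals $a_i^\top x$ for every $i$, then $n$ linearly independent queries determine every row exactly. The algorithm then computes the long axis and $\OPT$ directly, so the strengthened problem is solvable with about $n$ queries and no superlinear bound survives.
\item \emph{The construction hides the wrong object.} You hide the truncating facet nearly aligned with $c$. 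What must stay hidden is the body's long axis, i.e.\ the normal to the remaining $n-1$ rows; the difficulty is finding a point far along it. The paper makes the $c$-direction row public and hides the other $n-1$ rows.
\end{itemize}

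\textbf{The missing idea is structural, not per-query.} The paper strengthens the oracle only mildly: a NO answer reports the first violated row and its sign, giving $2n+1$ responses. Then the set of hidden rows consistent with any transcript is a product $F_1\times\cdots\times F_{n-1}$ of convex sets. The count is thus over the $\Theta(n^2)$ coordinates of the hidden rows, not the $\Theta(n)$ of the direction.

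The chain of steps is:
\begin{enumerate}
\item On a high-probability event, any separating point lies within a constant projective distance of the hidden normal $v(A)$. So each separating leaf localizes the normal (Lemma~\ref{lem:separator-finding}).
\item Theorem~\ref{thm:intro-main-partition} shows such normal-localizing product parts have measure at most $\exp(-an^2/\log^8 n)$:
\begin{itemize}
\item localization forces the determinant sign of the rows projected onto $u^\perp$ to be stable (Lemmas~\ref{lem:localization-sign} and~\ref{lem:stable-parts});
\item sign stability forces total Gaussian relative entropy $\Omega(n^2/\log^8 n)$, via a dimension-reduction induction driven by Gaussian Brascamp--Lieb (Theorem~\ref{thm:sign-entropy});
\item for a Gaussian restricted to $P$, that entropy is at most $\log(1/p)$.
\end{itemize}
\item Counting leaves of a $(2n+1)$-ary tree then forces depth $\Omega(n^2/\log^9 n)$.
\end{enumerate}

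Because this is a statement about the geometry of the final partition, it never tracks a posterior under adaptivity. Your proposal has no counterpart to this product-set/entropy theorem, and without one your chain-rule bound has no content.
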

\paragraph{Consequences.}
Our lower bound matches the nearly quadratic upper bound of~\cite{LSV19}, up to a
polylog factor, even for constant-factor
approximation of the optimum value.

The optimization lower bound also has consequences for the five weak
oracles of Gr\"otschel, Lov\'asz, and Schrijver~\cite{GLS}.\footnote{We use the
accuracy and confidence conventions of~\cite[Section~2.1]{LSV19}. The radius ratio and inverse accuracy
are polynomially bounded in $n$; reductions may request inverse-polynomial
accuracy and failure probability from their source oracles. The notation
$\widetilde O$ suppresses polylogarithmic factors in these parameters.}
Recall that membership (MEM) decides whether a point belongs to the body; separation
(SEP) also gives a separating hyperplane for points outside the body.
Optimization (OPT) returns an approximate maximizer of a linear
functional. Validity (VAL) decides if a linear inequality holds
for the body; violation (VIOL) also gives a violating point
when it does not.

Let $A\to B$ denote implementing oracle $A$ using calls to oracle $B$.
The diagram below combines classical reductions with the $\mathrm{SEP}\to\mathrm{MEM}$ reduction in
\cite[Theorem~21]{LSV19}.
Solid arrows denote $\widetilde O(1)$ queries and dashed arrows
denote $\widetilde O(n)$ queries.
\begin{equation}\label{eq:oracle-chains}
\begin{tikzpicture}[
 baseline=(sep.base), x=2.75cm,
 every node/.style={font=\large,inner xsep=5pt,inner ysep=3pt},
 short/.style={-{Stealth[length=2mm,width=1.7mm]},line width=1.1pt},
 linear/.style={-{Stealth[length=1.8mm,width=1.4mm]},
                line width=.55pt,dash pattern=on 3pt off 2pt}
]
 \node (mem) at (0,0) {$\mathrm{MEM}$};
 \node (sep) at (1,0) {$\mathrm{SEP}$};
 \node (opt) at (2,0) {$\mathrm{OPT}$};
 \node (viol) at (3,0) {$\mathrm{VIOL}$};
 \node (val) at (4,0) {$\mathrm{VAL}$};
 \draw[short] ([yshift=3pt]mem.east) -- ([yshift=3pt]sep.west);
 \draw[linear] ([yshift=-3pt]sep.west) -- ([yshift=-3pt]mem.east);
 \draw[linear] ([yshift=3pt]sep.east) -- ([yshift=3pt]opt.west);
 \draw[linear] ([yshift=-3pt]opt.west) -- ([yshift=-3pt]sep.east);
 \draw[short] ([yshift=3pt]opt.east) -- ([yshift=3pt]viol.west);
 \draw[short] ([yshift=-3pt]viol.west) -- ([yshift=-3pt]opt.east);
 \draw[linear] ([yshift=3pt]viol.east) -- ([yshift=3pt]val.west);
 \draw[short] ([yshift=-3pt]val.west) -- ([yshift=-3pt]viol.east);
\end{tikzpicture}
\end{equation}
Combined with these reductions and the polarity relations in~\cite[Section~6]{AGGW20},
Theorem~\ref{thm:intro-optimization} makes the query
complexities of reductions among all five oracles optimal in their
dependence on dimension, up to polylog factors.

With quantum access to membership, linear optimization admits
algorithms using only $\widetilde O(n)$ queries~\cite{CCLW20,AGGW20}.
Together with Theorem~\ref{thm:intro-optimization}, this gives a 
separation between 
classical and quantum query complexities of optimization from membership.

Kerger~\cite{Kerger2026} presented a nearly quadratic lower bound
for deterministic convex minimization from an exact function value oracle with 
the feasible region known in advance. This nearly matches Protasov's upper bound~\cite{Protasov1996}. 
Our lower bound is for optimizing an explicit linear objective
over an unknown convex body given by a membership oracle and applies to randomized algorithms. 
In particular, the above consequences do not follow from \cite{Kerger2026}. 

\paragraph{Sampling.} Our construction gives the same nearly quadratic lower bound
for uniform sampling, improving the previous $\Omega(n)$ lower
bound~\cite{GoyalRademacherVempala2015}.
\begin{theorem}[Uniform sampling]
\label{thm:intro-sampling}
Any randomized algorithm whose output distribution is within total variation
distance $1/10$ of the uniform distribution on $K$, for every $K\in\mathcal C_n$,
requires $\Omega(n^2/\log^9 n)$ membership queries in expectation
on some input $K\in\mathcal C_n$. 
\end{theorem}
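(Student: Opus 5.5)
We would reduce Theorem~\ref{thm:intro-sampling} to the same hard distribution over bodies that underlies Theorem~\ref{thm:intro-optimization}. The plan is to show that an approximate sampler, run with few queries, would let us recover information about the hidden body that the optimization lower bound says is inaccessible.

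\textbf{The hard family.} The inputs are random centrally symmetric parallelepipeds $K_{A}=\{x:\|Ax\|_\infty\le 1\}$. One direction (the long axis) has length about $Cn^{3/2}$, is hidden, and is generated from a random rotation. All other directions have width about $1$ or $\sqrt n$, so that $K_A\in\mathcal C_n$. The core lemma we would establish (or take from the optimization argument) is an indistinguishability statement. Fix any adaptive algorithm making $T=o(n^2/\log^9 n)$ queries. Then, with high probability over the hidden body, every query point lies in the body iff it lies in a fixed reference body $K_0$ that does not depend on the long direction. Equivalently, the transcript is close in total variation to the transcript on $K_0$. Heuristically, each query reveals about $\widetilde O(1/n)$ bits about a direction carrying $\Theta(n)$ bits of entropy per coordinate, for $\Theta(n^2)$ bits in total. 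The bound $T=\Omega(n^2/\mathrm{polylog})$ follows, via a potential or mutual-information argument on the posterior of the hidden frame.

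\textbf{From indistinguishability to sampling.} Suppose a sampler $\mathcal S$ achieves TV error $1/10$ with $T$ expected queries. By Markov, we truncate at $10T$ queries, at a cost of $1/10$ extra failure probability. A uniform point of $K_A$ has, with constant probability, a component of size $\Omega(n^{3/2}/\mathrm{polylog})$ along the long axis $u$. The reason is that the body is an affine image of a cube, and the marginal along $u$ is an affine image of a sum of uniforms whose extent scales with the long side. By contrast, conditioned on the transcript agreeing with the $K_0$ transcript, the output $X$ of $\mathcal S$ is a function of the algorithm's randomness and the $K_0$-answers only, so it is independent of the hidden $u$. A random rotation makes $u$ nearly uniform on a large set. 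Hence for any fixed point $x$ of norm at most $Cn^{3/2}$ that lies in $K_A$ with non-negligible probability, $|u^\top x|$ is at most $\widetilde O(\|x\|/\sqrt n)$ with high probability. One must also check that $x\in K_A$ forces $\|x\|$ small in the non-long directions, so points in $K_A$ have small norm unless they use $u$.

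\textbf{Combining.} Putting these together, the output of $\mathcal S$ has $|u^\top X|$ small with probability $1-o(1)$, while the uniform distribution has it large with probability $\ge c$. Choosing constants so that $c>1/10+1/10+o(1)$ gives a contradiction. Alternatively, one can reduce directly: sample $\widetilde O(1)$ points and output $Z=\max c^\top X_i$ with $c=u$ known. This is cleaner if the optimization theorem's hard instances let the objective $c$ be public while the body's alignment with $c$ stays hidden. In that case Theorem~\ref{thm:intro-optimization} applies verbatim, provided the sampler-derived estimate lands within a factor $2$ of $\OPT$ with probability $2/3$, possibly after a constant-factor amplification by repeating.

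\textbf{Main obstacle.} The hardest step is the quantitative claim that uniform samples have large projection on the hidden direction, while any transcript-measurable point does not. This needs a careful concentration statement for the marginal of a random parallelepiped along $u$, tuned so that the anti-concentration constant beats the $1/10$ TV slack. I would first try to secure it by making the long side dominant enough that a constant fraction of the volume lies beyond half of $\OPT$.
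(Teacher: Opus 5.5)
Your main route has a genuine gap where you declare the sampler's output independent of the hidden direction. No fixed reference body $K_0$ can reproduce the transcript with high probability. A single query at a suitable multiple of $e_1$ is already answered YES or NO with constant probability each, depending on the hidden rows. The long axis is determined by those rows: it is the normal $v(A)$ to the $n-1$ short constraints. So transcripts genuinely carry information about it.

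The indistinguishability that is true, and that the paper proves as Lemma~\ref{lem:separator-finding}, compares $K_2(R)$ with $K_1(R)$. With $t_n$ queries the two runs coincide, because no query lands in $K_2(R)\setminus K_1(R)$ except with probability $<1/4$. That reference body depends on $R$, so the output is a function of an $R$-dependent transcript. Your estimate $|u^\top x|=\widetilde O(\|x\|/\sqrt n)$ for a fixed $x$ therefore does not transfer to $X$.

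Showing that the algorithm cannot learn enough to produce a point aligned with the normal is the entire difficulty, and no information count settles it. A direction to constant accuracy carries only $O(n)$ bits, and an enhanced query returns up to $\log(2n+1)$ bits. The $n^2$ comes instead from the product structure of the parts (Theorem~\ref{thm:intro-main-partition}), and that argument needs independent Gaussian rows. A Haar-random rotation has dependent rows, so your family is not covered by the machinery you want to borrow.

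The missing idea is to treat the sampler's output as one more query. On $6K_2$ the region $6(K_2\setminus K_1)$ has uniform measure $2/3$. A sampler with total variation error $1/10$ therefore lands there with probability more than $1/2$. Appending a membership query at $Z/6$ turns this into a separating query. Lemma~\ref{lem:separator-finding} already covers adaptively chosen points, since every separating point localizes the normal on $\mathcal H$. It says such a query occurs within $t_n$ queries with probability less than $1/4$. No independence and no anti-concentration beyond the value $2/3$ is needed.

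Your alternative route through optimization can be repaired in the same spirit, but not verbatim. On general bodies in $\mathcal C_n$ the near-optimal cap can carry exponentially small volume (think of a cone with its apex in direction $c$). So a sampler does not yield an optimizer on every input, and you cannot invoke Theorem~\ref{thm:intro-optimization} as a black box. Instead you would need to:
\begin{itemize}
\item rerun its coupling argument on the hard pair only;
\item test each sample for membership, so that $Z\le\OPT$ despite the $1/10$ error;
\item use $O(1)$ rather than $\widetilde O(1)$ samples, to avoid losing a polylogarithmic factor.
\end{itemize}
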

We remark that for this sampling lower bound we can restrict further to bodies satisfying
$B^n \subset K \subset O(n)B^n$ obtaining the same lower bounds. We keep the higher
polynomial for simplicity in the proof.

Finally, the same construction gives an independent nearly quadratic lower bound
for volume estimation, recovering the main claim of \cite{RV08} up to 
a polylog factor.
\begin{corollary}[Volume estimation]
\label{cor:intro-volume}
Any randomized algorithm that, for any $K\in\mathcal C_n$, with
probability at least $2/3$, outputs $\ell,u > 0$ such that
$\ell\le\operatorname{vol}(K)\le u$ and $u/\ell\le2$,
requires $\Omega(n^2/\log^9 n)$ membership queries in expectation
on some input $K\in\mathcal C_n$. 
\end{corollary}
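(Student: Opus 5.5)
We derive Corollary~\ref{cor:intro-volume} from the construction behind Theorems~\ref{thm:intro-optimization} and~\ref{thm:intro-sampling}, rather than from the theorem statements alone. That construction is a random hard family of centrally symmetric parallelepipeds $K_A=\{x:\|A x\|_\infty\le 1\}$ in $\mathcal C_n$, together with a reference body. The intended property is that any algorithm making $o(n^2/\log^9 n)$ queries cannot distinguish, with constant advantage, between two distributions over inputs. In distribution $\mathcal D_0$ the body is a random rotation of a fixed parallelepiped. In distribution $\mathcal D_1$ the body is a parallelepiped in which one direction, hidden among many, is stretched by a constant factor (say $4$) while the others are unchanged. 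The optimization lower bound presumably comes from exactly such an indistinguishability statement with the stretch aligned with $c$, since a factor-$2$ approximation of $\OPT$ would reveal whether the width along $c$ had been stretched.

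The plan is as follows.

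\textbf{Step 1.} Isolate the two-point indistinguishability lemma from the proof of the optimization theorem: for two distributions $\mathcal D_0,\mathcal D_1$ over $\mathcal C_n$, any algorithm with $q=o(n^2/\log^9 n)$ expected queries has transcript distributions within total variation $1/10$. We use Yao's principle and truncate expected queries to worst-case queries via Markov's inequality, losing only constants.

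\textbf{Step 2.} Show that the volumes of the two families are separated by a factor larger than $4$. For parallelepipeds, $\operatorname{vol}(K_A)=2^n/|\det A|$. Stretching one facet-direction by a factor $s$ therefore multiplies the volume by exactly $s$, deterministically. If instead the stretch is along a direction $c$ that is not a facet normal, I would compute the determinant change directly; it is still the factor $s$, since the linear map $I+(s-1)cc^\top$ has determinant $s$. Taking $s=5$ makes the volume ratio $5>4$ on every instance, and hence $\operatorname{vol}(K)$ distinguishes the families.

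\textbf{Step 3.} Reduce. Suppose an algorithm outputs $\ell\le\operatorname{vol}(K)\le u$ with $u/\ell\le 2$ with probability at least $2/3$. Since $\operatorname{vol}$ is constant on each family (rotations preserve volume), say $V_0$ and $5V_0$, no interval of ratio $2$ contains both values. Deciding according to which value lies in $[\ell,u]$ therefore identifies the family with probability at least $2/3$ on each. This yields distinguishing advantage $1/3>1/10$, contradicting Step 1. The queries used are exactly those of the volume algorithm, giving the $\Omega(n^2/\log^9 n)$ bound.

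The main obstacle is Step 1's dependence on the actual hard instances. If the paper's hard distribution for optimization does not use a volume-changing perturbation — for instance, if it rescales so that the bodies in both families have equal volume, or if the stretch factor is tied to $\OPT-n$ rather than a constant — I would modify the construction. In that case I would reuse the sampling lower bound's pair of distributions instead: uniform sampling hardness typically arises from a hidden direction carrying a constant fraction of mass, and I would check that its parallelepipeds differ in $|\det A|$ by a constant factor. Failing that, I would fall back on the generic reduction from volume to a distinguishing problem via the same indistinguishability bound, adjusting the stretch so that membership near the origin behaves identically while the determinant changes by a factor exceeding $4$. Confirming that the indistinguishability proof tolerates an arbitrary constant stretch is the step I expect to require re-examining the paper's argument.
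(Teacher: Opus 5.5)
Your overall idea matches the paper's: the two bodies of the hard pair have volumes differing by a constant factor larger than $2$, so a successful estimator must tell them apart, and the lower bound rules that out. However, Step~3 fails for the actual construction, and so does the way you combine it with Step~1.

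The hard pair is $6K_1(R)$, $6K_2(R)$, built from one \emph{shared} Gaussian matrix $R$, with $\operatorname{vol}(6K_j(R))=12^n b_jL/|\det R|$, where $b_1=1$ and $b_2=3$. For each fixed $R$ the ratio is exactly $3$, as your Step~2 anticipates. But the volume is not constant on either family. The factor $|\det R|$ depends on the hidden rows $R_1,\ldots,R_{n-1}$, and $\log|\det R|$ has fluctuations of order $\sqrt{\log n}$, so the laws of the two volumes overlap heavily once $R$ is averaged out. Hence the rule ``check which value lies in $[\ell,u]$'' cannot be evaluated from the transcript. A volume estimator therefore does not give a distinguisher in the sense of Theorem~\ref{thm:intro-hard-pair-distinguishing}. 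For the same reason, a total-variation bound between the two \emph{marginal} transcript laws, as in your Step~1, gives no contradiction. Optimization is different: there the optimum values $h$ and $3h$, with $h=6L/\|R_n\|$, depend only on the public row.

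The missing idea is to compare the two bodies for each fixed $R$ through a coupling. Fix $R\in\mathcal H$ and run the estimator on $6K_1(R)$ and $6K_2(R)$ with the same internal randomness. By a union bound, both runs are correct with probability at least $1/3$. On that event their outputs differ, since no interval with $u/\ell\le2$ contains both $V$ and $3V$. The runs have identical responses and stopping decisions until the first query in $K_2(R)\setminus K_1(R)$. So such a query occurs before either run stops with conditional probability at least $1/3$, and then $\tau\le T$.

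Lemma~\ref{lem:separator-finding} gives $\Prob(\tau\le t_n)<1/4$ under the unconditioned Gaussian. Therefore $\Prob(\mathcal H,\ T\ge t_n)\ge\frac13\Prob(\mathcal H)-\frac14>\frac1{100}\Prob(\mathcal H)$. Averaging over $R\in\mathcal H$ then yields a single admissible input with $\Prob(T\ge t_n)>1/100$.

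Three further points:
\begin{itemize}
\item The bodies lie in $\mathcal C_n$ only on $\mathcal H$, so the estimator's guarantee may be invoked only there.
\item This direct tail argument needs neither a total-variation bound of $1/10$ nor Markov truncation. The available constant is $1/4$, which works only because it is below $1/3$.
\item A volume ratio above $2$ is all that is needed, so the paper's stretch $b_2=3$ needs no modification.
\end{itemize}
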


These lower bounds also hold with constant probability, not just in expectation. 
For some input,
the query count $T$ satisfies
\[
 \Prob\left( T\ge a\,\frac{n^2}{\log^9 n}\right)\ge\frac1{100}
\]
for some universal constant $a>0$. The probability is over
the algorithm's randomness.

\subsection{The hard distribution}
\label{sec:hard-distribution}

Let $R$
be the matrix with rows $R_1^\top,\ldots,R_n^\top$, where each $R_i$ is drawn 
independently from $N(0,I_n/n)$. Let $L$ be a sufficiently large
universal multiple of $n$, chosen in Section~\ref{sec:oracle-complexity}.
For $b_1=1$, $b_2=3$, and $j\in\{1,2\}$, define
\[
 K_j(R)=\{x\in\R^n:\ |\langle R_i,x\rangle|\le1\ (i<n),\
                    |\langle R_n,x\rangle|\le b_j L\}.
\]
Almost surely $R$ is invertible. The vector $R^{-1}e_n$ is normal
to the first $n-1$ rows and parallel to the ``long'' axis. The last constraint
gives the two bodies different lengths in this direction.
The two oracles disagree only on queries in $K_2(R)\setminus K_1(R)$.

\begin{theorem}[Distinguishing the hard pair]
\label{thm:intro-hard-pair-distinguishing}
Let $j$ be uniform on $\{1,2\}$ and independent of $R$. Even when
$R_n$ is revealed, every randomized algorithm that, with probability at least $2/3$, 
identifies $j$
from a membership oracle for $K_j(R)$ makes 
$\Omega(n^2/\log^9 n)$ queries in expectation, where probability and expectation are
over $R,j$ and the algorithm's internal randomness. 
\end{theorem}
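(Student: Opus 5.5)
Here is the plan. I would prove Theorem~\ref{thm:intro-hard-pair-distinguishing} by an information-theoretic argument. The idea is that the algorithm can tell $j$ apart only by querying a point in $K_2(R)\setminus K_1(R)$. Such a point $x$ must satisfy $|\langle R_i,x\rangle|\le 1$ for all $i<n$ and also $|\langle R_n,x\rangle|> L$.

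Write $v=R^{-1}e_n$. Since $K_2\subseteq\{x: |\langle R_i,x\rangle|\le 1,\ i<n\}$, any such $x$ decomposes as $x=tv+w$ with $|t|>L$, where $w$ lies in the bounded slab-intersection $P=\{w:|\langle R_i,w\rangle|\le1, i<n\}\cap \{\langle R_n,w\rangle=0\}$. So $x$ must lie very close to the line $\R v$. Concretely, $x/\|x\|$ must be within angle roughly $O(\mathrm{diam}(P)/(L\|v\|))$ of $\pm v/\|v\|$.

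Since $R_n$ is revealed, the hidden information is the random $(n-1)$-dimensional subspace spanned by $R_1,\dots,R_{n-1}$, equivalently the direction of $v$ within $R_n$-space. The task therefore reduces to a needle-in-a-haystack problem: locate a direction whose prior is nearly uniform on a sphere, using only membership answers.

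The key steps, in order, are as follows.

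\textbf{Step 1 (simulation and hybrid).} Couple the two oracles. Run the algorithm against the oracle for $K_1(R)$, and note that the transcripts for $j=1,2$ coincide until the first query lying in $K_2\setminus K_1$. By the standard hybrid argument, success probability $2/3$ forces $\Prob(\text{some query hits } K_2\setminus K_1)\ge 1/3$ under the $K_1$ oracle.

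\textbf{Step 2 (per-query bound).} Show that conditioned on any transcript of $t$ answers from $K_1$, a new query hits $K_2\setminus K_1$ with probability at most roughly $\mathrm{polylog}(n)/n^2$ once $t\lesssim n^2/\log^9 n$. A union bound over $T$ queries then finishes the proof. Markov's inequality converts the result to the expectation form, and the same bound gives the constant-probability form.

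\textbf{Step 3 (how much a transcript reveals).} Each membership answer for $K_1$ is a function of the $n$ numbers $\langle R_i,x\rangle$. A ``yes'' answer means all are at most $1$ in absolute value. A ``no'' answer reveals only that some constraint fails.

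My plan here is to give the algorithm extra information, replacing each answer by the full vector $Rx$ rounded to precision $1/\mathrm{poly}(n)$. This costs $O(n\log n)$ bits per query, which is too much. The better route is to reveal only which constraint was violated and the identity of that row, which is $O(\log n)$ bits. Under this accounting, $t$ queries reveal $O(t\log n)$ bits about $R$.

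Hitting $K_2\setminus K_1$ requires knowing $v$ to angular precision about $1/(nL)\approx n^{-2}$ in at least the relevant directions. I would argue this amounts to $\Omega(n^2)$ worth of ``Gaussian-coordinate'' information. Each constraint test $|\langle R_i,x\rangle|\le 1$ is a one-dimensional Gaussian observation, since $\langle R_i,x\rangle\sim N(0,\|x\|^2/n)$. Each test therefore shifts the posterior of $R_i$ only in one direction, by about one standard deviation. Pinning each of the $n-1$ rows in the $\sim n$ directions needed requires about $n$ effective tests per row, giving $n^2$ in total. The formal tool would be a chi-square or KL bound: the posterior of each $R_i$ given the conditioning stays close to Gaussian, with covariance perturbed in at most $O(t_i)$ directions, where $t_i$ is the number of informative tests on row $i$.

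\textbf{Main obstacle.} The hardest step is Step~3, making the posterior-remains-spread argument rigorous. Adaptively chosen queries condition on intersections of slabs, and ``yes'' answers constrain all rows simultaneously. I would first try to show that with high probability the posterior of $(R_1,\dots,R_{n-1})$ restricted to the orthogonal complement of the span of the queries is still exactly Gaussian and independent. That piece is rotation-invariant, so $v$ has a component there of size comparable to what the prior gives. The leftover task is to control the at most $t$-dimensional span of queried directions and the truncation effects.

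The polylog losses, which I expect to account for $\log^9 n$, would come from concentration (the $\|x\|$ scales, the diameter of $P$, and the conditioning events) and from a dyadic decomposition over query norms.
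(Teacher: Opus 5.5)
There is a genuine gap, and it sits in Step~3, which is the whole content of the theorem. Your Step~1 is correct and is how the paper begins: coupling the runs on $K_1$ and $K_2$ forces $\Prob(\tau<\infty)\ge 1/3$. Your refined answers are essentially the paper's enhanced oracle with $2n+1$ responses, so $t$ queries carry $O(t\log n)$ bits. The trouble is that the mechanism you sketch for Step~3 cannot give more than a linear bound.

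\emph{The precision premise is off.} By your own estimate $\mathrm{diam}(P)/(L\|v\|)$, with $\mathrm{diam}(P)$ up to order $n^{3/2}$ and $\|v\|$ of order $\sqrt n$, a separating point localizes the normal only to angle $O(n/L)$. That is a constant when $L=\Theta(n)$ (the paper tunes $L$ so it is a fixed $\varepsilon_0$), not $n^{-2}$.

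\emph{Counting information about the direction cannot reach $n^2$.} The direction $v$ is uniform on the sphere, so localizing it even to angle $n^{-2}$ costs only $O(n\log n)$ nats. That much information fits in $O(n)$ responses.

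\emph{Both of your routes to ``$\Omega(n^2)$ Gaussian-coordinate information'' fail.} The posterior is indeed exactly Gaussian on the orthogonal complement of the query span, but that complement is $\{0\}$ once $t\ge n$. And a test is not worth ``one standard deviation in one direction''. A YES (or public-row NO) answer at $q=(Cn^{3/2}/\varepsilon)u$ confines all $n-1$ hidden rows at once to $|\langle R_i,u\rangle|\le\varepsilon/(Cn^{3/2})$. This is a product part of measure $e^{-O(n\log n)}$. On it, the matrix $A_\perp$ of the rows projected onto $u^\perp$ is still exactly Gaussian, so $\|A_\perp^{-1}\|_{\rm op}\le Cn$ with probability at least $1-1/1024$ for a large constant $C$. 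The normal is proportional to $u-A_\perp^{-1}a$ with $a=(\langle R_i,u\rangle)_{i<n}$ and $\|a\|\le\varepsilon/(Cn)$. Hence $\distp(v(A),u)\le\varepsilon$ with conditional probability at least $1-1/1024$.

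So a single transcript can localize the normal cheaply, and per-transcript statements like your Step~2 are false. Similarly, after a YES at a $q$ with $\langle R_n,q\rangle$ just below $L$, the query $(1+1/n)q$ is separating with conditional probability about $1/e$. Any valid argument must average over the partition of inputs into transcripts.

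That averaging is the missing idea, and the paper's product-partition theorem supplies it.
\begin{itemize}
\item With the enhanced oracle, leaves are convex product parts $F_1\times\cdots\times F_{n-1}$. On separating leaves, after discarding parts where $\mathcal H$ is unlikely, the normal is localized to constant angle.
\item Cheap parts like the one above are removed by \emph{sign stability}. Suppose the determinant sign of the rows projected onto $u(P)^\perp$ flips with probability at least $1/64$ when one row is resampled; exactly Gaussian projections flip with probability $1/2$. Then localization forces $\dist(R_i,\mathrm{span}\{R_j:j\ne i\})\le t/\sqrt n$ with probability $\Omega(t)$. That event does not depend on $u(P)$ and has total Gaussian probability at most $t^2/2$, so such parts have total measure $O(t)$.
\item For the remaining sign-stable parts, a dimension-reduction induction using Gaussian Brascamp--Lieb shows the projected rows have total relative entropy of order at least $n^2/\log^8 n$. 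By data processing, this entropy is at most $\log(1/p)$ for a part of measure $p$.
\item Counting the at most $(2n+1)^t$ leaves of depth at most $t$ then yields the $n^2/\log^9 n$ bound.
\end{itemize}
The polylogarithmic loss comes from this induction and the counting, not from concentration or dyadic decompositions. Without a replacement for this product-set theorem, your Steps~2--3 can give at most a linear lower bound.
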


\subsection{Product partitions}
\label{sec:product-partitions}

We study how queries restrict the possible inputs. Once an algorithm's random choices
and the publicly revealed row $R_n$ are fixed, its execution is a
deterministic decision tree on the input
$(R_1,\ldots,R_{n-1})$. Each internal node specifies a query, each
outgoing edge records an oracle response, and each leaf records a
terminated execution. A \emph{transcript} is the sequence of queries
and responses on such a path.

For the purpose of analysis, we strengthen the membership oracle: with a NO response, it also reports the
first violated row constraint and the sign of the violation. It checks
$R_1,\ldots,R_{n-1}$ in order before $R_n$. An ordinary membership
algorithm can ignore this additional information. Each query now has
at most $2n+1$ possible responses. The advantage is that every response
restricts the hidden rows separately. For a query $q$, a YES response
imposes the strips
\[
                  |\langle R_i,q\rangle|\le1,\qquad i<n.
\]
A NO response identifying a hidden row $i$ imposes these strips on the preceding
rows and a halfspace, determined by the reported sign, on $R_i$;
it places no new restriction on later hidden rows. A NO response
identifying the public row certifies that every hidden constraint is satisfied.
Intersecting these restrictions
along a path gives a set
\[
                         P=F_1\times\cdots\times F_{n-1},
\]
where $F_i\subseteq\R^n$ is convex. Distinct leaves give disjoint product
sets. We call each such nonempty $P$ a \emph{part} and
each constituent $F_i$ a \emph{factor}: $F_i$ is the set of values
of row $R_i$ consistent with that transcript. Conditional on a
positive-measure part, the rows remain independent, and the distribution of
$R_i$ is its original Gaussian restricted to $F_i$.

A query in $K_2(R)\setminus K_1(R)$ is \emph{separating}: it lies beyond
the shorter cap while satisfying every slab constraint. Except
on a small set of Gaussian inputs, its coordinate along the long axis
dominates its length, so its direction is close, up to
sign, to the hidden normal. Stopping at the first separating query fixes
a candidate direction on each separating part. Lemma~\ref{lem:separator-finding}
shows that, after discarding parts of small total measure, the normal is
localized around this direction with high conditional probability.
Our main structural theorem states that any substantial family of
normal-localizing product parts contains a substantial subfamily of
exponentially small parts.

Write $\gamma_d=N(0,I_d)$ for the standard Gaussian and $\bar\gamma_n=N(0,I_n/n)$.
For a nonzero vector $v$, write $\overrightarrow{v}=v/\|v\|$. For nonzero $v,w$, define the
projective distance between them as
\[
 \distp(v,w)=
 \min\bigl\{\|\overrightarrow{v}-\overrightarrow{w}\|,
             \|\overrightarrow{v}+\overrightarrow{w}\|\bigr\}.
\]
For a full-row-rank matrix $A$ with rows
$R_1^\top,\ldots,R_{n-1}^\top$, let $v(A)$ be a unit normal
to its row span; its sign will not matter. The \emph{measure} of a part $P$ is the probability
$\bar\gamma_n^{\otimes(n-1)}(P)$.

\begin{theorem}[Normal-localizing product partition]
\label{thm:intro-main-partition}
There is a universal constant $a>0$ and, for every fixed
$\eta\in(0,1)$, constants $\varepsilon_\eta\in(0,1)$ and $n_\eta$ such that the following holds
for all $n\ge n_\eta$.
Let $\mathcal P$ be a countable disjoint
partition of $(\mathbb R^n)^{n-1}$ into convex product parts
\[
P=F_1\times\cdots\times F_{n-1}.
\]
Let $\mathcal S_{\rm loc}\subset\mathcal P$ be a family of
parts with positive measure and total
$\bar\gamma_n^{\otimes(n-1)}$-measure at least $\eta$.
Assume that each $P\in\mathcal S_{\rm loc}$ has a fixed unit
vector $u(P)$ such that, when $A$ is sampled from the normalized
restriction of $\bar\gamma_n^{\otimes(n-1)}$ to $P$,
\[
\Prob\left(
\distp(v(A),u(P))
\le
\varepsilon_\eta
\right)
\ge
1-\frac{1}{1024}.
\]
Then there is a subfamily
$\mathcal S'\subset\mathcal S_{\rm loc}$ of total measure at least
$\eta/2$ such that every $P\in\mathcal S'$ satisfies
\[
\bar\gamma_n^{\otimes(n-1)}(P)
\le
\exp\left(-a\frac{n^2}{\log^8 n}\right).
\]
\end{theorem}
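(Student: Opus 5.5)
We reduce the statement to a small-ball estimate for a single part, followed by averaging. The key step is to show: if $P=F_1\times\cdots\times F_{n-1}$ has measure $\mu(P)$ and localizes the normal around $u=u(P)$ with probability at least $1-1/1024$, then either $\mu(P)\le \exp(-a n^2/\log^8 n)$ or $P$ belongs to an exceptional class whose total measure we can bound by $\eta/2$. Summing over $\mathcal S_{\rm loc}$ then gives the subfamily $\mathcal S'$.

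The exceptional class should be defined using factors that are nearly full. Localization of $v(A)$ near $u$ means that every row $R_i$ is nearly orthogonal to $u$. Under the conditional law, in which rows are independent with $R_i$ restricted to $F_i$, the typical magnitude of $\langle R_i,u\rangle$ must then be much smaller than its unconditioned scale $n^{-1/2}$ for most $i$. There is a subtlety: $v(A)$ is near $u$ if and only if the projection of the row space onto $u$ is small relative to the rows. By Cauchy–Binet type reasoning, this requires $\sum_i\langle R_i,u\rangle^2$ to be small compared with the smallest singular value squared of $A$ restricted to $u^\perp$. That singular value is of order $n^{-1}$ for $n-1$ nearly square Gaussian rows, and this is exactly where the quadratic scale and the $\log$ losses enter. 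Concretely, I would use the decomposition $v(A)\propto u - \sum_i \langle R_i,u\rangle w_i+\dots$ with $w_i$ the dual basis in $u^\perp$. This shows that $\distp(v(A),u)\le\varepsilon$ forces $|\langle R_i,u\rangle|\lesssim \varepsilon/\|w_i\|$, and $\|w_i\|$ is typically $\gtrsim \sqrt n$ up to polylogs after discarding events of small probability. So for a constant fraction of indices $i$ (say at least $n/2$), the conditional law of $R_i$ on $F_i$ puts mass at least $1/2$ on the slab $\{|\langle x,u\rangle|\le C\varepsilon\, \mathrm{polylog}(n)/\sqrt n\}$, where $\varepsilon=\varepsilon_\eta$.

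Next, a convex factor $F_i$ of substantial Gaussian measure cannot concentrate the conditional Gaussian in a thin slab. For a convex set $F$ with $\bar\gamma_n(F)=m$, the conditional law is log-concave, and its marginal in direction $u$ is a restricted one-dimensional Gaussian on an interval-like log-concave density. If at least half of that mass lies in a slab of width $\delta n^{-1/2}$, then a one-dimensional Gaussian restriction bound combined with Brunn–Minkowski / Ehrhard-type arguments gives $m\le C\delta\,\mathrm{polylog}$. More precisely, I expect roughly $\log(1/m)\gtrsim \log(1/\delta)$, which is too weak. The real gain has to come from the fact that $u$ is unknown to the tree, whereas the factors are built from only $T$ queries. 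Here I would use the structure: each factor is an intersection of strips and halfspaces given by query directions, and I would compare against a union bound over a net of directions $u$.

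The plan for this step is a counting/entropy argument. Fix an $\varepsilon$-net $\mathcal N$ of the sphere of size $e^{O(n\log(1/\varepsilon))}$, and for each $u\in\mathcal N$ consider the event $E_u$ that the Gaussian matrix $A$ has normal within $2\varepsilon$ of $u$. Each localizing part satisfies $\mu(P)\lesssim \mu(P\cap E_{u(P)})$. The quantity $\sum_P\mu(P)$ over parts with large $\mu(P)$ must then be controlled via a Gaussian deviation argument: conditioned on $P$, the $n-1$ independent rows must each fall into an approximately $\varepsilon/\sqrt n$ slab around $u^\perp$, or else the smallest singular value must be atypically large. If the $F_i$ were unconstrained in direction $u$, this conditional probability would be $\varepsilon^{\Omega(n)}$ per row aggregated, giving about $e^{-\Omega(n^2)}$ in total. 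The main obstacle is therefore to show that a factor $F_i$ with $\bar\gamma_n(F_i)\ge e^{-n/\log^8 n}$ cannot substantially compress the $u$-marginal. I would try Gaussian isoperimetry/concentration: a convex set of measure $m$ contains, in the Gaussian sense, a thickening under which the $u$-marginal has variance at least $c/(n\log(1/m))$. This means the slab has conditional mass at most $O(\varepsilon\sqrt{\log(1/m)})$. Requiring this to hold for $n/2$ rows yields $\prod_i\bar\gamma_n(F_i)\le \exp(-a n^2/\log^8 n)$ once polylogs from the singular value bounds are absorbed. Finally, parts that violate the singular value or dual-basis-norm typicality have total measure $o(1)$ by standard random-matrix tails, and these, together with parts failing the per-row criterion, account for at most $\eta/2$. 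Here $\varepsilon_\eta$ is chosen small enough that the constant $1/1024$ suffices.
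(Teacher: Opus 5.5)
There is a genuine gap, and it sits at the step you yourself call the main obstacle. Your first reduction, that localization pushes many $u$-marginals into slabs of width about $\varepsilon/\sqrt n$, is roughly right. But the inequality you then need is false. The slab $F=\{x:|\langle x,u\rangle|\le\delta/\sqrt n\}$ is convex with $\bar\gamma_n(F)\approx\sqrt{2/\pi}\,\delta$. Yet its conditional $u$-marginal has variance about $\delta^2/(3n)$, far below $c/(n\log(1/\delta))$, and it puts all of its mass in the thin slab.

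More fundamentally, no argument that certifies smallness part by part, using compression of $u$-marginals plus random-matrix typicality, can reach a quadratic exponent. Take $P=\prod_{i<n}\{x:|\langle x,u\rangle|\le\beta n^{-3/2}\}$ and write $R_i=X_i+s_iu$ with $X_i=P_{u^\perp}R_i$. On $P$ the $X_i$ are still unrestricted $N(0,I_{u^\perp}/n)$ rows, and $\|s\|\le\beta/n$. The normal is proportional to $u+w$ with $Xw=-s$, so $\distp(v(A),u)\le 2\beta\|X^{-1}\|_{\rm op}/n$. By the Sankar--Spielman--Teng bound in Lemma~\ref{lem:gaussian-estimates}, this is at most $\varepsilon_\eta$ except with probability $O(\beta/\varepsilon_\eta)$. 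Hence for $\beta$ a small multiple of $\varepsilon_\eta$ the part localizes, yet $\bar\gamma_n^{\otimes(n-1)}(P)=\exp(-O(n\log n))$. This part is typical in every random-matrix sense and satisfies your per-row slab criterion. Your argument would therefore have to declare it smaller than $\exp(-an^2/\log^8 n)$, which it is not.

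Your net heuristic fails for the same reason. A single row lands in a slab of relative width $\varepsilon$ with probability about $\varepsilon$, so $n-1$ rows cost only $\varepsilon^{n-1}$, not $e^{-\Omega(n^2)}$. A union over an $\varepsilon$-net of size $e^{O(n\log(1/\varepsilon))}$ then cancels even that.

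The missing idea is that the quadratic exponent must come from the $u^\perp$-components. It also requires discarding a family of parts, including all slab-type parts, by a global rather than per-part argument. The paper does both through the determinant sign of the projected rows $P_{u(P)^\perp}R_i$.

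First, consider a localizing part whose flip probability $\nu_u$ is at least $1/64$; your slab part has $\nu_u=1/2$, since its projected rows are unrestricted. Lemma~\ref{lem:localization-sign} shows that on such a part the short-residual event $\{\dist(R_i,\operatorname{span}R_{-i})\le t/\sqrt n\}$ has conditional probability at least $ct$. This event does not involve $u(P)$ and has global probability at most $t^2/2$. Markov's inequality therefore bounds the total measure of unstable localizing parts by $t/(2c)$.

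Second, on the remaining sign-stable parts, Theorem~\ref{thm:sign-entropy} forces $\sum_i\mathcal K(\mu(\sqrt n\,P_{u^\perp}R_i))\gtrsim n^2/\log^8 n$. Its proof is an induction on dimension that deletes a row and projects the others, with Gaussian Brascamp--Lieb showing that projection removes nearly another $K/d$ of entropy. Data processing together with $D_{\rm KL}(\bar\gamma_n(\cdot\mid F_i)\Vert\bar\gamma_n)=\log(1/\bar\gamma_n(F_i))$ then gives $\log(1/p)\gtrsim n^2/\log^8 n$. Nothing in your proposal plays the role of either step.
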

The localization radius $\varepsilon_\eta$ depends only on $\eta$, not on $n$.
Since the parts of the product partition are disjoint, the theorem
implies
\[
|\mathcal S_{\rm loc}|
\ge
\frac{\eta}{2}\exp\left(a\frac{n^2}{\log^8 n}\right).
\]

\begin{corollary}
\label{cor:variable-depth-partition}
Under the hypotheses of Theorem~\ref{thm:intro-main-partition},
suppose the parts are the leaf sets of a deterministic decision tree
with at most $2n+1$ possible responses to each query.
Then there are constants $c_\eta,n_\eta>0$ such that, for $n\ge n_\eta$,
the number of queries $T$ satisfies
\[
 \Prob\left(T\ge c_\eta\frac{n^2}{\log^9 n}\right)\ge\frac{\eta}{4}. 
\]
\end{corollary}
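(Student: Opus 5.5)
Apply Theorem~\ref{thm:intro-main-partition} to the leaf partition of the tree and then count leaves, using that a tree with branching factor $2n+1$ has few leaves at shallow depth. By Theorem~\ref{thm:intro-main-partition}, there is a subfamily $\mathcal S'\subset\mathcal S_{\rm loc}$ of total measure at least $\eta/2$. Every $P\in\mathcal S'$ has measure at most $\exp(-a n^2/\log^8 n)$. Each part is a leaf set, and each leaf corresponds to a unique root-to-leaf path, so each $P$ carries a well-defined depth $T(P)$. This depth is the number of queries made on every input in $P$.

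Set $t=c_\eta n^2/\log^9 n$, with $c_\eta$ to be chosen. The number of leaves of depth at most $t$ is bounded by the number of response sequences of length at most $t$:
\[
\sum_{k=0}^{\lfloor t\rfloor}(2n+1)^k\le 2(2n+1)^{t}=2\exp\bigl(t\log(2n+1)\bigr).
\]
For $n\ge 2$ we have $\log(2n+1)\le 2\log n$. Hence the total measure of parts in $\mathcal S'$ of depth at most $t$ is at most
\[
2\exp\left(2c_\eta\frac{n^2}{\log^8 n}-a\frac{n^2}{\log^8 n}\right).
\]
Choose $c_\eta=a/4$; despite the notation, this constant does not depend on $\eta$. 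The bound becomes $2\exp(-a n^2/(2\log^8 n))$. Enlarge $n_\eta$ so that this is at most $\eta/4$ for $n\ge n_\eta$.

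The parts of $\mathcal S'$ of depth greater than $t$ then have total measure at least $\eta/2-\eta/4=\eta/4$. On each such part $T>t$ identically. Since the parts are disjoint, this gives $\Prob(T\ge c_\eta n^2/\log^9 n)\ge\eta/4$, with probability taken over $(R_1,\dots,R_{n-1})\sim\bar\gamma_n^{\otimes(n-1)}$.

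The main care point is whether the tree could be infinite, or have leaves only reached with probability zero. Both are handled by the same count. The partition is countable, and the leaf-count bound only involves leaves at depth at most $t$, of which there are finitely many, so nonterminating paths simply contribute to the event $T>t$. The only other step needing attention is the absorption of $\log(2n+1)$ and the factor $2$ into the extra $\log n$ in the exponent; this is what turns the $\log^8$ of the theorem into the $\log^9$ of the corollary.
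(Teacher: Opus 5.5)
Your proof is correct and essentially the paper's argument: you apply Theorem~\ref{thm:intro-main-partition}, bound the number of leaves of depth at most $t$ by a power of $2n+1$, and absorb $\log(2n+1)\le 2\log n$ into the extra logarithm. The differences are cosmetic: you bound the leaf count by a geometric sum with a factor $2$, where the paper pads paths and uses prefix-freeness to get $(2n+1)^t$, and you obtain $c_\eta=a/4$ where the paper takes $a/8$. One trivial slip is that $\log(2n+1)\le 2\log n$ fails at $n=2$ but holds for $n\ge 3$, which is harmless since $n$ is large.
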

This also gives $\E T=\Omega_\eta(n^2/\log^9 n)$.
We note that product partitions were considered in earlier work in more restricted 
settings~\cite{RV08,RademacherShao2008,GoyalRademacherVempala2015}. 

\subsection{Proof overview}

The proof of the product-partition theorem has three main steps: 
normal localization
implies sign stability, sign stability forces large relative entropy, and
large relative entropy implies small part measure.
Small part measure implies many parts and a query lower bound. 
The
oracle reductions follow by showing that successful algorithms 
must find separating
points with constant probability.

\paragraph{1. Normal localization implies sign stability.}
Fix a normal-localizing part $P$ and its unit vector $u=u(P)$.
Conditional on $P$, the rows remain independent. Project them onto
$u^\perp$, in fixed orthonormal coordinates and without rescaling:
\[
 X_i=P_{u^\perp}R_i\in\R^{n-1}.
\]
Let $\nu$ be the probability that the determinant sign changes when a
uniformly chosen row is replaced by an independent sample from its
distribution on the part. We call the sign \emph{stable} when $\nu<1/64$.
For unrestricted Gaussian rows, the original and replacement rows
independently fall on either side of the other rows' span with equal
probability, so $\nu=1/2$.

To understand the effect of localization, expose all rows except $R_i$.
Its residual lies in the plane perpendicular to their span. Normal
localization forces this residual close in direction to the line in that
plane perpendicular to $u$; the two directions along the line correspond
to the two determinant signs. If both signs remain likely, logconcavity
and the covariance bound force mass near the origin. Quantitatively,
Lemma~\ref{lem:localization-sign} shows that, on an unstable part with
sufficiently tight normal localization, a uniformly chosen row has residual
length at most $t/\sqrt n$ with probability $\Omega(t)$.

For unrestricted Gaussian rows, the same short-residual event has
probability at most $t^2/2$. Crucially, this event does not depend on $u(P)$.
Averaging over the partition therefore bounds the total measure of
unstable localizing parts by $O(t)$. Choosing $t$ small compared with the
localizing family's measure leaves at least half that measure on
sign-stable parts.

\paragraph{2. Sign stability implies large relative entropy.}
Scale the projected rows by $\sqrt n$, preserving flip probabilities and
making the Gaussian reference standard. Let $K$ be the sum of their
relative entropies with respect to this reference. We prove
\[
 \frac{K}{(n-1)^2}\le\frac{c}{\log^8(n-1)}
 \quad\Longrightarrow\quad
 \nu\ge\frac1{64}.
\]
Assume the entropy bound and, for contradiction, that $\nu<1/64$.
We reduce dimension by deleting one row and projecting the others onto
its orthogonal complement. The aim is to keep both the flip probability
and entropy divided by squared dimension small.

In dimension $d$, a common linear map makes the second-moment matrices
sum to $dI_d$, without increasing entropy or changing flip probabilities.
Write $K_d$ for the resulting total entropy. If every row has small
entropy per coordinate, delete a uniformly chosen row. The expected flip
probability is unchanged. Deletion removes $K_d/d$ entropy in expectation,
and projection removes nearly another $K_d/d$: the row directions are
sufficiently spread out for the Gaussian Brascamp--Lieb inequality to give
this additional decrease. Consequently, $q=K_d/d^2$ increases by at most
a small multiplicative factor in expectation.

If a row instead has large entropy per coordinate, we delete that row.
Its entropy saving compensates for a possible increase in flip probability.
The potential $\Psi=2\nu+q/q_*$ records this balance, with a fixed threshold
$q_*>0$. Lemma~\ref{lem:potential-one-step} bounds its expected increase at
each step, with parameters chosen so that the accumulated multiplicative
loss is only a constant.

The induction has two boundary cases. If $q>q_*$, the entropy term alone
makes $\Psi>1$. If the dimension reaches a prescribed polylogarithmic
threshold $D$ while $q\le q_*$, then $K_D\le q_*D^2$, which we choose to be
a small absolute constant. Pinsker's inequality then makes the joint
distribution of the rows and an independent replacement close to that of
independent Gaussians. Since the Gaussian flip probability is $1/2$, the
flip probability $\nu$ is bounded below by an absolute constant, again making
the potential large. Induction carries this lower bound back to the
initial dimension, contradicting the small initial potential.

\paragraph{3. Large relative entropy implies small part measure.}
A Gaussian restricted to $F_i$ has relative entropy
$\log(1/\bar\gamma_n(F_i))$ with respect to the original Gaussian.
Applying the data processing inequality to the scaled projection therefore
gives
\[
 K\le\sum_i\log\frac1{\bar\gamma_n(F_i)}=\log\frac1p,
 \qquad p=\bar\gamma_n^{\otimes(n-1)}(P).
\]
Sign stability forces $K\ge c n^2/\log^8 n$, so every retained part satisfies
\[
 p\le\exp(-c n^2/\log^8 n).
\]
Together, these parts retain at least half the original normal-localizing
family's measure.

\paragraph{Acknowledgements.} We are indebted to Luis Rademacher, Navin Goyal and Yin Tat Lee
for many helpful discussions. This work was supported in part by NSF award CCF-2504994 and a Simons Investigator award.

\section{Preliminaries}
\label{sec:preliminaries}

Throughout, the ambient dimension $n$ is assumed to exceed a sufficiently
large absolute constant. Fixed vectors use lowercase letters;
random variables, including the rows $R_i$ and their projections $X_i$,
use uppercase letters, as do matrices and subspaces. We write $v(A)$
for the normal determined by the row matrix $A$.
We use $\|\cdot\|$ for the Euclidean norm and
$\|A\|_{\rm op}=\sup_{\|v\|=1}\|Av\|$ for the operator norm.
We write $\mu(X)$
for the distribution of $X$ and $T_\#\mu$ for the distribution of $T(X)$ when
$X\sim\mu$. For a Euclidean subspace $L$, $P_L$ is orthogonal
projection and $\gamma_L$ is the standard Gaussian distribution on $L$.
In $\R^d$, this is $\gamma_d=N(0,I_d)$, with density
$\phi_d(x)=(2\pi)^{-d/2}e^{-\|x\|^2/2}$. We also use
$\bar\gamma_n=N(0,I_n/n)$.
For symmetric matrices, $A\preceq B$ means
$v^\top Av\le v^\top Bv$ for every $v$.

\subsection{Relative entropy and total variation}

For probability measures $\mu,\nu$, their relative entropy is
\[
 D_{\rm KL}(\mu\Vert\nu)=\int \frac{d\mu}{d\nu}\log \frac{d\mu}{d\nu}\,d\nu = \int \log \frac{d\mu}{d\nu}\,d\mu.
\]
If $\mu$ is not absolutely continuous with respect to $\nu$, we set
$D_{\rm KL}(\mu\Vert\nu)=+\infty$.
For a probability measure $\mu$ on $\R^d$, its Gaussian relative entropy is
\[
                  \mathcal K(\mu)=D_{\rm KL}(\mu\Vert\gamma_d).
\]
The total variation distance is
$\|\mu-\nu\|_{\rm TV}=\sup_E|\mu(E)-\nu(E)|$, the supremum being
over measurable sets.
For any coupling of $X\sim\mu$ and $Y\sim\nu$,
$\|\mu-\nu\|_{\rm TV}\le\Prob(X\ne Y)$.

\begin{lemma}
\label{lem:entropy-facts}
Let $\mu,\nu$ be probability measures.
\begin{enumerate}
\renewcommand{\theenumi}{\alph{enumi}}
\renewcommand{\labelenumi}{(\theenumi)}
\item\label{item:entropy-variational}
Relative entropy is nonnegative and has the variational representation
\[
 D_{\rm KL}(\mu\Vert\nu)
   =\sup_g\left\{\E_\mu g-\log\E_\nu e^g\right\},
\]
where the supremum is over bounded measurable functions.
\item\label{item:data-processing}
For any measurable map $T$,
\[
 D_{\rm KL}(T_\#\mu\Vert T_\#\nu)
     \le D_{\rm KL}(\mu\Vert\nu).
\]
\item\label{item:product-entropy}
For finite products,
\[
 D_{\rm KL}\left(\bigotimes_i\mu_i\Big\Vert\bigotimes_i\nu_i\right)
     =\sum_iD_{\rm KL}(\mu_i\Vert\nu_i).
\]
\item\label{item:restriction-entropy}
For any measurable set $F$ with $\nu(F)>0$,
\[
 D_{\rm KL}(\nu(\cdot\mid F)\Vert\nu)
     =\log\frac1{\nu(F)}.
\]
\item\label{item:pinsker}
Pinsker's inequality: 
\[
 \|\mu-\nu\|_{\rm TV}
     \le\sqrt{D_{\rm KL}(\mu\Vert\nu)/2}.
\]
\item\label{item:binary-entropy-bound}
If $\mu(E)=p$ and $\nu(E)=b\in(0,1)$, then
\[
               D_{\rm KL}(\mu\Vert\nu)\ge p\log(1/b)-\log2.
\]
\end{enumerate}
\end{lemma}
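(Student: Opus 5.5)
The six items are standard, and I will prove each directly from the definition $D_{\rm KL}(\mu\Vert\nu)=\int\log\frac{d\mu}{d\nu}\,d\mu$. The only genuinely analytic input is the Donsker--Varadhan representation (a). Once (a) is available, data processing (b) follows from it. The remaining items are direct calculations, except Pinsker (e), which I will reduce to a two-point inequality.

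For (a), write $f=d\mu/d\nu$ when $\mu\ll\nu$. For bounded $g$, define the tilted measure $\nu_g=e^g\nu/\E_\nu e^g$. A direct computation gives
\[
D_{\rm KL}(\mu\Vert\nu)-\bigl(\E_\mu g-\log\E_\nu e^g\bigr)=D_{\rm KL}(\mu\Vert\nu_g).
\]
Nonnegativity of relative entropy follows from Jensen applied to the convex function $x\log x$: $\int f\log f\,d\nu\ge(\int f\,d\nu)\log(\int f\,d\nu)=0$. Hence the supremum is at most $D_{\rm KL}(\mu\Vert\nu)$. For the reverse inequality, I take the truncations $g_M=\max(\min(\log f,M),-M)$. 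I let $M\to\infty$, using monotone convergence on the positive part and dominated convergence on the negative part; this is the one mildly technical point. If $\mu\not\ll\nu$, choose a set $E$ with $\nu(E)=0<\mu(E)$ and take $g=M\mathbf 1_E$; this makes the expression tend to $+\infty$.

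For (b), any bounded $g$ on the target space gives the bounded function $g\circ T$. Therefore $\E_{T_\#\mu}g-\log\E_{T_\#\nu}e^g$ is one of the competitors in the supremum (a) for $(\mu,\nu)$, and (b) follows.

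Item (c) follows because the density of a product is the product of the densities. The logarithm then splits into a sum, and each term integrates under its own marginal by Fubini. For (d), the density is $\mathbf 1_F/\nu(F)$, and integrating its logarithm against $\nu(\cdot\mid F)$ gives $\log(1/\nu(F))$.

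For (f), apply (b) with $T=\mathbf 1_E$; this reduces (f) to the two-point entropy $p\log\frac pb+(1-p)\log\frac{1-p}{1-b}$. Dropping the nonpositive term $(1-p)\log\frac1{1-b}$ still leaves a valid lower bound. The remainder is $p\log\frac1b-h(p)$, where $h(p)=-p\log p-(1-p)\log(1-p)$ is the binary entropy, and $h(p)\le\log 2$.

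For Pinsker (e), take $E=\{f\ge1\}$, which attains the total variation distance. By (b) applied to $\mathbf 1_E$, it suffices to prove the binary case $p\log\frac pb+(1-p)\log\frac{1-p}{1-b}\ge2(p-b)^2$. I will prove this by fixing $p$ and differentiating in $b$: the difference vanishes at $b=p$, and its derivative has the correct sign because $b(1-b)\le1/4$. This calculus step is the main (though routine) obstacle.
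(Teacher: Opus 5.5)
Your proof is correct, but it is not what the paper does. The paper gives no proof and cites Polyanskiy--Wu: Chapter~2 for the basic properties, and Theorems~4.6 and~7.10 for the variational formula and Pinsker.

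Your route is self-contained and is organized around the Donsker--Varadhan formula. Once (a) is established via the tilting identity $D_{\rm KL}(\mu\Vert\nu)-(\E_\mu g-\log\E_\nu e^g)=D_{\rm KL}(\mu\Vert\nu_g)$, data processing follows in one line by pulling back test functions, $g\mapsto g\circ T$. This avoids the chain rule and conditioning that a textbook proof of (b) typically uses. Both (e) and (f) then reduce, through (b) with $T=\mathbf 1_E$, to two-point computations, so the only analytic inputs are Jensen, monotone/dominated convergence, and one calculus inequality.

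The cost is that you must carry out the truncation limit in (a) yourself. Besides the convergence of $\E_\mu g_M$, you need $\log\E_\nu e^{g_M}\to0$. This follows from $e^{g_M}\le\max(f,e^{-M})\le f+e^{-M}$, which gives $\E_\nu e^{g_M}\le1+e^{-M}$.

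Three small points to fix:
\begin{enumerate}
\item In (f), the dropped term $(1-p)\log\frac1{1-b}$ is non\emph{negative}, not nonpositive. That is exactly why discarding it keeps a valid lower bound.
\item In (c), note the case where some $\mu_i\not\ll\nu_i$. Then a cylinder set witnesses that the product measures are not absolutely continuous, and both sides equal $+\infty$.
\item In (e), note that if $\mu\not\ll\nu$ then $D_{\rm KL}(\mu\Vert\nu)=+\infty$ and the inequality is trivial. Your choice $E=\{f\ge1\}$ only makes sense once $\mu\ll\nu$.
\end{enumerate}
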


See~\cite[Chapter~2]{PolyanskiyWu2025} for the basic
entropy properties, and~\cite[Theorems~4.6 and~7.10]{PolyanskiyWu2025}
for the variational formula and Pinsker's inequality.

\paragraph{Moments and linear changes of variables.}
Let $X\in\R^d$ have finite Gaussian relative entropy. Applying
Lemma~\ref{lem:entropy-facts}~(\ref{item:entropy-variational}) to bounded
truncations of $t\|x\|^2$ gives, for $0<t<1/2$,
\begin{equation}\label{eq:entropy-moment}
 t\E\|X\|^2\le\mathcal K(\mu(X))
                       -\frac d2\log(1-2t).
\end{equation}
In particular, $X$ has finite second moments. For an invertible
deterministic linear map $A$, change of variables gives
\begin{equation}\label{eq:linear-entropy-change}
 \mathcal K(\mu(AX))
 =\mathcal K(\mu(X))
      +\frac12\tr((A^\top A-I)\E XX^\top)-\log|\det A|.
\end{equation}

\begin{lemma}[Gaussian Brascamp--Lieb]
\label{lem:gaussian-entropic-bl}
Let $L_1,\ldots,L_N$ be nonzero subspaces of $\R^d$ and let
$c_1,\ldots,c_N\ge0$ satisfy
$\sum_s c_sP_{L_s}\preceq I_d$. For every probability measure
$\mu$ with finite $\mathcal K(\mu)$,
\[
 \sum_s c_sD_{\rm KL}((P_{L_s})_\#\mu\Vert\gamma_{L_s})
                    \le\mathcal K(\mu).
\]
\end{lemma}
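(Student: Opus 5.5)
The plan is to reduce the entropic inequality to the classical functional form of the geometric Brascamp--Lieb inequality and then dualize via the variational formula of Lemma~\ref{lem:entropy-facts}~(\ref{item:entropy-variational}).

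\emph{Step 1: reduce to equality.} Set $M=I_d-\sum_sc_sP_{L_s}\succeq0$. Diagonalize $M=\sum_k\lambda_ke_ke_k^\top$ with $\lambda_k\ge0$. Since each $P_{L_s}\succeq0$, also $M\preceq I_d$, so $\lambda_k\le1$. Add the lines $\R e_k$ with coefficients $\lambda_k$ to the family, whenever $\lambda_k>0$. The new family satisfies $\sum c_sP_{L_s}=I_d$ exactly. The added terms $\lambda_kD_{\rm KL}((P_{\R e_k})_\#\mu\Vert\gamma_{\R e_k})$ are nonnegative by part~(\ref{item:entropy-variational}). Hence the inequality for the enlarged family implies the stated one, and we may assume $\sum_sc_sP_{L_s}=I_d$. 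Taking traces gives $\sum_sc_s\dim L_s=d$. Applying the identity to $x$ and pairing with $x$ gives $\|x\|^2=\sum_sc_s\|P_{L_s}x\|^2$.

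\emph{Step 2: the functional Gaussian inequality.} Under equality, I claim that for nonnegative measurable $f_s$ on $L_s$,
\[
 \int\prod_sf_s(P_{L_s}x)^{c_s}\,d\gamma_d(x)\le\prod_s\Bigl(\int f_s\,d\gamma_{L_s}\Bigr)^{c_s}.
\]
Write $h_s(y)=(2\pi)^{-\dim L_s/2}e^{-\|y\|^2/2}f_s(y)$. By the two identities from Step~1,
\[
 \phi_d(x)\prod_sf_s(P_{L_s}x)^{c_s}=\prod_sh_s(P_{L_s}x)^{c_s},
\]
because both the Gaussian exponent and the normalizing constant split exactly. The claim then becomes Ball's geometric Brascamp--Lieb inequality for Lebesgue measure, $\int\prod_sh_s(P_{L_s}x)^{c_s}dx\le\prod_s(\int_{L_s}h_s)^{c_s}$, valid whenever $\sum c_sP_{L_s}=I_d$. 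I would cite Barthe's transport proof (or the heat-flow monotonicity proof of Carlen--Lieb--Loss / Bennett--Carbery--Christ--Tao) rather than reprove it. This is the only genuinely deep input and the main obstacle; everything else is bookkeeping.

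\emph{Step 3: dualize.} Let $\nu_s=(P_{L_s})_\#\mu$. We may assume each $D_{\rm KL}(\nu_s\Vert\gamma_{L_s})$ is finite; by data processing (part~(\ref{item:data-processing})) each is at most $\mathcal K(\mu)<\infty$. For bounded measurable $g_s$ on $L_s$, put $g(x)=\sum_sc_sg_s(P_{L_s}x)$, which is bounded. The variational formula gives
\[
 \mathcal K(\mu)\ge\E_\mu g-\log\E_{\gamma_d}e^{g}
 \ge\sum_sc_s\bigl(\E_{\nu_s}g_s-\log\E_{\gamma_{L_s}}e^{g_s}\bigr),
\]
where the second inequality is Step~2 applied with $f_s=e^{g_s}$. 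Now take the supremum over each bounded $g_s$ separately. By the variational formula again, each bracket's supremum is $D_{\rm KL}(\nu_s\Vert\gamma_{L_s})$, which yields the lemma.

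Using the supremum over bounded functions avoids any integrability issues with the unbounded choice $g_s=\log(d\nu_s/d\gamma_{L_s})$. Since $c_s\ge0$, taking suprema termwise is legitimate.
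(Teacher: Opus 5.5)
Your proof is correct. It takes a different route from the paper only in that the paper gives no proof: it cites the entropic form directly from Carlen--Lieb, and points to Carlen--Cordero-Erausquin for the general equivalence between entropy subadditivity and Brascamp--Lieb inequalities.

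What you wrote is the Brascamp--Lieb $\Rightarrow$ entropy direction of that equivalence, made explicit. The variational formula converts the Gaussian functional inequality $\int\prod_s f_s(P_{L_s}x)^{c_s}\,d\gamma_d\le\prod_s\bigl(\int f_s\,d\gamma_{L_s}\bigr)^{c_s}$ into the entropic statement. When $\sum_s c_sP_{L_s}=I_d$, the Gaussian weights split exactly into Ball's Lebesgue form.

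Your route has two features worth noting against a bare citation:
\begin{itemize}
\item The padding by eigenlines of $I_d-\sum_s c_sP_{L_s}$ matters. The geometric inequality is stated under an exact decomposition of the identity. The paper, however, applies the lemma in Lemma~\ref{lem:average-projection-entropy} with coefficients $1/(Na_N)$, which give only $\preceq I_d$. Your reduction handles this cleanly because the added terms are nonnegative.
\item Using bounded test functions and taking termwise suprema avoids integrability issues with $\log(d\nu_s/d\gamma_{L_s})$. Your data-processing remark also rules out any $0\cdot\infty$ ambiguity when some $c_s=0$.
\end{itemize}

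The cost is that the one deep ingredient is still imported: the geometric Brascamp--Lieb inequality for higher-rank projections (Barthe, or Bennett--Carbery--Christ--Tao). So your argument is self-contained only modulo that theorem. That is the same depth as the paper's citation, invoked at the functional rather than the entropic level.
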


See Carlen--Lieb~\cite[Theorem~4.2]{CarlenLieb2008} for this entropy form
of Gaussian Brascamp--Lieb, and
Carlen--Cordero-Erausquin~\cite{CarlenCorderoErausquin2009} for the
general relation between entropy subadditivity and Brascamp--Lieb
inequalities.

\subsection{Logconcavity}

A nonnegative function $f$ is logconcave if
\[
 f((1-t)x+ty)\ge f(x)^{1-t}f(y)^t\qquad(0<t<1).
\]
A probability distribution is logconcave if it has a logconcave density. 

\begin{theorem}[Dinghas--Leindler--Pr\'ekopa {\cite{Din57,Lei72,Prekopa1973,Pr73b}}]
\label{thm:prekopa-marginal}
Every marginal of an integrable logconcave function is logconcave.
In particular, if a real random variable $Z$ has a logconcave density,
then both $t\mapsto\Prob(Z\le t)$ and $t\mapsto\Prob(Z\ge t)$
are logconcave functions of $t$.
\end{theorem}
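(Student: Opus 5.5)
The plan is to derive the statement from the Prékopa--Leindler inequality, so the work splits into three steps: prove Prékopa--Leindler in dimension one, extend it to all dimensions by induction, and then read off the marginal statement and the tail-function consequence.

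\textbf{Step 1: Prékopa--Leindler on $\R$.} The claim is: if $h,h_0,h_1\ge0$ are integrable on $\R$ and
\[
h((1-t)y_0+ty_1)\ge h_0(y_0)^{1-t}h_1(y_1)^t \quad\text{for all } y_0,y_1,
\]
then $\int h\ge(\int h_0)^{1-t}(\int h_1)^t$. If either $\int h_0$ or $\int h_1$ vanishes there is nothing to prove. Otherwise, replacing $h_0,h_1,h$ by $\lambda_0h_0$, $\lambda_1h_1$, $\lambda_0^{1-t}\lambda_1^th$ preserves the hypothesis and scales both sides equally. We may therefore assume $\operatorname{ess\,sup}h_0=\operatorname{ess\,sup}h_1=1$, first truncating to bounded functions and passing to the limit by monotone convergence.

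For $0<s<1$ the level sets $A_s=\{h_0\ge s\}$ and $B_s=\{h_1\ge s\}$ are nonempty up to null sets. The hypothesis gives $\{h\ge s\}\supseteq(1-t)A_s+tB_s$. The one-dimensional Brunn--Minkowski inequality $|A+B|\ge|A|+|B|$ is proved for compact sets by translating so that $\max A=0=\min B$, which makes $A\cup B\subseteq A+B$ with overlap $\{0\}$; inner regularity then handles general measurable sets. It gives $|\{h\ge s\}|\ge(1-t)|A_s|+t|B_s|$. Integrating over $s\in(0,1)$ with the layer-cake formula yields
\[
\int h\ge(1-t)\int h_0+t\int h_1\ge\Bigl(\int h_0\Bigr)^{1-t}\Bigl(\int h_1\Bigr)^t
\]
by AM--GM.

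\textbf{Step 2: induction on dimension.} Write $y=(y',y_d)$ and set $H(y_d)=\int h(y',y_d)\,dy'$, with $H_0,H_1$ defined likewise. Fix $y_d=(1-t)a+tb$. The sections $h(\cdot,y_d)$, $h_0(\cdot,a)$, $h_1(\cdot,b)$ satisfy the $(d-1)$-dimensional hypothesis, so the induction hypothesis gives $H(y_d)\ge H_0(a)^{1-t}H_1(b)^t$. Fubini and Step 1 then close the induction. Measurability of the sections and of $H$ is supplied by Tonelli.

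\textbf{Step 3: marginals and tails.} For a logconcave integrable $f(x,y)$ with marginal $g(x)=\int f(x,y)\,dy$, fix $x_0,x_1$ and $x=(1-t)x_0+tx_1$. Apply Prékopa--Leindler to $h=f(x,\cdot)$, $h_0=f(x_0,\cdot)$, $h_1=f(x_1,\cdot)$; logconcavity of $f$ at the points $(x_0,y_0)$ and $(x_1,y_1)$ is exactly the hypothesis. This gives $g(x)\ge g(x_0)^{1-t}g(x_1)^t$.

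For the ``in particular'' statement, let $Z$ have logconcave density $f$. The function $(z,s)\mapsto f(z)\mathbf 1\{z\le s\}$ is logconcave on $\R^2$, being a product of logconcave functions: the indicator of the convex set $\{z\le s\}$ is logconcave. Its marginal in $s$ is $\Prob(Z\le s)$, which is therefore logconcave by the marginal statement. Using $\{z\ge s\}$ in the same way gives logconcavity of $\Prob(Z\ge s)$.

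I expect the only delicate point to be the technical bookkeeping in Step 1: the normalization by essential suprema, the degenerate cases where an integral or supremum is $0$ or $\infty$, and the measurability of the Minkowski sums, which need not be measurable and so require working with inner measure or compact subsets. The geometric content itself is elementary.
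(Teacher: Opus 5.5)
The paper gives no proof of this statement. It is quoted as a classical result (Dinghas, Leindler, Pr\'ekopa) and used as a black box, for example for the concavity of $\log S$ in Lemma~\ref{lem:minority-interval} and for logconcavity of projected rows.

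Your proposal supplies the standard argument behind those citations, and it is correct. It runs through one-dimensional Pr\'ekopa--Leindler via one-dimensional Brunn--Minkowski, the layer-cake formula and AM--GM, then tensorization by induction on dimension, then a fiberwise application to marginals. Compared with the citation, it makes the result self-contained. It also shows that the ``in particular'' clause needs only your Step~1, not the induction.

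Two small repairs are worth making.

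First, $(z,s)\mapsto f(z)\mathbf 1\{z\le s\}$ is not integrable on $\R^2$, since its integral is $\int\Prob(Z\le s)\,ds=\infty$. So the marginal statement, as phrased for integrable functions, does not literally apply to it. This is harmless: your Step~3 argument only uses the one-dimensional fibers $f\mathbf 1_{(-\infty,s]}$, each of integral at most $1$. Invoke that argument directly rather than the statement.

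Second, in Step~2, Tonelli gives measurable sections only for almost every $y_d$ when $h$ is merely Lebesgue measurable. Step~1, however, needs the pointwise hypothesis for every pair $(a,b)$. To fix this, replace $h$ by a Borel majorant and $h_0,h_1$ by Borel minorants, each equal to the original almost everywhere. This preserves both the hypothesis and the integrals, and makes every section measurable. In your actual application the sections are logconcave, hence measurable anyway.
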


\begin{lemma}
\label{lem:convex-gaussian-covariance}
If $X\sim N(0,I_n/n)$ restricted to a convex set of positive
Gaussian measure, then $X$ is full-dimensional and logconcave,
and $\operatorname{Cov}(X)\preceq I_n/n$.
\end{lemma}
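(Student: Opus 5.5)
The lemma has three claims about $X\sim N(0,I_n/n)$ restricted to a convex set $F$ with $\bar\gamma_n(F)>0$: full-dimensionality, logconcavity, and $\operatorname{Cov}(X)\preceq I_n/n$. I will treat them in that order.

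\textbf{Full-dimensionality.} Since $\bar\gamma_n$ is absolutely continuous with respect to Lebesgue measure, $\bar\gamma_n(F)>0$ forces $F$ to have positive Lebesgue measure. A convex set of positive volume has nonempty interior, so it is not contained in any hyperplane. Consequently $X$ has a density and is not supported on an affine subspace.

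\textbf{Logconcavity.} The density of $X$ is proportional to $e^{-n\|x\|^2/2}\mathbf 1_F(x)$. The Gaussian factor is logconcave, and the indicator of a convex set is logconcave. A product of logconcave functions is logconcave, so $X$ is logconcave.

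\textbf{Covariance bound.} This is the substantive step. I would invoke the Brascamp--Lieb variance inequality: if $X$ has density $e^{-V}$ with $V$ convex and $\nabla^2 V\succeq nI_n$, then $\Var(f(X))\le \frac1n\E\|\nabla f(X)\|^2$.

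The potential here is $V=\frac n2\|x\|^2+\infty\cdot\mathbf 1_{F^c}$, which is not smooth, so I will approximate. Replace the indicator by a smooth convex penalty, for example $\exp(-k\,\dist(x,F)^2)$ suitably mollified. Each approximating density is $e^{-V_k}$ with $V_k$ convex and $\nabla^2V_k\succeq nI$, so the inequality applies to it. As $k\to\infty$ these densities converge to the restricted Gaussian, and their second moments converge by dominated convergence, since all of them are dominated by a fixed Gaussian envelope.

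Now fix a unit vector $\theta$ and apply the inequality with $f(x)=\langle\theta,x\rangle$. This gives $\theta^\top\operatorname{Cov}(X)\theta\le 1/n$. Since $\theta$ was arbitrary, $\operatorname{Cov}(X)\preceq I_n/n$.

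An alternative route avoids the approximation. Condition on the projection of $X$ onto $\theta^\perp$. The fiber along $\theta$ is a one-dimensional Gaussian with variance $1/n$ restricted to an interval, and such a truncation has variance at most $1/n$. This yields a bound of the form $\E\Var(\langle\theta,X\rangle\mid\cdot)\le 1/n$. However, the law of total variance also contributes a variance-of-conditional-mean term, and this decomposition gives no control over it. So the clean route is Brascamp--Lieb, equivalently Caffarelli's contraction theorem, which makes $X$ a $1$-Lipschitz image of $N(0,I_n/n)$.

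The main obstacle is exactly this covariance bound; the approximation argument that makes Brascamp--Lieb applicable to a restricted Gaussian is standard but should be stated carefully.
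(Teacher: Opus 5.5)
Your proposal is correct and follows the paper's approach: the paper treats full-dimensionality and logconcavity as immediate and gets $\operatorname{Cov}(X)\preceq I_n/n$ from a one-line appeal to the Brascamp--Lieb inequality, which is what your choice $f(x)=\langle\theta,x\rangle$ does. Your smoothing of the convex-set indicator only makes explicit how to apply that inequality to the non-smooth potential (the original Brascamp--Lieb paper also treats logconcave perturbations of a Gaussian directly), and you were right to drop the conditioning route, since it leaves the variance of the conditional mean uncontrolled.
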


The covariance bound follows from the Brascamp--Lieb
inequality~\cite{BL76}.

\begin{lemma}
\label{lem:raw-fourth}
For every logconcave random variable $Y \in \R$,
$(\E Y^4)^{1/2}\le4\E Y^2$.
\end{lemma}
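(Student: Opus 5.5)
The plan is to prove the moment comparison by reducing to one-sided tails and using Borell's $\Gamma$-logconcavity of moments. The constant $4$ is exactly $(p/q)^2$ with $p=4$, $q=2$, which is the classical reverse H\"older constant $(\E|Y|^p)^{1/p}\le \frac pq(\E|Y|^q)^{1/q}$ for logconcave variables. So the target is $(\E Y^4)^{1/4}\le 2(\E Y^2)^{1/2}$, and crude Borell-lemma tail integration will not reach the constant. Note that $Y$ need not be centered, so I cannot use symmetry.

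\emph{Step 1 (one-sided tails).} Write $\E Y^4=\E Y_+^4+\E Y_-^4$ and $\E Y^2=\E Y_+^2+\E Y_-^2$. By Theorem~\ref{thm:prekopa-marginal}, $G_+(t)=\Prob(Y>t)$ and $G_-(t)=\Prob(Y<-t)$ are logconcave, nonincreasing functions of $t\in\R$. For $p>0$ we have
\[
\E Y_+^p=p\int_0^\infty t^{p-1}G_+(t)\,dt .
\]

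\emph{Step 2 ($\Gamma$-logconcavity).} Borell's lemma says that for a logconcave nonincreasing $G$ on $[0,\infty)$, the map $p\mapsto \frac{1}{\Gamma(p)}\int_0^\infty t^{p-1}G(t)\,dt$ is logconcave on $p>0$, with limit $G(0)$ as $p\to0$. Comparing exponents $0,2,4$, this gives
\[
\frac{\E Y_+^4}{24}\le \frac{1}{G_+(0)}\Bigl(\frac{\E Y_+^2}{2}\Bigr)^2,
\]
and similarly for $Y_-$.

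\emph{Step 3 (removing the $1/G_\pm(0)$ factor).} This is the main obstacle. If $\Prob(Y>0)$ is tiny, the bound from Step 2 is useless as stated. To fix this, I would not anchor the $\Gamma$-argument at $0$. Instead I would anchor it at a point $s$ where $G_+(s)$ is a constant, say $s$ a median or the mode, and apply logconcavity to the shifted tail $t\mapsto G_+(s+t)$. I would then control $\E Y_+^4$ by expanding $(s+W)^4$ with $W=(Y-s)_+$. When $\Prob(Y>0)$ is small, the positive part is dominated by $Y_-$ through logconcavity of $G_+$ on $[-|s|,0]$, and I would use this to bound $\E Y_+^4$ in terms of $(\E Y^2)^2$.

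If bookkeeping with explicit constants becomes unwieldy, I would instead invoke the known reverse H\"older inequality for general, not necessarily symmetric, logconcave $Y$ with constant $p/q$. It is proved by the same $\Gamma$-logconcavity argument applied to $|Y|$ after a one-dimensional localization, in which the extremal case is a one-sided exponential. I would then just specialize to $p=4$, $q=2$. As a sanity check, the one-sided exponential gives ratio $\sqrt{24}/2\approx2.45<4$, so the slack needed in Step 3 is available.
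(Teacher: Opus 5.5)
You have a genuine gap: the non-symmetric case, which is the whole content of the lemma, is never actually proved.

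\textbf{What Steps 1 and 2 give.} These steps are correct. Borell's $\Gamma$-logconcavity applied to the one-sided tails gives $\E Y_\pm^4\le 6(\E Y_\pm^2)^2/\Prob(\pm Y>0)$. This already proves the lemma when both $\Prob(Y>0)$ and $\Prob(Y<0)$ are at least $3/8$, since then $\E Y^4\le16[(\E Y_+^2)^2+(\E Y_-^2)^2]\le16(\E Y^2)^2$.

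\textbf{Step 3 is not an argument.} The remaining lopsided case is exactly what Step 3 defers. There you name a strategy (anchor at a median, expand $(s+W)^4$, compare with $Y_-$) but give no quantitative estimate. The constant is also less forgiving than your sanity check suggests, because you used the wrong extremizer. For $W=E-1+t$ with $E$ standard exponential, $\E W^2=1+t^2$ and $\E W^4=t^4+6t^2+8t+9$. The centered exponential ($t=0$) already gives $(\E W^4)^{1/2}/\E W^2=3$. The supremum over $t$, attained at $t=2^{1/3}-1$, is about $3.17$, not $\sqrt6\approx2.45$.

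\textbf{The fallback does not close the gap.} You cite a general reverse H\"older inequality with constant $p/q$ for non-symmetric logconcave $Y$, without a reference. The proof you attribute to it does not work. Borell's argument needs a nonnegative variable whose tail is logconcave on $[0,\infty)$. For non-symmetric $Y$, $\Prob(|Y|>t)=G_+(t)+G_-(t)$ is a sum of logconcave functions and need not be logconcave. For example, if $Y$ is uniform on $[-1,2]$, the slope of $\log\Prob(|Y|>t)$ jumps from $-2$ to $-1$ at $t=1$. So ``the same $\Gamma$-argument applied to $|Y|$'' fails in exactly the case Step 3 was supposed to handle. The extremal case is also not the one-sided exponential.

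\textbf{What is needed.} The missing ingredient is a sharp moment comparison valid for arbitrary (non-symmetric, non-centered) real logconcave variables. The paper uses Murawski's theorem, which reduces $\|Y\|_4/\|Y\|_2$ to the shifted exponentials $E-1+t$. It then checks $t^4+6t^2+8t+9\le16(1+t^2)^2$ for all real $t$, using $8t\le4t^2+4$. If you rely on such a result, you must verify the constant over the whole shift family, not only at one member of it.
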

\begin{proof}
Murawski's sharp moment comparison~\cite[Theorem~1.3]{Murawski2026}
reduces the ratio $\|Y\|_4/\|Y\|_2$ to that of a shifted exponential.
Let $E$ have the exponential distribution of mean one and set $W=E-1+t$.
Then
\[
 \E W^2=1+t^2,\qquad
 \E W^4=t^4+6t^2+8t+9\le16(1+t^2)^2,
\]
where the inequality follows from $8t\le4t^2+4$.
Thus $\|W\|_4/\|W\|_2\le2$, giving the stated bound.
\end{proof}

\begin{lemma}
\label{lem:minority-interval}
Let $Z$ have a logconcave density on $\R$.
\begin{enumerate}
\renewcommand{\theenumi}{\alph{enumi}}
\renewcommand{\labelenumi}{(\theenumi)}
\item\label{item:minority-interval}
If $\sigma^2=\Var(Z)>0$ and
$p=\min\{\Prob(Z<0),\Prob(Z>0)\}$, then
\[
 \Prob(|Z|\le s)\ge\frac p{16}\min\{s/\sigma,1\}
                   \quad(s>0).
\]
\item\label{item:interval-tail}
If $b>0$ and $\Prob(|Z|\le b)\ge1/2$, then
\[
 \Prob(|Z|>t)\le C e^{-ct/b}\qquad(t\ge0),
\]
where $C,c>0$ are universal constants.
\end{enumerate}
\end{lemma}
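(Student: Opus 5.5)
For part (a), I will reduce to an interval lemma. By symmetry (replace $Z$ by $-Z$) assume $p=\Prob(Z<0)\le\Prob(Z>0)$. Let $f$ be the density of $Z$ and $m=\E Z$. The first step is a lower bound on $f$ near the origin in terms of $p$ and $\sigma$. Both $F(t)=\Prob(Z\le t)$ and $G(t)=\Prob(Z\ge t)$ are logconcave by Theorem~\ref{thm:prekopa-marginal}. For logconcave densities the standard estimate $\max f\le 1/\sigma$ holds, and so does $f(\text{median}) \ge 1/(\sqrt{12}\sigma)$. I would rather use the quantile form $f(t)\ge c\,\min\{F(t),1-F(t)\}/\sigma$ for all $t$. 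It follows from logconcavity of $F$ (or $G$) together with the fact that logconcave tails decay at scale $O(\sigma)$. Concretely, $F$ logconcave gives $F(t-s)\ge F(t)e^{-s f(t)/F(t)}$. Taking $s\asymp\sigma$ and using Chebyshev to say that mass $\ge F(t)/2$ cannot sit more than $O(\sigma/\sqrt{F(t)})$ below $t$ is too weak, so I will instead use the exponential tail bound for logconcave variables, $\Prob(|Z-m|>r\sigma)\le e^{1-r}$. Plugging this in gives $f(t)\ge c F(t)/\sigma$ whenever $F(t)\le 1/2$, and symmetrically with $G$.

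Then I integrate. On $[-s,0]$ and $[0,s]$, the minority mass $\min\{F(t),1-F(t)\}$ is at least of order $p$ near $0$. The reason is that $F(0)=p$, and if $p\le 1/2$ then on the side where $F$ is at most $1/2$ the function $\min\{F,1-F\}$ changes slowly, since $f\le 1/\sigma$. If $s\le\sigma$, then for $t\in[0,s/2]$ we have $F(t)\ge p$ and $1-F(t)\ge 1-p-s/\sigma\cdot\frac12\ge$ something of order $p$ or $1/4$. Hence $f(t)\ge cp/\sigma$, and integrating gives $\Prob(|Z|\le s)\ge cps/\sigma$. If $s>\sigma$, I apply the case $s=\sigma$. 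Chasing constants (possibly sacrificing sharpness) should give $1/16$. The main obstacle is getting the clean constant $1/16$. I would first try the cleaner route via Lemma~\ref{lem:raw-fourth}: Paley--Zygmund on $(Z-m)^2$ controls how spread out $Z$ is. Failing that, I would accept a universal constant and note that the proof only needs some $c>0$.

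Part (b) is a standard consequence of logconcavity. Let $I=[-b,b]$ with $\Prob(Z\in I)\ge1/2$. For $t\ge b$, logconcavity of $G$ gives, for $t=b+k\cdot 2b$, the ratio $G(b+2b)/G(b)\le G(b)/G(-b)$. Here $G(-b)\ge1/2$ if $\Prob(Z< -b)\le 1/2$ (which holds). Iterating along the arithmetic progression with step $2b$ gives geometric decay $G(b+2bk)\le \theta^k$. When $\Prob(Z>b)$ is at most $\frac12 G(-b)$, we can take $\theta\le$ some absolute constant less than $1$. When it is not, the upper tail is at most $1/2$ and one extra step of size $O(b)$ recovers a constant ratio. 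The same argument applies to $F$ for the left tail. Summing the two tails gives $Ce^{-ct/b}$.
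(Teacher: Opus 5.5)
Your part (a) depends on an inequality that points the wrong way. You write that logconcavity of $F$ gives $F(t-s)\ge F(t)e^{-sf(t)/F(t)}$. But the tangent line of the concave function $\log F$ lies \emph{above} its graph, so the correct consequence is $F(t-s)\le F(t)e^{-sf(t)/F(t)}$. For $Z\sim\mathrm{Exp}(1)$ and $t=s=1$, your version would assert $0=F(0)\ge(1-e^{-1})e^{-1/(e-1)}>0$. With the correct direction, a small ratio $f(t)/F(t)$ says nothing about mass far to the left of $t$. Comparing with the exponential tail bound then gives no contradiction, so your derivation of $f(t)\ge cF(t)/\sigma$ fails.

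The estimate itself is true (it is the one-dimensional logconcave isoperimetric inequality), and it follows in one line from facts you already quoted. Since $(\log F)'=f/F$ is nonincreasing, for $t$ at most a median $m$ you get $f(t)/F(t)\ge f(m)/F(m)=2f(m)\ge1/(\sqrt3\,\sigma)$. The same argument with $G$ (for which $f/G$ is nondecreasing) handles $t\ge m$. Hence $f\ge\min\{F,1-F\}/(\sqrt3\,\sigma)$ everywhere.

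Your integration step also needs a repair. On $[0,s/2]$ with $s=\sigma$ and $p=\tfrac12$, your bound $1-F(t)\ge1-p-t/\sigma$ degenerates to $0$, so it does not give "order $p$ or $1/4$". Integrate over $[0,s/4]$ instead, where $F\ge p$ and $1-F\ge\tfrac14\ge\tfrac p2$. This gives $\Prob(0\le Z\le s/4)\ge ps/(8\sqrt3\,\sigma)\ge ps/(16\sigma)$ for $s\le\sigma$, and monotonicity handles $s>\sigma$. That recovers the constant $1/16$ the lemma actually asserts, which your proposal left open.

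For comparison, the paper needs a density bound only at the single point $0$. It uses $f(0)\ge p\sup f\ge p/(8\sigma)$ (from Lov\'asz--Vempala) and reflects so that $S(0)=\Prob(Z\ge0)=p$ is the minority side. It then applies the tangent bound in the correct direction, $S(s)\le pe^{-sf(0)/p}$, to get $\Prob(0\le Z<s)\ge p(1-e^{-s/(8\sigma)})$. Your repaired route instead works on the majority side with a pointwise bound along an interval; both work.

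Part (b) is correct and is essentially the paper's argument: compare slopes of $\log G$ on consecutive intervals of length $2b$. Your second case never arises. Since $G(-b)-G(b)=\Prob(-b\le Z<b)\ge\tfrac12$ and $G(-b)\le1$, you always have $G(b)\le\tfrac12G(-b)$.
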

\begin{proof}
Write $f$ for the density and $S(t)=\Prob(Z\ge t)$.
By Theorem~\ref{thm:prekopa-marginal}, $\log S$ is concave where $S>0$.

For (\ref{item:minority-interval}), the case $p=0$ is trivial; otherwise,
reflect $Z$ if necessary so that $S(0)=p$. By unimodality, one of the
half-lines at zero has mass at least $p$ and density at most $f(0)$.
Lov\'asz--Vempala~\cite[Lemma~5.6(a)]{LV07} therefore gives
$f(0)\ge p\sup f$. Their Lemma~5.5(b), applied to $(Z-\E Z)/\sigma$,
gives $\sup f\ge1/(8\sigma)$. Since $\log S$ is concave,
$S(0)=p$, and $S'(0)=-f(0)$, we have
\[
 \log S(s)\le\log S(0)+s\frac{S'(0)}{S(0)}
             =\log p-\frac{sf(0)}p
\]
when $S(s)>0$. Thus $S(s)\le p\exp(-sf(0)/p)$, also when $S(s)=0$, and
\[
 \Prob(|Z|\le s)\ge p-S(s)
 \ge p(1-e^{-s/(8\sigma)})
 \ge\frac p{16}\min\{s/\sigma,1\}.
\]

For (\ref{item:interval-tail}), $S(-b)-S(b)\ge1/2$ and $S(-b)\le1$,
so $S(b)\le S(-b)/2$. For a concave function, slopes on successive
intervals are nonincreasing. Hence, for $t>b$ with $S(t)>0$,
\[
 \frac{\log S(t)-\log S(b)}{t-b}
 \le\frac{\log S(b)-\log S(-b)}{2b}
 \le-\frac{\log2}{2b}.
\]
It follows that
\[
 S(t)\le S(b)\,2^{-(t-b)/(2b)}\qquad(t\ge b).
\]
The bound is immediate if $S(t)=0$. Applying it also to $-Z$ and adding gives
$\Prob(|Z|>t)\le2^{-(t+b)/(2b)}$ for $t\ge b$.
Together with the trivial bound for $t<b$, this proves the claim with
$C=2$ and $c=(\log2)/2$.
\end{proof}

\subsection{Gaussian estimates}
\label{sec:gaussian-estimates}

\begin{lemma}
\label{lem:gaussian-estimates}
Let $Z\sim N(0,1)$, $G\sim\gamma_d$, and let $R$ be an
$n$-by-$n$ matrix with independent $N(0,1/n)$ entries.
\begin{enumerate}
\renewcommand{\theenumi}{\alph{enumi}}
\renewcommand{\labelenumi}{(\theenumi)}
\item\label{item:gaussian-scalar-tail}
For $t\ge0$,
\[
 \Prob(|Z|\ge t)\le2e^{-t^2/2}.
\]
\item\label{item:gaussian-norm-tail}
For $h>0$,
\[
 \Prob(\|G\|^2>d(1+h))
   \le e^{-d(h-\log(1+h))/2}.
\]
\item\label{item:gaussian-projection}
For any fixed $r$-dimensional subspace $L\subseteq\R^d$,
$\|P_LG\|^2$ has the $\chi_r^2$ distribution. The same holds
conditionally on $L$ if $L$ is random and independent of $G$.
In particular,
\[
                    \Prob(\chi_2^2\le u)=1-e^{-u/2}\quad(u\ge0).
\]
\item\label{item:gaussian-operator-tail}
For a universal constant $c>0$,
\[
 \Prob(\|R\|_{\rm op}>6)\le2e^{-cn}.
\]
\item\label{item:gaussian-inverse-tail}
For $x>0$,
\[
 \Prob(\|R^{-1}\|_{\rm op}\ge x)\le\frac{2.35n}{x}.
\]
\end{enumerate}
\end{lemma}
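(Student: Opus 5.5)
These are standard estimates, and I will take each part from its classical source, adding only short derivations where a specific constant has to be checked.

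For (\ref{item:gaussian-scalar-tail}), I apply the Chernoff bound $\Prob(Z\ge t)\le e^{-\lambda t}\E e^{\lambda Z}=e^{-\lambda t+\lambda^2/2}$ with $\lambda=t$. This gives $\Prob(Z\ge t)\le e^{-t^2/2}$, and symmetry then gives the factor $2$.

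For (\ref{item:gaussian-norm-tail}), I use the same Chernoff method on $\|G\|^2\sim\chi_d^2$. For $0<\lambda<1/2$,
\[
 \Prob(\|G\|^2>d(1+h))\le (1-2\lambda)^{-d/2}e^{-\lambda d(1+h)}.
\]
The optimal choice is $\lambda=\frac{h}{2(1+h)}$. With it, $1-2\lambda=1/(1+h)$, and the bound becomes $\exp(-\frac d2(h-\log(1+h)))$, as claimed.

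For (\ref{item:gaussian-projection}), rotation invariance of $\gamma_d$ lets me choose an orthonormal basis whose first $r$ vectors span $L$. Then $\|P_LG\|^2$ is the sum of $r$ independent squared standard normals, which is $\chi^2_r$. If $L$ is random and independent of $G$, I apply the same argument conditionally on $L$. The formula for $\chi^2_2$ comes from integrating its density $\frac12e^{-u/2}$, i.e.\ $\chi_2^2$ is exponential with mean $2$.

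For (\ref{item:gaussian-operator-tail}), I write $R=G/\sqrt n$ with $G$ having standard entries. Gordon's inequality gives $\E\|G\|_{\rm op}\le 2\sqrt n$. Since $\|\cdot\|_{\rm op}$ is $1$-Lipschitz in Frobenius norm, Gaussian concentration gives
\[
 \Prob(\|G\|_{\rm op}>2\sqrt n+s)\le e^{-s^2/2}.
\]
Taking $s=4\sqrt n$ yields $\Prob(\|R\|_{\rm op}>6)\le e^{-8n}$, which is stronger than the stated bound. An $\varepsilon$-net argument would also work, but with worse constants, which is harmless here.

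For (\ref{item:gaussian-inverse-tail}), I expect the main obstacle: the constant $2.35$ is specific. I would cite Edelman's and Szarek's exact tail results for the smallest singular value of a square Gaussian matrix, $\Prob(s_{\min}(G)\le \varepsilon/\sqrt n)\le \varepsilon$ (Edelman's bound). Since $\|R^{-1}\|_{\rm op}=\sqrt n/s_{\min}(G)$, taking $\varepsilon=n/x$ gives
\[
 \Prob(\|R^{-1}\|_{\rm op}\ge x)=\Prob\bigl(s_{\min}(G)\le n/x\bigr)\le C'n/x
\]
for an explicit constant $C'$ such as $2.35$, which is Szarek's constant. If only $O(n/x)$ is needed downstream, citing Rudelson--Vershynin's $\Prob(s_{\min}(G)\le\varepsilon n^{-1/2})\le C\varepsilon+e^{-cn}$ would also suffice after adjusting constants.
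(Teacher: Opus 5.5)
Your proof is correct. Parts (a)--(c) are exactly the paper's argument: exponential Markov with $\E e^{sZ}=e^{s^2/2}$ and $\E e^{s\|G\|^2}=(1-2s)^{-d/2}$, and rotation invariance.

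The genuine difference is in (d). The paper takes a $1/4$-net of the sphere of size at most $9^n$ and uses $\|R\|_{\rm op}\le 2\max|u^\top Rv|$ over pairs in the net. A union bound over $81^n$ scalars distributed as $N(0,1/n)$ then gives $c=9/2-2\log 9$. You instead combine Gordon's comparison bound $\E\|G\|_{\rm op}\le 2\sqrt n$ with Gaussian concentration for the $1$-Lipschitz operator norm, and obtain $e^{-8n}$. Your route gives a much better constant but rests on two nontrivial theorems. The net argument is elementary and self-contained. Downstream only a bound of the form $e^{-cn}$ is used, so either suffices.

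For (e), both arguments are citations, but a few corrections are needed.
\begin{itemize}
\item The constant $2.35$ is not Szarek's. It comes from Sankar--Spielman--Teng's Theorem~3.3, $\Prob(\|A^{-1}\|_{\rm op}\ge x)\le 2.35\sqrt n/(x\sigma)$, which the paper applies with $\sigma=n^{-1/2}$. Edelman's centered bound with constant $1$ is stronger, so citing it is fine.
\item Your displayed chain has a slip. Since $\|R^{-1}\|_{\rm op}=\sqrt n/s_{\min}(G)$, the event $\{\|R^{-1}\|_{\rm op}\ge x\}$ equals $\{s_{\min}(G)\le \sqrt n/x\}$, not $\{s_{\min}(G)\le n/x\}$. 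With $\varepsilon=n/x$ this is exactly $\{s_{\min}(G)\le\varepsilon/\sqrt n\}$, so Edelman gives $n/x$ as you intended. As written, however, the equality is false. Applying Edelman to $\{s_{\min}(G)\le n/x\}$ would only yield $n^{3/2}/x$.
\item The Rudelson--Vershynin fallback gives $Cn/x+e^{-cn}$. This is not $O(n/x)$ uniformly in $x$, so it does not prove (e) as stated. It is still enough for the single use at $x=C_0 n$ in Lemma~\ref{lem:gaussian-rounding}.
\end{itemize}
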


\begin{proof}
For (\ref{item:gaussian-scalar-tail}) and
(\ref{item:gaussian-norm-tail}), apply exponential Markov to
$\E e^{sZ}=e^{s^2/2}$ and
$\E e^{s\|G\|^2}=(1-2s)^{-d/2}$, respectively, and optimize
in $s$ (with $s<1/2$ in the second case).
Part (\ref{item:gaussian-projection}) follows from rotational invariance and
the density of $\chi_2^2$.

For (\ref{item:gaussian-operator-tail}), take a $1/4$-net of the
unit sphere of size at most $9^n$. The operator norm is at most
twice the maximum of $|u^\top Rv|$ over pairs in the net. Each
such scalar is $N(0,1/n)$, so a union bound gives the stated estimate
with $c=9/2-2\log9>0$. Part (\ref{item:gaussian-inverse-tail})
is the inverse-matrix tail bound
of Sankar, Spielman, and Teng~\cite[Theorem~3.3]{SankarSpielmanTeng2006}.
Their bound is $2.35\sqrt n/(x\sigma)$ for entry variance
$\sigma^2$; set $\sigma=n^{-1/2}$.
\end{proof}

\section{The product partition lower bound}
\label{sec:product-partition-bound}

We prove Theorem~\ref{thm:intro-main-partition} by passing from normal
localization to sign stability, then to large relative entropy and
small part measure.

For independent vectors $X_1,\ldots,X_d\in\R^d$ with full-dimensional
densities, let $X$ be the matrix with rows $X_1^\top,\ldots,X_d^\top$,
and let $f(X)$ be its determinant sign.
Write $X^{(i)}$ for the matrix obtained by replacing row $i$ by
an independent copy from its distribution. Define
\[
 \nu_i=\Prob(f(X)\ne f(X^{(i)})),\qquad
 \nu=\frac1d\sum_{i=1}^d\nu_i.
\]
Thus $\nu$ is the \emph{flip probability} for a uniformly chosen row.
Conditional on the other rows $X_{-i}$, the original and replacement
signs are independent and identically distributed. Hence
\begin{equation}\label{eq:refresh-identity}
 2\nu_i=1-\E[f(X)f(X^{(i)})]
       =\E\Var(f(X)\mid X_{-i}),
 \qquad 0\le\nu_i,\nu\le1/2.
\end{equation}
Note that an invertible linear map multiplies every determinant by the
same nonzero factor and therefore preserves all flip probabilities.

\subsection{Normal localization implies sign stability}
\label{sec:normal-to-sign}

We begin by showing that normal localization together with sign instability forces short row residuals. 

For a fixed unit vector $u\in\R^n$, apply these definitions to the
rows $P_{u^\perp}R_i$ in fixed orthonormal coordinates on $u^\perp$,
with $d=n-1$. Write $f_u(R)$ and $\nu_u$ for their determinant sign
and flip probability.

\begin{lemma}[Sign instability forces short residuals]
\label{lem:localization-sign}
There are universal constants $0<c<1$ and $C\ge1$ with the following property
for every $0<t<1$.
Let $R_1,\ldots,R_{n-1}\in\R^n$
be independent full-dimensional logconcave rows with
$\operatorname{Cov}(R_i)\preceq I_n/n$.
Suppose, for some fixed unit vector $u$,
\[
 \Prob\left(\distp(v(A),u)\le\frac{t}{C\log(e/t)}\right)
       \ge1-\frac1{1024}.
\]
If the projected determinant has flip probability $\nu_u\ge1/64$, then
\begin{equation}\label{eq:localization-sign-bound}
 \frac1{n-1}\sum_{i=1}^{n-1}
 \Prob\left(\dist\bigl(R_i,\operatorname{span}\{R_j:j\ne i\}\bigr)
                     \le\frac t{\sqrt n}\right)
       \ge ct.
\end{equation}
\end{lemma}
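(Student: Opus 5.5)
Fix a row index $i$ and condition on the other rows $R_{-i}=(R_j)_{j\ne i}$. Let $S_i=\operatorname{span}\{R_j:j\ne i\}$, an $(n-2)$-dimensional subspace almost surely, and let $W_i=S_i^\perp$, a random $2$-plane. The residual of $R_i$ is $P_{W_i}R_i$, and its length is $\dist(R_i,S_i)$. The normal $v(A)$ is the unit vector of $W_i$ orthogonal to $P_{W_i}R_i$. Let $w_i\in W_i$ be the unit vector orthogonal to $P_{W_i}u$; it is defined when $P_{W_i}u\neq0$, and its sign is fixed by some rule depending only on $R_{-i}$.

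The first step is to relate the projected determinant sign to the sign of the scalar $Z_i=\langle R_i,w_i\rangle$ on the localization event. The sign $f_u(R)$ is $\det$ of the $(n-1)$ projected rows, which is multilinear in $R_i$. Given $R_{-i}$, it equals $\operatorname{sign}\langle R_i,N_i\rangle$ for a vector $N_i$ depending only on $R_{-i}$: namely the generalized cross product of $u$ and the $P_{u^\perp}R_j$, equivalently $\det[u;R_1;\dots;R_{n-1}]$ up to a fixed sign. $N_i$ is orthogonal to $u$ and to $S_i$, so $N_i\in W_i\cap u^\perp$. Hence $N_i$ is parallel to $w_i$, and $f_u=\pm\operatorname{sign}(Z_i)$ with the $\pm$ determined by $R_{-i}$. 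So, conditionally on $R_{-i}$, the flip variance is $\Var(f\mid R_{-i})=4p_i(1-p_i)$ with $p_i=\Prob(Z_i>0\mid R_{-i})$. By \eqref{eq:refresh-identity} and $\nu_u\ge1/64$, a uniformly random $i$ together with $R_{-i}$ satisfies $\min(p_i,1-p_i)\ge c_0$ with probability at least $c_0$, for an absolute constant $c_0$ (e.g.\ $1/256$).

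Next I use localization to show that $Z_i$ is a small fraction of the residual. Write the residual as $Y_i=P_{W_i}R_i=Z_iw_i+\langle R_i,e_i\rangle e_i$, where $e_i\in W_i$ is the unit vector along $P_{W_i}u$. The normal $v$ is perpendicular to $Y_i$ in $W_i$. If $\distp(v,u)\le\delta$ with $\delta=t/(C\log(e/t))$, then $u$ is nearly in $W_i$, and $v$ is close to $\pm e_i$. Consequently $Y_i$ is nearly parallel to $w_i$, and $|\langle R_i,e_i\rangle|\le 2\delta\|Y_i\|$, i.e.\ $|Z_i|\ge\|Y_i\|/2$.

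That alone does not give short residuals, so I use the minority-interval lemma instead. Conditionally on $R_{-i}$, $Z_i$ is logconcave with $\Var(Z_i)\le1/n$ by Lemma~\ref{lem:convex-gaussian-covariance}'s hypothesis $\operatorname{Cov}\preceq I_n/n$. On the good event $\min(p_i,1-p_i)\ge c_0$, Lemma~\ref{lem:minority-interval}(\ref{item:minority-interval}) gives $\Prob(|Z_i|\le t/(2\sqrt n)\mid R_{-i})\ge c_0t/32$. It remains to show that on most of this event $\|Y_i\|\le 2|Z_i|$, so the residual is at most $t/\sqrt n$. The bad set is where localization fails, of total probability $\le1/1024$, which can swamp $c_0t$ for small $t$. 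This is the main obstacle. The localization hypothesis must be used more carefully, so that failures do not concentrate on the small-$|Z_i|$ window.

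For that step I would bound the orthogonal component directly. The component $\langle R_i,e_i\rangle$ is logconcave with variance $\le1/n$. The localization event says that $|\langle R_i,e_i\rangle|\lesssim\delta\|Y_i\|$ typically, i.e.\ this coordinate is concentrated at scale about $\delta/\sqrt n$ whenever $|Z_i|$ is of order $1/\sqrt n$. Take the conditional (on $R_{-i}$) event that localization holds with probability $\ge1/2$; this costs only a constant by Markov. Then Lemma~\ref{lem:minority-interval}(\ref{item:interval-tail}) applied to $\langle R_i,e_i\rangle$ gives exponential tails at scale $b\approx\delta/\sqrt n$. Thus $\Prob(|\langle R_i,e_i\rangle|>t/(2\sqrt n))\le Ce^{-c\,C\log(e/t)}\le t^2$ once $C$ is large; this is where the $\log(e/t)$ factor is spent. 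Combining the two bounds, for each good $(i,R_{-i})$ we get $\Prob(\|Y_i\|\le t/\sqrt n\mid R_{-i})\ge c_0t/32-t^2\gtrsim t$. Averaging over $i$ and $R_{-i}$, with the good event having constant probability after intersecting the sign-instability and conditional-localization events (choosing constants so that $1/1024$ is small relative to $c_0$), yields \eqref{eq:localization-sign-bound}. The remaining care is in making the scale-$b$ claim rigorous: since $\|Y_i\|$ itself involves $Z_i$, I would require $\Prob(|Z_i|\le 1/\sqrt n)$ to be at least constant, via Chebyshev on the centered $Z_i$, together with the fact that $0$ lies in its bulk on the good event.
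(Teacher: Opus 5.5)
Your proposal is correct and is essentially the paper's proof. You condition on $R_{-i}$, split the residual into its components along $w_i$ and $e_i$, and identify the projected determinant sign with $\operatorname{sgn}Z_i$. You then use the refresh identity with Markov to get a constant fraction of pairs $(i,R_{-i})$ with balanced sign and good conditional localization. Finally you combine $\Prob(|Z_i|\le r/2)\gtrsim t$ from Lemma~\ref{lem:minority-interval}(\ref{item:minority-interval}) with an exponential tail for $\langle R_i,e_i\rangle$ at scale $\varepsilon/\sqrt n$ from Lemma~\ref{lem:minority-interval}(\ref{item:interval-tail}), where the bound on $\E_i Z_i^2$ comes from the balanced sign.

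Two constants need tightening. First, make that tail bound at most $c_0t/64$ rather than $t^2$; as written, $c_0t/32-t^2$ is negative for $t$ near $1$, and the same large-$C$ choice fixes this. Second, require conditional localization probability at least $3/4$ rather than $1/2$. With only $1/2$ you cannot guarantee $\Prob(|\langle R_i,e_i\rangle|\le b)\ge1/2$, which Lemma~\ref{lem:minority-interval}(\ref{item:interval-tail}) needs.
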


\begin{proof}
Choose $C\ge2$ and set $\delta=1/1024$ and $\varepsilon=t/(C\log(e/t))$.
Since $0<t<1$, we have $\varepsilon\le1/2$. Define
\[
 \mathcal E=\{\distp(v(A),u)\le\varepsilon\},\qquad
 d_i=\dist(R_i,\operatorname{span}\{R_j:j\ne i\}).
\]
Set $r=t/\sqrt n$. We expose all but one row, use normal localization
to bound one coordinate of its residual, and use the two signs to
bound the other.

Fix $i$ and expose $R_{-i}$. 
Their joint density implies that, almost surely,
$H=\operatorname{span}(R_{-i})$ has dimension $n-2$ and $u\notin H$.
In the plane $L=H^\perp$, choose the orthonormal basis
$e=P_Lu/\|P_Lu\|$ and $z\in L\cap u^\perp$, and set
\[
 T=\langle R_i,e\rangle,\qquad Z=\langle R_i,z\rangle.
\]
Then $d_i^2=T^2+Z^2$. Put $X_j=P_{u^\perp}R_j$, viewed as rows
in the fixed coordinates on $u^\perp$. Since $u\notin H$, the rows
$X_j$, $j\ne i$, are linearly independent and span
$u^\perp\cap z^\perp$. Thus $X_i=W_i+Zz$ for some $W_i$ in their
span. Expanding the determinant in row $i$ gives
\[
\begin{aligned}
 \det(X_1,\ldots,X_{n-1})
 &=\det(X_1,\ldots,W_i,\ldots,X_{n-1})
   +Z\det(X_1,\ldots,z,\ldots,X_{n-1})\\
 &=Z\det(X_1,\ldots,z,\ldots,X_{n-1}).
\end{aligned}
\]
The first term vanishes because $W_i$ is in the span of the other
rows. The remaining determinant is nonzero and depends only on
$R_{-i}$. Hence $f_u(R)$ equals $\operatorname{sgn}Z$ up to a sign
fixed by $R_{-i}$. Write $\Prob_i$, $\E_i$, and $\Var_i$ for
probability, expectation, and variance over $R_i$ with $R_{-i}$ fixed.
Define
\[
 p_i=\min\{\Prob_i(Z<0),\Prob_i(Z>0)\},\qquad
 \alpha_i=\Prob_i(\mathcal E^c),\qquad
 \mathcal A_i=\{\alpha_i\le1/4,\ p_i\ge\delta\}.
\]
Since $\E\alpha_i\le\delta$, Markov's inequality gives
$\Prob(\alpha_i>1/4)\le4\delta$. Also,
$\Var_i(f_u(R))=4p_i(1-p_i)\le1$, and this is at most $4\delta$
when $p_i<\delta$. Outside $\mathcal A_i$, either $p_i<\delta$ or
$\alpha_i>1/4$. Using these bounds on the conditional variance gives
\[
 \E\Var_i(f_u(R))
 \le\Prob(\mathcal A_i)+4\delta+\Prob(\alpha_i>1/4)
 \le\Prob(\mathcal A_i)+8\delta.
\]
By~\eqref{eq:refresh-identity}, $2\nu_u$ is the average of these
expected conditional variances. Hence
\[
 2\nu_u=\frac1{n-1}\sum_i\E\Var_i(f_u(R))
 \le8\delta+\frac1{n-1}\sum_i\Prob(\mathcal A_i).
\]
Consequently,
\begin{equation}\label{eq:localization-useful-configurations}
 \frac1{n-1}\sum_i\Prob(\mathcal A_i)
 \ge2\nu_u-8\delta
 \ge\frac1{32}-\frac1{128}=\frac3{128}.
\end{equation}

Fix exposed rows in $\mathcal A_i$. We condition only on $R_{-i}$,
not on $\mathcal E$. Independence therefore preserves the distribution
of $R_i$, including its logconcavity and covariance bound.

For each value of the remaining row, let $v=v(A)$ be the unit normal
to all the rows, choosing its sign toward $u$. On $\mathcal E$,
$\|v-u\|\le\varepsilon$. Since $v\in L$ and $z\perp u$,
\[
 \langle v,e\rangle
 =\frac{\langle v,u\rangle}{\|P_Lu\|}
 \ge1-\varepsilon^2/2,\qquad
 |\langle v,z\rangle|\le\varepsilon.
\]
Orthogonality to $R_i$ gives
$0=T\langle v,e\rangle+Z\langle v,z\rangle$, and therefore
\begin{equation}\label{eq:localization-transverse-band}
 |T|\le\frac{\varepsilon}{1-\varepsilon^2/2}|Z|
       \le3\varepsilon|Z|\qquad\text{on }\mathcal E.
\end{equation}

The half-line opposite to the sign of $\E_i Z$ has probability at
least $\delta$, so $\delta(\E_i Z)^2\le\Var_i Z\le1/n$. Hence
\[
 \E_i Z^2=\Var_i Z+(\E_i Z)^2\le\frac{1+1/\delta}{n}.
\]
Set $C_0=4\sqrt{1+1/\delta}$. Markov's inequality gives
$\Prob_i(|Z|>C_0/\sqrt n)\le1/16$.
Together with~\eqref{eq:localization-transverse-band} and
$\alpha_i\le1/4$, this gives
\[
 \Prob_i(|T|\le b)\ge1-\frac14-\frac1{16}=\frac{11}{16},
 \qquad b=\frac{3C_0\varepsilon}{\sqrt n}.
\]
By logconcavity of $T$ and
Lemma~\ref{lem:minority-interval}~(\ref{item:interval-tail}),
\[
 \Prob_i(|T|>r/2)
 \le C_1e^{-c_1r/(2b)}
 =C_1e^{-c_1t/(6C_0\varepsilon)}
 \le\frac{\delta t}{64}.
\]
Here $c_1,C_1$ are universal. Since $t/\varepsilon=C\log(e/t)$,
a sufficiently large universal $C$ makes the last inequality hold
for every $0<t<1$.

Since $Z$ is logconcave, $\Var_i Z\le1/n$, and $p_i\ge\delta$,
Lemma~\ref{lem:minority-interval}~(\ref{item:minority-interval}) gives
\[
 \Prob_i(|Z|\le r/2)\ge\frac{\delta t}{32}.
\]
Here $\sqrt n\,r=t<1$. Subtracting the preceding tail bound gives
\[
 \Prob_i(d_i\le r)
 \ge\Prob_i(|Z|\le r/2)-\Prob_i(|T|>r/2)
 \ge\frac{\delta t}{64}.
\]
No independence of $T$ and $Z$ is needed.

Since $\mathcal A_i$ depends only on $R_{-i}$, averaging over the
exposed rows and then over $i$ gives,
by~\eqref{eq:localization-useful-configurations},
\[
 \frac1{n-1}\sum_i\Prob(d_i\le t/\sqrt n)
 \ge\frac{\delta t}{64}\,
       \frac1{n-1}\sum_i\Prob(\mathcal A_i)
 \ge\frac{3\delta t}{8192}.
\]
This proves the claim with $c=3\delta/8192$.
\end{proof}

\begin{lemma}
\label{lem:stable-parts}
Let $c,C$ be the constants in Lemma~\ref{lem:localization-sign}, and
fix $0<t<1$. Let $\mathcal S$ be a family of positive-measure parts in
a countable convex product partition of $(\R^n)^{n-1}$.
Under the original Gaussian product distribution, suppose each
$P\in\mathcal S$ has a fixed unit vector $u(P)$ such that
\[
 \Prob\left(\distp(v(A),u(P))\le\frac{t}{C\log(e/t)}
                    \;\middle|\; A\in P\right)
       \ge1-\frac1{1024}.
\]
One can discard parts of
$\mathcal S$ of total Gaussian measure at most $t/(2c)$
so that, conditional on any remaining part,
the rows projected onto $u(P)^\perp$ have determinant-sign flip
probability less than $1/64$.
\end{lemma}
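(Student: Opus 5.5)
We discard exactly the unstable parts, $\mathcal S_{\rm bad}=\{P\in\mathcal S:\nu_{u(P)}\ge1/64\}$, where $\nu_{u(P)}$ is computed for the rows conditioned on $P$. Each remaining part then has flip probability below $1/64$, so it suffices to show $\bar\gamma_n^{\otimes(n-1)}\bigl(\bigcup\mathcal S_{\rm bad}\bigr)\le t/(2c)$. The idea is to compare the short-residual event
\[
 \mathcal D=\Bigl\{A:\ \frac1{n-1}\sum_{i}\mathbf 1\bigl[d_i(A)\le t/\sqrt n\bigr]\Bigr\},
\]
or rather its expectation, inside and outside the bad parts. This event does not refer to $u(P)$, so it can be summed over the partition.

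\emph{Step 1: lower bound on each bad part.} Conditional on a positive-measure part $P=F_1\times\cdots\times F_{n-1}$, the rows are independent. Each $R_i$ is $\bar\gamma_n$ restricted to the convex set $F_i$. By Lemma~\ref{lem:convex-gaussian-covariance}, each row is full-dimensional and logconcave with $\operatorname{Cov}(R_i)\preceq I_n/n$. The localization hypothesis is exactly the hypothesis of Lemma~\ref{lem:localization-sign}, with $u=u(P)$. So for $P\in\mathcal S_{\rm bad}$, equation~\eqref{eq:localization-sign-bound} gives
\[
 \frac1{n-1}\sum_i\Prob\bigl(d_i\le t/\sqrt n\;\big|\;A\in P\bigr)\ge ct.
\]

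\emph{Step 2: global upper bound.} Under the unconditioned Gaussian law, fix $i$ and expose $R_{-i}$. Almost surely $H=\operatorname{span}(R_{-i})$ has dimension $n-2$. Since $R_i$ is independent of $H$, $d_i^2=\|P_{H^\perp}R_i\|^2$ is distributed as $\chi_2^2/n$. Lemma~\ref{lem:gaussian-estimates}(\ref{item:gaussian-projection}) then gives
\[
 \Prob(d_i\le t/\sqrt n)=\Prob(\chi^2_2\le t^2)=1-e^{-t^2/2}\le t^2/2.
\]
Averaging over $i$, the unconditioned average is at most $t^2/2$.

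\emph{Step 3: combine.} The parts are disjoint, so by the law of total probability
\[
 \frac{t^2}{2}\ \ge\ \sum_{P\in\mathcal S_{\rm bad}}\bar\gamma_n^{\otimes(n-1)}(P)\,\frac1{n-1}\sum_i\Prob\bigl(d_i\le t/\sqrt n\mid P\bigr)\ \ge\ ct\sum_{P\in\mathcal S_{\rm bad}}\bar\gamma_n^{\otimes(n-1)}(P).
\]
The first inequality holds because all terms are nonnegative, so dropping parts outside $\mathcal S_{\rm bad}$ only decreases the sum. Rearranging, the discarded measure is at most $t/(2c)$, as claimed.

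The only delicate point is checking that Lemma~\ref{lem:localization-sign} applies on each part. This needs the conditional row law to be logconcave with the covariance bound, and the localization hypothesis to be stated under the conditional law; both hold by the product structure of the parts. Measurability of $\mathcal S_{\rm bad}$ is not an issue, since the partition is countable and $\nu_{u(P)}$ is a number attached to each part.
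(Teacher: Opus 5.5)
Your proof is correct and matches the paper's argument: the paper also discards exactly the parts with $\nu_{u(P)}\ge 1/64$, applies Lemma~\ref{lem:localization-sign} on each such part to get an averaged short-residual probability of at least $ct$, computes the global average $1-e^{-t^2/2}\le t^2/2$ from the $\chi_2^2/n$ law of the residual, and concludes by Markov's inequality, which is the same as your truncated law-of-total-probability sum. One small slip: your set-builder definition of $\mathcal D$ has no predicate, but this is harmless because you immediately switch to the averaged conditional probabilities, which are what the argument actually uses.
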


In particular, taking $t=c\eta$ retains sign-stable measure at least
$\eta/2$ from a localizing family of measure at least $\eta$.

\begin{proof}
Write $\mathcal P$ for the partition and discard null parts. For each
part $P$, write $p=\bar\gamma_n^{\otimes(n-1)}(P)$ for its Gaussian
measure. Let $\mathcal B$ be the parts
in $\mathcal S$ whose conditional flip probability is at least
$1/64$. 
Set
\[
 H_P=\frac1{n-1}\sum_{i=1}^{n-1}
 \Prob(\dist(R_i,\operatorname{span}\{R_j:j\ne i\})\le t/\sqrt n
                                      \mid A\in P).
\]
The short-residual event does not involve $u(P)$, so it is the same
event within parts and in full space.
By Lemma~\ref{lem:gaussian-estimates}~(\ref{item:gaussian-projection}),
the squared distance of an unrestricted Gaussian row to the span of
the other $n-2$ rows has distribution $\chi_2^2/n$. Therefore, 
\begin{equation}\label{eq:one-full-space-transfer}
 \sum_{P\in\mathcal P} p H_P
       =1-e^{-t^2/2}\le\frac{t^2}{2}.
\end{equation}
On each $P\in\mathcal B$, the rows are independent convex Gaussian
restrictions with covariance bounded by $I_n/n$ (Lemma~\ref{lem:convex-gaussian-covariance}) and $\nu_{u(P)}\ge1/64$.
By Lemma~\ref{lem:localization-sign}, we have $H_P\ge ct$.
For a random part $P$ chosen according to its Gaussian measure,
$H_P\ge0$ and $\E H_P\le t^2/2$ by~\eqref{eq:one-full-space-transfer}.
By Markov's inequality, 
\[
 \sum_{P\in\mathcal B}p=\Prob(P\in\mathcal B)
 \le\Prob(H_P\ge ct)
 \le\frac{\E H_P}{ct}
 \le\frac{t}{2c}.
\]
\end{proof}

\subsection{Sign stability implies large relative entropy}
\label{sec:entropy-induction}

We prove the entropy lower bound by establishing its contrapositive:
small Gaussian relative entropy forces a substantial flip probability. 
In the theorem below, rows may have different distributions; no centering, symmetry, or
covariance lower bound is assumed. 

\begin{theorem}
\label{thm:sign-entropy}
There is a universal constant $c>0$ such that, for every sufficiently
large $m$, the following holds.
Let $X_1,\ldots,X_m$ be
independent full-dimensional logconcave random vectors in
$\mathbb R^m$. Then, 
\[
 \sum_{i=1}^m\mathcal K(\mu(X_i))
       \le c\frac{m^2}{\log^8 m} \quad \implies \quad
\nu\ge\frac{1}{64}.
\]
\end{theorem}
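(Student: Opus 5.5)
We argue by contradiction via the dimension-reduction induction sketched in the overview. Suppose $K_m:=\sum_i\mathcal K(\mu(X_i))\le c\,m^2/\log^8 m$ but $\nu<1/64$.

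\textbf{Normalization.} At each dimension $d$ we first apply a common invertible linear map $A$ making the second-moment matrices satisfy $\sum_i\E X_iX_i^\top=dI_d$. This leaves all flip probabilities unchanged. We must check with \eqref{eq:linear-entropy-change} that it does not increase the total entropy. The map $X\mapsto AX$ changes $\sum_i\mathcal K$ by $\frac12\tr((A^\top A-I)M)-d\log|\det A|$, where $M=\sum_i\E X_iX_i^\top$. Choosing $A=(M/d)^{-1/2}$ turns this change into $\frac d2\bigl(d-\tr(M/d)\bigr)\cdot\frac{1}{d}\cdot d+\frac d2\log\det(M/d)$, which is $\le0$ by $\log\lambda\le\lambda-1$. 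Logconcavity and full-dimensionality are also preserved.

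\textbf{One reduction step.} Fix a row $i$. Delete it and project the remaining rows onto $X_i^\perp$, which is a $(d-1)$-dimensional space. Conditional on $X_i$, the projected rows are independent and logconcave. Their determinant sign equals $f(X)$ times $\operatorname{sgn}$ of a quantity depending only on $X_i$. So for $j\ne i$ the flip probability of row $j$ in the new system, averaged over $X_i$, is the old $\nu_j$. The expected new average flip is therefore $\frac{1}{d-1}\sum_{j\ne i}\nu_j$.

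\emph{Typical step.} If every row has $\mathcal K(\mu(X_i))\le\theta K_d/d$ for a suitable $\theta$ (small entropy per coordinate), pick $i$ uniformly. Then $\E\nu'=\nu$, and deletion removes $K_d/d$ entropy on average. For the projection loss, apply Lemma~\ref{lem:gaussian-entropic-bl} to each $\mu(X_j)$ with subspaces $L_s=x_s^\perp$ for sample directions $x_s$ of $X_i$ (after conditioning) and weights $c_s$. The isotropic normalization $\sum\E X_iX_i^\top=dI$, after averaging over $i$, gives $\frac1d\sum_i\E P_{X_i^\perp}\approx(1-1/d)I$, up to errors controlled by the spread of norms. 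This yields $\E\sum_j\mathcal K(\text{proj})\le(1-1/d+\text{err})\sum_j\mathcal K(\mu(X_j))$. The error comes from rows whose directions are not spread, i.e.\ large $\|X_i\|$ fluctuations; it is controlled through \eqref{eq:entropy-moment}, the small entropy per row, and Lemma~\ref{lem:raw-fourth}. Hence
\[
\E K_{d-1}\le\Bigl(1-\frac2d+\frac{\epsilon_d}{d}\Bigr)K_d,
\]
so $q=K/d^2$ grows by a factor at most $1+\epsilon_d/d$.

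\emph{Atypical step.} If some row has large entropy, delete that row. Its entropy saving, at least $\theta K_d/d$, is large enough to pay for a possible increase of $\nu$. Deleting row $i$ can raise the average flip by at most $\nu_i$-type terms, of order $1/d$.

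\textbf{Potential and induction.} Track $\Psi=2\nu+q/q_*$ and prove, as in Lemma~\ref{lem:potential-one-step}, that $\E\Psi_{d-1}\le(1+\epsilon_d/d)\Psi_d$ with $\sum_d\epsilon_d/d=O(1)$ from $d=m$ down to $D=\log^{4}m$. Run the chain down to dimension $D$, or stop early if $q>q_*$, in which case $\Psi>1$. At $D$ with $q\le q_*$ we have $K_D\le q_*D^2$, a small absolute constant. By Pinsker's inequality (Lemma~\ref{lem:entropy-facts}), the law of $(X,X^{(i)})$ is close in total variation to independent Gaussians, for which the flip probability is $1/2$. This gives $\nu\ge1/4$ and again $\Psi\ge1/2$.

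Chaining back gives $\Psi_m\ge c'$ for an absolute constant $c'$. On the other hand, the initial potential is $\Psi_m\le2\nu+c/(q_*\log^8m)<1/32+o(1)$. These conflict once the constants are tuned: take $q_*$ so that the terminal bound exceeds $e^{O(1)}/32$, and take $c$ so that $c/q_*$ is small. The $\log^8$ arises from choosing $D$ polylogarithmic and from the per-step error budget $\epsilon_d\lesssim 1/\log$-powers. Those powers are needed to absorb the initial condition $q_m\le c/\log^8 m$ versus the threshold $q_*$ at scale $D$.

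\textbf{Main obstacle.} The hardest step is the Brascamp--Lieb projection estimate in the typical step, i.e.\ showing projection removes nearly another $K_d/d$. The weights $c_s$ must satisfy $\sum c_sP_{L_s}\preceq I$ exactly, while the directions of $X_i$ are random and non-isotropic individually. My first attempt is to truncate $\|X_i\|^2$ at $O(\log d)$ times its mean using logconcave tails (Lemma~\ref{lem:minority-interval}), and use weights $c_{i}\propto\|X_i\|^2/(d(1+\text{slack}))$. This loses only a $1+O(\log^{k}d/d)$-type factor, which is where the polylog enters.
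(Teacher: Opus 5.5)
Your architecture matches the paper's: common normalization, flip preservation under delete-and-project, the potential $\Psi=2\nu+q/q_*$, a Pinsker base case at $D\asymp\log^4 m$, and chaining back with $\prod(1+O(\rho)/\ell)=O(1)$. However, two steps would not go through as you describe them.

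First, the case split cannot use a threshold relative to the current entropy. If you delete a row with $k_r>\theta K_d/d=\theta qd$, then $q-q'\ge (k_r-(2d-1)q)/(d-1)^2\approx(\theta-2)q/d$. So $\Psi$ drops by only about $(\theta-2)q/(q_*d)$. Meanwhile, by the flip identity, $2(\E\nu'-\nu)=2(\nu-\nu_r)/(d-1)$ can be as large as $1/(d-1)$. The induction must cover every normalized collection with $q\le q_*$, including $q\ll q_*$, and there this saving is negligible; for $\theta\le2$ your lower bound on it is not even positive. The comparison must be with the fixed scale $q_*$. Call a row expensive only if $k_r>\alpha d$, and take $q_*\le\alpha/4$. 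Then deletion alone saves at least $\alpha/(2d)$ in $q$, i.e.\ at least $2/d\ge1/(d-1)$ in $\Psi$. The complementary case, every $k_i\le\alpha d$, is then exactly the absolute per-row bound the spread estimate needs.

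Second, the projection step you flag as the main obstacle is the heart of the argument, and your sketched route would not reach the needed precision. Projection gains only about $\mathcal K/d$, and you may lose only a factor $1-\rho$ of that, with $\rho\lesssim1/\log m$. Weights $c_s\propto\|x_s\|^2$ bound a norm-weighted average of projected entropies, but the uniformly deleted row is not norm-biased. Converting to the unweighted average costs order (relative norm fluctuation)$\cdot\mathcal K$, which swamps $\mathcal K/d$. Truncating $\|X\|^2$ at $O(\log d)$ times its mean controls directions only up to $\log d$ factors. The paper instead separates two facts.

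(i) For $V$ independent of $X$ with $\E VV^\top\succeq\delta I$, one has $\E_VD_{\rm KL}((P_{V^\perp})_\#\mu\Vert\gamma_{V^\perp})\le(1-\delta)\mathcal K(\mu)$. This comes from Brascamp--Lieb on i.i.d.\ samples with uniform coefficients $1/(Na_N)$, where $a_N$ is the operator norm of the empirical mean projection, followed by the strong law. So $\sum c_sP_{L_s}\preceq I$ never has to hold exactly.

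(ii) A direction-spread bound coming from the per-row entropy, not from logconcave tails. If $k\le\alpha d$ and $d\ge1/\alpha$, Gaussian norm concentration and $D_{\rm KL}\ge p\log(1/b)-\log2$ give $\Prob(\|X\|^2>d(1+h))\le16h^2$ for $h=\alpha^{1/4}$. Cauchy--Schwarz with the logconcave bound $(\E Y^4)^{1/2}\le4\E Y^2$ then yields $\E[XX^\top/\|X\|^2]\succeq(1-17h)\E XX^\top/d$.

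Also, $V$ must be independent of the surviving row $X_j$, so you average only over $r\ne j$, not over all $i$ as you wrote. The bound $\|\E X_jX_j^\top\|_{\rm op}\le6\alpha d$ makes the missing term negligible. With $\sum_j\E X_jX_j^\top=dI$ this gives $\E VV^\top\succeq\frac{1-\rho}{d-1}I$ with $\rho=O(\alpha^{1/4})$. Requiring $\rho\log m=O(1)$ forces $\alpha\asymp\log^{-4}m$, hence $D\asymp1/\alpha$ and $q_*\asymp D^{-2}\asymp\log^{-8}m$. That is where the $\log^8$ comes from.
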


The proof is an induction on dimension, with boundary cases given by
an entropy threshold and a fixed dimension $D=D(m)$.
We begin with estimates needed for one step.

\begin{lemma}[Flip probability preservation]
\label{lem:projection-refresh}
For $d\ge2$, let $X_1,\ldots,X_d\in\R^d$ be independent with
full-dimensional densities. Fix an
index $r$, expose $X_r$, and project the other rows orthogonally
onto $X_r^\perp$. Conditional on $X_r$, let $\nu'$ be the probability
that the determinant sign of the projected $(d-1)\times(d-1)$ matrix
changes when a uniformly chosen row $j\ne r$ is replaced by the
projection of an independent copy of $X_j$. The projection and
orthonormal coordinates on $X_r^\perp$ are held fixed during this
replacement. Then
\begin{equation}\label{eq:projection-refresh}
 \E_{X_r}\nu'=\frac1{d-1}\sum_{j\ne r}\nu_j
                 =\frac{d\nu-\nu_r}{d-1}.
\end{equation}
If $r$ is chosen uniformly and independently, then $\E\nu'=\nu$.
\end{lemma}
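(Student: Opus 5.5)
The plan is to show that, for each fixed value of $X_r$ and each $j\ne r$, the projected flip event coincides with the original flip event for row $j$. Averaging over $X_r$ then gives \eqref{eq:projection-refresh}.

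\emph{Step 1: reduce to a determinant identity.} Fix $X_r=x\ne0$ (almost surely nonzero). Choose an orthonormal basis of $\R^d$ whose last vector is $\overrightarrow{x}$, with the first $d-1$ vectors being the fixed coordinates on $x^\perp$. Let $Q$ be the corresponding orthogonal change of coordinates. For any rows $Y_k$, $k\ne r$, the coordinates of $Y_k$ in this basis are $(P_{x^\perp}Y_k,\langle Y_k,\overrightarrow{x}\rangle)$, and row $r$ becomes $(0,\dots,0,\|x\|)$. Expanding the determinant along row $r$ gives
\[
 \det(Y_1,\ldots,x,\ldots,Y_d)\det Q=\pm\|x\|\det\bigl(P_{x^\perp}Y_k\bigr)_{k\ne r}.
\]
Here the sign $\pm$ depends only on $r$ and the chosen coordinates, and is fixed once $x$ is fixed. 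Hence the sign of the full determinant equals the sign of the projected determinant times a factor $\sigma(x)\in\{\pm1\}$ that depends only on $x$.

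\emph{Step 2: match the flip events.} Fix $j\ne r$. Replacing $X_j$ by an independent copy $X_j'$ in the full matrix, with $X_r$ unchanged, corresponds exactly to replacing the projected row $P_{x^\perp}X_j$ by $P_{x^\perp}X_j'$, with the projection held fixed. By Step 1 the common factor $\sigma(x)$ cancels when we compare signs, so
\[
 \{f(X)\ne f(X^{(j)})\}=\{\text{projected sign changes when row } j \text{ is refreshed}\}.
\]
Since the rows are independent, conditioning on $X_r$ leaves the joint law of the other rows and the copy $X_j'$ unchanged. Taking conditional probability given $X_r$ and then expectation over $X_r$, the projected flip probability for row $j$ has expectation $\nu_j$.

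\emph{Step 3: average.} Averaging over the $d-1$ indices $j\ne r$ gives $\E_{X_r}\nu'=\frac1{d-1}\sum_{j\ne r}\nu_j=\frac{d\nu-\nu_r}{d-1}$. If $r$ is uniform, averaging over $r$ gives $\frac{d\nu-\nu}{d-1}=\nu$.

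The only delicate points are measure-zero events. We need $x\ne0$, and the projected rows need densities on $x^\perp$ so that their determinant is almost surely nonzero and the sign is well defined. Both follow because the rows have full-dimensional densities: the projection of an absolutely continuous vector onto a fixed hyperplane is absolutely continuous there. I expect no real obstacle beyond this bookkeeping.
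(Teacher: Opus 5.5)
Your proposal is correct and follows essentially the same route as the paper. In both arguments, row $r$ is rotated to $\|x\|e_d$ and the determinant is expanded along that row, which gives $\det X=\pm\|x\|\det Y/\det Q$ with a sign fixed once $x$ is fixed; the projected flip event for row $j$ is then identified with the original one and averaged using independence. The only cosmetic difference is that the paper uses a specific measurable reflection $Q$ with $Qe_d=v$, whereas you complete arbitrary fixed coordinates on $x^\perp$, which also shows that the choice of coordinates does not matter.
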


\begin{proof}
Fix a nonzero value $X_r=x$ and put $v=x/\|x\|$. Let $e_d$ be the
last coordinate vector. Choose $Q$ to be the reflection across the
hyperplane $(e_d-v)^\perp$ when $v\ne e_d$, and the identity otherwise.
This choice is measurable in $x$, and $Qe_d=v$.
The $r$th row of $XQ$ is then
$(0,\ldots,0,\|x\|)$. Let $Y$ be $XQ$ with row $r$ and the last
column deleted. Expansion along row $r$ gives
\[
 \det X=\frac{(-1)^{r+d}\|x\|}{\det Q}\det Y.
\]
Replacing any row $j\ne r$ leaves this factor unchanged. Moreover,
conditioning on $X_r=x$ leaves the other rows and their independent
replacements with their original distributions. Thus the flip probability
of projected row $j$ is
\[
 \nu'_j(x)=\Prob(f(X)\ne f(X^{(j)})\mid X_r=x).
\]
For fixed $r$, averaging over $x$ and then uniformly over $j\ne r$
proves~\eqref{eq:projection-refresh}. Averaging that identity over an
independently uniform $r$ gives $\E\nu'=\nu$.
\end{proof}

\paragraph{Entropy removed by projection.}
The next three lemmas normalize the aggregate second moments, bound the
entropy removed by projection in a spread-out random direction, and
verify this spread when each row has small entropy per coordinate.
Lemma~\ref{lem:potential-one-step} then combines these estimates with
the flip identity to obtain the induction step.

\begin{lemma}
\label{lem:entropy-balance}
Let $X_1,\ldots,X_d\in\R^d$ have finite Gaussian relative entropy.
Define their average second-moment matrix by
\[
 M=\frac1d\sum_i\E X_iX_i^\top.
\]
Then $M$ is positive definite, and $T=M^{-1/2}$ satisfies
\[
 \sum_i\E[(TX_i)(TX_i)^\top]=dI_d,\qquad
 \sum_i\mathcal K(\mu(TX_i))\le\sum_i\mathcal K(\mu(X_i)).
\]
\end{lemma}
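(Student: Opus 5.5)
First I will show that $M$ is positive definite. By \eqref{eq:entropy-moment}, each $X_i$ has finite second moments, so $M$ is well defined and positive semidefinite. Suppose $w^\top Mw=0$ for some $w\ne0$. Then $\langle w,X_i\rangle=0$ almost surely for every $i$, so $\mu(X_i)$ is concentrated on the hyperplane $w^\perp$. That hyperplane is $\gamma_d$-null, so $\mu(X_i)$ is not absolutely continuous with respect to $\gamma_d$, and $\mathcal K(\mu(X_i))=+\infty$. This is a contradiction. Hence $M\succ0$, and $T=M^{-1/2}$ is a well-defined symmetric positive definite matrix.

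The moment identity follows by linearity:
\[
 \sum_i\E[(TX_i)(TX_i)^\top]=T\Bigl(\sum_i\E X_iX_i^\top\Bigr)T=T(dM)T=dI_d.
\]

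For the entropy inequality, I will sum the change-of-variables formula \eqref{eq:linear-entropy-change} over $i$ with $A=T$. Since $T^\top T=M^{-1}$, this gives
\[
 \sum_i\mathcal K(\mu(TX_i))-\sum_i\mathcal K(\mu(X_i))
 =\frac12\tr\bigl((M^{-1}-I)\,dM\bigr)-d\log\det M^{-1/2}.
\]
The trace term equals $\frac d2(d-\tr M)$, and the log-determinant term equals $\frac d2\log\det M$. So the difference is
\[
 \frac d2\bigl(d-\tr M+\log\det M\bigr)=\frac d2\sum_{k=1}^d(1-\lambda_k+\log\lambda_k),
\]
where $\lambda_k>0$ are the eigenvalues of $M$. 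Each summand is nonpositive because $\log\lambda\le\lambda-1$. This proves the inequality.

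The only point that needs care is applying \eqref{eq:linear-entropy-change} when the entropies are finite. I will justify it through the density transformation: $TX$ has density $\rho(T^{-1}y)/|\det T|$. The Gaussian log-ratio then picks up the term $\frac12(\|y\|^2-\|T^{-1}y\|^2)$, which is integrable because of the finite second moments. So every term in the computation is finite, and the identity holds as stated.
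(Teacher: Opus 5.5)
Your proof is correct and follows the paper's argument. Absolute continuity plus finite second moments rule out a null direction of $M$, the identity $T(dM)T=dI_d$ gives the normalization, and summing \eqref{eq:linear-entropy-change} reduces the entropy inequality to $\log\det M\le\tr M-d$; your density-transformation check of the change-of-variables formula just makes explicit a step the paper takes for granted.
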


\begin{proof}
Finite Gaussian relative entropy implies absolute continuity and,
by~\eqref{eq:entropy-moment}, finite second moments. No row is supported
on a hyperplane, so $M$ is positive definite. Now $T(dM)T=dI_d$, and
\eqref{eq:linear-entropy-change} gives
\[
 \sum_i\mathcal K(\mu(TX_i))
 =\sum_i\mathcal K(\mu(X_i))
   -\frac d2\bigl(\tr M-d-\log\det M\bigr)
 \le\sum_i\mathcal K(\mu(X_i)),
\]
since $\log\det M\le\tr M-d$. The inverse square root is continuous
on positive-definite matrices.
For independent logconcave rows, this invertible linear map also
preserves independence, logconcavity, and flip probabilities.
\end{proof}

The next lemma bounds the entropy removed by projection in terms of
the second moment of the direction. It does not require logconcavity.

\begin{lemma}
\label{lem:average-projection-entropy}
Let $X\sim\mu$ have finite $\mathcal K(\mu)$ in $\R^d$, where
$d\ge2$. Let $V$ be an independent random unit vector such that
$\E VV^\top\succeq\delta I_d$, with $\delta>0$. Then
\begin{equation}\label{eq:average-projection-entropy}
 \E_V D_{\rm KL}((P_{V^\perp})_\#\mu\Vert\gamma_{V^\perp})
                    \le(1-\delta)\mathcal K(\mu).
\end{equation}
\end{lemma}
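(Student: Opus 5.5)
The plan is to read \eqref{eq:average-projection-entropy} as a single instance of the Gaussian Brascamp--Lieb inequality, Lemma~\ref{lem:gaussian-entropic-bl}. The subspaces will be the hyperplanes $V^\perp$, weighted by the law of $V$.

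The key identity is $P_{v^\perp}=I_d-vv^\top$ for every unit vector $v$. Averaging over $V$ gives
\[
 \E_V P_{V^\perp}=I_d-\E VV^\top\preceq(1-\delta)I_d .
\]
Also $\delta\le\tr(\E VV^\top)/d=1/d<1$, so $1-\delta>0$.

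First suppose $V$ takes finitely many values $v_1,\dots,v_N$ with probabilities $\pi_s$. Set $L_s=v_s^\perp$, which is nonzero because $d\ge2$, and set $c_s=\pi_s/(1-\delta)$. Then
\[
 \sum_s c_sP_{L_s}=\frac{I_d-\E VV^\top}{1-\delta}\preceq I_d ,
\]
so Lemma~\ref{lem:gaussian-entropic-bl} applies. It gives
\[
 \frac1{1-\delta}\sum_s\pi_sD_{\rm KL}((P_{L_s})_\#\mu\Vert\gamma_{L_s})\le\mathcal K(\mu),
\]
which is exactly \eqref{eq:average-projection-entropy} in this case. Because $V$ is independent of $X$, the quantity $\E_V D_{\rm KL}(\cdot)$ is just this weighted average of deterministic projections.

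For a general law of $V$, I will approximate by finitely supported laws. Let $g(v)=D_{\rm KL}((P_{v^\perp})_\#\mu\Vert\gamma_{v^\perp})$. Since $(P_{v^\perp})_\#\gamma_d=\gamma_{v^\perp}$, data processing (Lemma~\ref{lem:entropy-facts}(\ref{item:data-processing})) gives $0\le g\le\mathcal K(\mu)<\infty$. I also need $g$ to be measurable; in fixed coordinates it is a measurable function of $v$, and this is the step where I expect to spend the most care.

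Now take i.i.d.\ copies $V^{(1)},\dots,V^{(k)}$ of $V$ and form the empirical law. By the strong law of large numbers, almost surely the empirical mean of $g$ converges to $\E g(V)$ and the empirical second-moment matrix converges to $\E VV^\top$. Fix any $\delta'<\delta$. For large $k$, the empirical second-moment matrix is $\succeq\delta' I_d$, because the smallest eigenvalue is continuous. Applying the finite case with $\delta'$ gives an empirical average of $g$ at most $(1-\delta')\mathcal K(\mu)$. Letting $k\to\infty$ and then $\delta'\uparrow\delta$ proves \eqref{eq:average-projection-entropy}.

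The main obstacle is this passage from finitely many subspaces to a general direction law. Lemma~\ref{lem:gaussian-entropic-bl} is stated only for finitely many subspaces, so the approximation argument above is needed to cover arbitrary $V$. If the measurability of $g$ is awkward, a fallback is to quantize $V$ to a fine finite net with an explicit perturbation of $\delta$. That route needs continuity of $g$ in $v$, which is less clear, so the law-of-large-numbers route, which needs only boundedness and measurability of $g$, is the one I would pursue.
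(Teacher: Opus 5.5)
Your proposal is correct and follows essentially the paper's route: Gaussian Brascamp--Lieb applied to the empirical law of i.i.d.\ copies of $V$, followed by the strong law of large numbers for both the projected entropies and the second-moment matrix. The paper differs only in normalizing the coefficients by $a_N=\|\frac1N\sum_s P_s\|_{\rm op}$ rather than using a slack $\delta'<\delta$. Your measurability concern is harmless: $v\mapsto g(v)$ is lower semicontinuous, because relative entropy is jointly weakly lower semicontinuous and the pushforwards vary weakly continuously in $v$.
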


\begin{proof}
By independence, conditioning on $V=v$ gives the projected law
$(P_{v^\perp})_\#\mu$. We apply Brascamp--Lieb to these projections
of the same distribution $\mu$.

Set $P=P_{V^\perp}$ and $k(P)=D_{\rm KL}(P_\#\mu\Vert
P_\#\gamma_d)$, viewing both measures on $\R^d$.
Take independent copies $P_1,P_2,\ldots$ of $P$, and put
\[
 a_N=\left\|\frac1N\sum_{s=1}^N P_s\right\|_{\rm op}.
\]
The average projection is positive semidefinite with trace $d-1$, so
$a_N\ge(d-1)/d>0$.
Lemma~\ref{lem:gaussian-entropic-bl}, with coefficients $1/(Na_N)$,
gives
\[
 \frac1N\sum_{s=1}^N k(P_s)\le a_N\mathcal K(\mu).
\]
For each fixed projection $P$, apply the data processing inequality
(Lemma~\ref{lem:entropy-facts}~(\ref{item:data-processing})) to
$\mu$ and $\gamma_d$ under the map $x\mapsto Px$:
\[
 k(P)=D_{\rm KL}(P_\#\mu\Vert P_\#\gamma_d)
       \le D_{\rm KL}(\mu\Vert\gamma_d)=\mathcal K(\mu).
\]
Since relative entropy is nonnegative, $k(P)$ is bounded and integrable.
We also have
\[
 0\preceq\E P=I_d-\E VV^\top\preceq(1-\delta)I_d.
\]
The strong law of large numbers~\cite{Durrett2019}, applied to $k(P_s)$
and to the matrix entries of $P_s$, gives
\[
 \E k(P)\le\|\E P\|_{\rm op}\mathcal K(\mu)
          \le(1-\delta)\mathcal K(\mu).
\]
\end{proof}

\begin{lemma}
\label{lem:entropy-capture}
Let $0<\alpha\le10^{-16}$ and $d\ge1/\alpha$. Let $X\in\R^d$ be
logconcave with $k=\mathcal K(\mu(X))\le\alpha d$, and set
$M=\E XX^\top$. Then, with $\xi=17\alpha^{1/4}<1$,
\begin{equation}\label{eq:entropy-capture}
 \E P_{\operatorname{span}(X)}\succeq(1-\xi)M/d,
 \qquad \|M\|_{\rm op}\le6\alpha d.
\end{equation}
\end{lemma}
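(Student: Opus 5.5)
The plan is to write $P_{\operatorname{span}(X)}=XX^\top/\|X\|^2$, which is valid almost surely because $X$ has a density. The problem then reduces to two facts. First, $\|X\|^2$ is at most about $d$ except on an event of small probability, which follows from the entropy bound. Second, discarding that event costs little of $v^\top Mv$ in every direction $v$, which follows from the logconcave fourth-moment bound (Lemma~\ref{lem:raw-fourth}). The operator-norm bound is a separate one-dimensional application of the variational formula.

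\emph{Operator norm.} Fix a unit vector $v$ and apply Lemma~\ref{lem:entropy-facts}~(\ref{item:entropy-variational}) with bounded truncations of $g=t\langle v,x\rangle^2$. The one-dimensional version of~\eqref{eq:entropy-moment} is obtained either directly or after data processing to the marginal. With $t=1/4$ it gives
\[
 \tfrac14\E\langle v,X\rangle^2\le k+\tfrac12\log2 .
\]
Hence $v^\top Mv\le4\alpha d+2\log2\le6\alpha d$, where the last step uses $\alpha d\ge1$.

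\emph{Norm concentration.} Take a parameter $h\in(0,1)$ and let $E=\{\|X\|^2>d(1+h)\}$. By Lemma~\ref{lem:gaussian-estimates}~(\ref{item:gaussian-norm-tail}), $b=\gamma_d(E)\le e^{-d(h-\log(1+h))/2}\le e^{-dh^2/6}$, using $h-\log(1+h)\ge h^2/3$ for $h\le1$. Write $p=\Prob(X\in E)$. Lemma~\ref{lem:entropy-facts}~(\ref{item:binary-entropy-bound}) gives $p\log(1/b)\le k+\log2\le2\alpha d$, again using $\alpha d\ge1$. Therefore
\[
 p\le\frac{12\alpha}{h^2}.
\]

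\emph{Capture.} On $E^c$ we have $XX^\top/\|X\|^2\succeq XX^\top/(d(1+h))$, and on $E$ we use $XX^\top/\|X\|^2\succeq0$. For a unit vector $v$ this gives
\[
 v^\top\E P_{\operatorname{span}(X)}v\ge\frac{\E\langle v,X\rangle^2-\E[\langle v,X\rangle^2\mathbf 1_E]}{d(1+h)} .
\]
The marginal $\langle v,X\rangle$ is logconcave by Theorem~\ref{thm:prekopa-marginal}. Cauchy--Schwarz and Lemma~\ref{lem:raw-fourth} then give
\[
 \E[\langle v,X\rangle^2\mathbf 1_E]\le\sqrt{\E\langle v,X\rangle^4}\sqrt p\le4\sqrt p\,v^\top Mv .
\]
Hence $\E P_{\operatorname{span}(X)}\succeq\frac{1-4\sqrt p}{1+h}\,M/d$. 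Now choose $h=\alpha^{1/4}$. Then $\sqrt p\le\sqrt{12}\,\alpha^{1/4}$, and
\[
 \frac{1-4\sqrt p}{1+h}\ge1-(4\sqrt{12}+1)\alpha^{1/4}\ge1-17\alpha^{1/4}.
\]
Since $\alpha\le10^{-16}$, this gives $\xi=17\alpha^{1/4}<1$, as claimed.

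The main obstacle is the tail-event step: the entropy only controls $p$ polynomially in $\alpha/h^2$, not exponentially. The loss of $v^\top Mv$ on $E$ must therefore be controlled uniformly in $v$, without assuming isotropy. The logconcave fourth-moment comparison is exactly what makes this uniform, because the error becomes relative to $v^\top Mv$ itself. The remaining constant bookkeeping is routine.
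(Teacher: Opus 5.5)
Your proof is correct and follows the paper's argument essentially step for step. It uses the same choice $h=\alpha^{1/4}$, the Gaussian norm tail combined with Lemma~\ref{lem:entropy-facts}~(\ref{item:binary-entropy-bound}) to bound $p$, Cauchy--Schwarz with Lemma~\ref{lem:raw-fourth} to make the tail loss relative to $v^\top Mv$, and the variational formula with $t=1/4$ for the operator norm. One small slip: $h-\log(1+h)\ge h^2/3$ fails near $h=1$, since $1-\log2<1/3$, but it holds for $h\le1/2$, which covers your $h=\alpha^{1/4}\le10^{-4}$, so nothing downstream is affected.
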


\begin{proof}
Set $h=\alpha^{1/4}$.
Lemma~\ref{lem:gaussian-estimates}~(\ref{item:gaussian-norm-tail}) gives
\[
 \gamma_d(\|x\|^2>d(1+h))
 \le e^{-d(h-\log(1+h))/2}\le e^{-dh^2/8}.
\]
By Lemma~\ref{lem:entropy-facts}~(\ref{item:binary-entropy-bound}),
\[
 p:=\Prob(\|X\|^2>d(1+h))
 \le\frac{8(k+\log2)}{dh^2}\le16h^2.
\]
For a unit vector $v$, set $Y=\langle v,X\rangle$.
Cauchy--Schwarz and Lemma~\ref{lem:raw-fourth} give
\begin{align*}
 v^\top\E P_{\operatorname{span}(X)}v
 &=\E\frac{Y^2}{\|X\|^2}\\
 &\ge\frac{\E Y^2-\sqrt{p\E Y^4}}{d(1+h)}
 \ge\frac{1-16h}{d(1+h)}v^\top Mv
 \ge\frac{1-17h}{d}v^\top Mv.
\end{align*}
Finally,
Lemma~\ref{lem:entropy-facts}~(\ref{item:entropy-variational}),
with bounded truncations of $\langle v,x\rangle^2/4$,
gives $v^\top Mv\le4k+2\log2\le6\alpha d$ for unit $v$.
\end{proof}

Call $d$ independent full-dimensional logconcave rows in $\R^d$
\emph{normalized} if they have finite total Gaussian relative entropy
and $\sum_i\E X_iX_i^\top=dI_d$. For such rows, write
\begin{equation}\label{eq:potential-definitions}
 k_i=\mathcal K(\mu(X_i)),\qquad K=\sum_i k_i,\qquad
 q=\frac{K}{d^2},\qquad \Psi=2\nu+\frac{q}{q_*},
\end{equation}
where $q_*>0$ is fixed. Also fix $0<\alpha\le10^{-16}$ and set
$\rho=17\alpha^{1/4}+6\alpha$.

\begin{lemma}[One-step potential change]
\label{lem:potential-one-step}
Suppose $d\ge1/\alpha$, $0<q_*\le\alpha/4$, and normalized rows
satisfy $q\le q_*$. If every $k_i\le\alpha d$, choose an index $r$
uniformly; otherwise choose the first index with $k_r>\alpha d$.
Expose $X_r$, project the other rows onto $X_r^\perp$, and normalize
again. The resulting potential in dimension $d-1$ satisfies
\begin{equation}\label{eq:potential-one-step}
                    \E\Psi'\le(1+4\rho/d)\Psi.
\end{equation}
\end{lemma}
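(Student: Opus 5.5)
The plan is to treat the entropy part and the flip part of $\Psi'$ separately and then handle the two selection rules as two cases. Two facts hold in both cases. First, by Lemma~\ref{lem:entropy-balance} the final renormalization in dimension $d-1$ does not increase total entropy and does not change flip probabilities. So it suffices to bound $K'\le\sum_{j\ne r}D_{\rm KL}((P_{X_r^\perp})_\#\mu_j\Vert\gamma_{X_r^\perp})$ and $\nu'$ for the projected rows before renormalizing. Second, the projected rows are again independent, full-dimensional and logconcave, and conditional on $X_r$ each $X_j$ with $j\ne r$ keeps its law, since rows are independent. We then have $q'=K'/(d-1)^2$ and $\Psi'=2\nu'+q'/q_*$.

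\emph{Case 1: every $k_i\le\alpha d$ and $r$ is uniform.} By Lemma~\ref{lem:projection-refresh}, $\E\nu'=\nu$. For the entropy, fix $j$. With probability $1/d$ we have $r=j$, and row $j$ is deleted, removing all of $k_j$. Otherwise $r$ is uniform on the other indices, and $V=X_r/\|X_r\|$ is independent of $X_j$. By Lemma~\ref{lem:entropy-capture} and the normalization $\sum_i\E X_iX_i^\top=dI_d$,
\[
 \E[VV^\top\mid r\ne j]\succeq\frac{1-\xi}{d(d-1)}\sum_{i\ne j}M_i=\frac{1-\xi}{d(d-1)}(dI_d-M_j)\succeq\frac{(1-\xi)(1-6\alpha)}{d-1}I_d\succeq\frac{1-\rho}{d-1}I_d .
\]
Lemma~\ref{lem:average-projection-entropy} then gives expected remaining entropy at most $(1-\tfrac{1-\rho}{d-1})k_j$ on this event. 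Summing over $j$,
\[
\E K'\le\Bigl(1-\frac1d\Bigr)\Bigl(1-\frac{1-\rho}{d-1}\Bigr)K=\frac{d-2+\rho}{d}K .
\]
Since $d(d-2+\rho)=(d-1)^2-1+\rho d$, this yields $\E q'\le(1+\rho d/(d-1)^2)q\le(1+2\rho/d)q$. Hence $\E\Psi'\le2\nu+(1+2\rho/d)q/q_*\le(1+4\rho/d)\Psi$.

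\emph{Case 2: some $k_r>\alpha d$ and $r$ is fixed.} Here the flip term may grow. By Lemma~\ref{lem:projection-refresh}, $\E\nu'=(d\nu-\nu_r)/(d-1)\le\nu+\nu/(d-1)$, and with $\nu\le1/2$ the increase of $2\nu$ is at most $1/(d-1)$. For the entropy we use only data processing (Lemma~\ref{lem:entropy-facts}(\ref{item:data-processing})) row by row, which gives $K'\le K-k_r<K-\alpha d$. Then
\[
\frac{q'-q}{q_*}\le\frac{K(2d-1)}{q_*d^2(d-1)^2}-\frac{\alpha d}{q_*(d-1)^2}\le\frac{2d-1}{(d-1)^2}-\frac{4d}{(d-1)^2},
\]
using $K\le q_*d^2$ and $\alpha/q_*\ge4$. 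Adding the flip increase $1/(d-1)=(d-1)/(d-1)^2$ gives a net change of at most $-(d+2)/(d-1)^2<0$. So $\E\Psi'\le\Psi$ in this case, which is stronger than needed.

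The step that needs the most care is Case 1's entropy bound. The random direction $V$ there is a mixture over $r\ne j$, and it must be independent of $X_j$ for Lemma~\ref{lem:average-projection-entropy} to apply. We must also check that the spread bound from Lemma~\ref{lem:entropy-capture} survives after removing the $M_j$ term, which is where $\|M_j\|_{\rm op}\le6\alpha d$ enters. A further point to verify is that Lemma~\ref{lem:average-projection-entropy} measures entropy on $V^\perp$ against $\gamma_{V^\perp}$, consistent with the fixed orthonormal coordinates used for the projected rows.
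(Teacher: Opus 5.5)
Your proof is correct and follows the paper's argument essentially step for step. In the low-entropy case you use the same spread bound, from Lemma~\ref{lem:entropy-capture} together with $\|M_j\|_{\rm op}\le6\alpha d$, fed into Lemma~\ref{lem:average-projection-entropy}; in the high-entropy case you set the entropy saved by deletion alone against the flip increase of at most $1/(d-1)$ from Lemma~\ref{lem:projection-refresh}, with only cosmetic arithmetic differences (for example, you get $(1+2\rho/d)q$ where the paper states $(1+4\rho/d)q$).
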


\begin{proof}
We consider two cases. When every row has low entropy, uniform deletion
and projection preserve the expected flip probability and allow only a
small multiplicative increase in expected normalized entropy. Otherwise,
deleting a high-entropy row compensates for any increase in expected flip
probability.

The index $r$ is chosen before any row is exposed. Conditional on $(r,X_r)$,
the remaining rows retain their distributions and remain independent,
and their projections onto $X_r^\perp$ are full-dimensional and logconcave.
Apply the data processing inequality to each surviving row's distribution
and the standard Gaussian, projecting both onto $X_r^\perp$. Thus projection
cannot increase any row's relative entropy. Normalizing the projected rows
in measurable orthonormal coordinates by Lemma~\ref{lem:entropy-balance}
cannot increase their total relative entropy either, so
\[
                         K'\le\sum_{i\ne r}k_i=K-k_r.
\]
Expectations below are over $r$ and $X_r$, with the current row distributions fixed.

First suppose every $k_i\le\alpha d$. Fix $i$ and condition on $r\ne i$.
Then $r$ is uniform among the other $d-1$ indices, and
$V=\overrightarrow{X_r}$ is independent of $X_i$.
Set $M_j=\E X_jX_j^\top$ and $\xi=17\alpha^{1/4}$.
Lemma~\ref{lem:entropy-capture} and $\sum_jM_j=dI_d$ give
\begin{equation}\label{eq:row-direction-spread}
 \E VV^\top
 =\frac1{d-1}\sum_{r\ne i}\E P_{\operatorname{span}(X_r)}
 \succeq\frac{1-\xi}{d-1}(I_d-M_i/d)
 \succeq\frac{(1-\xi)(1-6\alpha)}{d-1}I_d.
\end{equation}
Since $(1-\xi)(1-6\alpha)\ge1-\rho$, apply
Lemma~\ref{lem:average-projection-entropy} with
$\delta=(1-\rho)/(d-1)$. Each row survives with probability $(d-1)/d$, so
\begin{equation}\label{eq:cheap-cost}
 \E K'
 \le\frac{d-1}{d}\left(1-\frac{1-\rho}{d-1}\right)K
 =\left(1-\frac{2-\rho}{d}\right)K.
\end{equation}
Thus deletion removes $K/d$ and projection at least $(1-\rho)K/d$
in expectation. Dividing by $(d-1)^2$ gives
\begin{equation}\label{eq:normalized-drift}
 \E q'
 \le\left(1+\frac{\rho d-1}{(d-1)^2}\right)q
 \le\left(1+\frac{4\rho}{d}\right)q.
\end{equation}
Lemma~\ref{lem:projection-refresh} gives $\E\nu'=\nu$, proving
the potential bound in this case.

Now suppose $k_r>\alpha d$. Deletion alone and $q\le q_*\le\alpha/4$
give
\begin{equation}\label{eq:expensive-decrement}
 q-q'\ge\frac{\alpha d-(2d-1)q}{(d-1)^2}
       \ge\frac{\alpha}{2d}.
\end{equation}
By Lemma~\ref{lem:projection-refresh},
$\E\nu'-\nu=(\nu-\nu_r)/(d-1)\le1/(2(d-1))$. Therefore
\[
 \E\Psi'-\Psi\le\frac1{d-1}-\frac{\alpha}{2dq_*}\le0,
\]
since $\alpha/(2dq_*)\ge2/d\ge1/(d-1)$.
\end{proof}

\begin{proof}[Proof of Theorem~\ref{thm:sign-entropy}]
Fix the initial dimension $m$, and set
\begin{equation}\label{eq:entropy-parameters}
 \begin{aligned}
  \alpha&=(10^4\log m)^{-4}, & D&=\lceil1/\alpha\rceil,\\
  q_*&=\frac1{64D^2}, & \rho&=17\alpha^{1/4}+6\alpha.
 \end{aligned}
\end{equation}
For sufficiently large $m$, we have $m\ge D$. These parameters remain
fixed as the dimension decreases. We prove by induction on $d$ that
every collection of normalized rows with $D\le d\le m$ satisfies
\begin{equation}\label{eq:combined-entropy-sign}
 \Psi\ge\frac34\prod_{\ell=D+1}^d
                    \left(1+\frac{4\rho}{\ell}\right)^{-1}.
\end{equation}
Here $\Psi$ is defined in~\eqref{eq:potential-definitions}.

If $q>q_*$, then $\Psi>1$ and the bound holds. If $d=D$ and $q\le q_*$,
then $K\le q_*D^2=1/64$.
For each $i$, all rows together with an independent replacement of
row $i$ have relative entropy $K+k_i\le2K$ with respect to independent
standard Gaussians. Pinsker's inequality bounds the total variation
distance by $\sqrt K\le1/8$. Conditional on the other Gaussian rows,
the original and replacement determinant signs are independent fair
signs, so the Gaussian flip probability is $1/2$. Hence
\[
                         \nu_i\ge\frac12-\sqrt K\ge\frac38.
\]
Thus the base case gives
\begin{equation}\label{eq:entropy-boundary}
                         \Psi\ge3/4.
\end{equation}

For $d>D$ and $q\le q_*$, apply Lemma~\ref{lem:potential-one-step};
its hypotheses $d\ge1/\alpha$ and $q_*\le\alpha/4$ hold.
With probability $1$ over the exposed row, the induction hypothesis applies to the
conditional collection of normalized remaining rows in dimension $d-1$.
Averaging it and using
$\E\Psi'\le(1+4\rho/d)\Psi$ proves~\eqref{eq:combined-entropy-sign}.

In particular, $\Psi\ge1/4$ at $d=m$, since
\begin{equation}\label{eq:summable-drift}
 \prod_{\ell=D+1}^m\left(1+\frac{4\rho}{\ell}\right)
 \le\exp\left(4\rho\sum_{\ell=D+1}^m\frac1\ell\right)
 \le\exp(4\rho\log m)<3.
\end{equation}
Now normalize the original rows by Lemma~\ref{lem:entropy-balance}.
This does not increase entropy or change $\nu$. Since $D\le2/\alpha$,
we have $q_*\ge\alpha^2/256$. Taking the theorem's constant $c$
sufficiently small, its entropy hypothesis gives $q\le q_*/16$.
Consequently,
\begin{equation}\label{eq:initial-potential}
 2\nu=\Psi-q/q_*\ge\frac14-\frac1{16}=\frac3{16}.
\end{equation}
Thus $\nu\ge3/32>1/64$, proving the claim.
\end{proof}

\subsection{Large relative entropy implies small part measure}
\label{sec:partition-proof}

We combine sign stability and the entropy lower bound using the
relative-entropy identity for Gaussian restrictions.

\begin{proof}[Proof of Theorem~\ref{thm:intro-main-partition}]
Let $c,C$ be the constants in Lemma~\ref{lem:localization-sign},
decreasing $c$ if necessary so that it is also valid in
Theorem~\ref{thm:sign-entropy}. Set
\[
 t=c\eta,\qquad
 \varepsilon_\eta=\frac{t}{C\log(e/t)}.
\]
Lemma~\ref{lem:stable-parts} supplies a sign-stable subfamily
$\mathcal S'$ of mass at least $\eta/2$.
Fix $P=F_1\times\cdots\times F_{n-1}\in\mathcal S'$ and write
$p=\bar\gamma_n^{\otimes(n-1)}(P)$.
Under the conditional product distribution on $P$, set
$\widetilde X_i=\sqrt n\,P_{u(P)^\perp}R_i$ in fixed orthonormal
coordinates. These rows are independent and full-dimensional logconcave,
and their flip probability remains $\nu_{u(P)}<1/64$.
The map $\sqrt n\,P_{u(P)^\perp}$ sends $\bar\gamma_n$ to
$\gamma_{n-1}$, so their total relative entropy satisfies
\[
 \sum_iD_{\rm KL}(\mu(\widetilde X_i)\Vert\gamma_{n-1})
   \le\sum_i\log\frac1{\bar\gamma_n(F_i)}=\log\frac1p
\]
by Lemma~\ref{lem:entropy-facts}~(\ref{item:restriction-entropy})
and~(\ref{item:data-processing}). The contrapositive of
Theorem~\ref{thm:sign-entropy} implies
\[
 \log\frac1p>c\frac{(n-1)^2}{\log^8(n-1)}
                 \ge\frac c4\frac{n^2}{\log^8 n}.
\]
Choose $a=c/4$. This proves the measure bound and,
by summing their measures, the stated cardinality bound.
\end{proof}

The decision-tree consequence follows by counting the small leaves.

\begin{proof}[Proof of Corollary~\ref{cor:variable-depth-partition}]
Let $\mathcal S'$ be the subfamily from
Theorem~\ref{thm:intro-main-partition}, and set $h=an^2/\log^8 n$.
It has total probability at
least $\eta/2$, and each of its parts has probability at most
$e^{-h}$. For every integer $t\ge0$, there are at most $(2n+1)^t$ leaves of depth at most $t$:
pad their paths to length $t$, noting that no leaf path is a prefix
of another. Hence at most $(2n+1)^t e^{-h}$ of the probability of
$\mathcal S'$ lies at depth at most $t$. Therefore
\begin{equation}\label{eq:partition-query-tail}
                 \Prob(T>t)\ge\eta/2-(2n+1)^t e^{-h}.
\end{equation}

Take $t=\lfloor h/(2\log(2n+1))\rfloor$, so that
$(2n+1)^t e^{-h}\le e^{-h/2}$. Increasing $n_\eta$ if necessary
ensures $e^{-h/2}\le\eta/4$ for $n\ge n_\eta$, and hence
$\Prob(T>t)\ge\eta/4$. Since $\log(2n+1)\le2\log n$, increasing
$n_\eta$ also ensures $t\ge an^2/(8\log^9 n)$.
Thus take $c_\eta=a/8$.
The expectation bound follows from $\E T\ge t\Prob(T>t)$.
\end{proof}

\section{Complexity in the oracle model}
\label{sec:oracle-complexity}

A point in $K_2(R)\setminus K_1(R)$ is called \emph{separating}.
We show that, on an event of high Gaussian probability, every separating
point localizes the hidden normal. The product-partition theorem then
bounds the probability of finding such a point with few queries.

\begin{lemma}
\label{lem:gaussian-rounding}
Let $R$ have independent $N(0,1/n)$ entries and set $s=R^{-1}e_n$,
which is defined almost surely. There is a universal constant $C_0\ge1$
such that the event
\begin{equation}\label{eq:rounding-event}
 \mathcal H=\left\{\|R\|_{\rm op}\le6,\quad
             \|R^{-1}\|_{\rm op}\le C_0n,\quad \|R_n\|<2,\quad
             \frac{\sqrt n}{C_0}\le\|s\|\le C_0\sqrt n\right\}
\end{equation}
satisfies $\Prob(\mathcal H^c)\le1/2^{14}$.
\end{lemma}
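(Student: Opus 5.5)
We bound the probability of each of the four complementary events separately and take a union bound, choosing $C_0$ large enough at the end. Each failure probability will be at most $2^{-16}$, so the total is at most $2^{-14}$.

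\textbf{The two operator-norm conditions.} For $\|R\|_{\rm op}>6$, Lemma~\ref{lem:gaussian-estimates}~(\ref{item:gaussian-operator-tail}) gives probability at most $2e^{-cn}$. This is tiny for $n$ large. For $\|R^{-1}\|_{\rm op}>C_0n$, Lemma~\ref{lem:gaussian-estimates}~(\ref{item:gaussian-inverse-tail}) with $x=C_0n$ gives at most $2.35/C_0$. This is at most $2^{-16}$ once $C_0\ge 2.35\cdot2^{16}$.

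\textbf{The public row.} We have $\|R_n\|^2=\|G\|^2/n$ with $G\sim\gamma_n$. Lemma~\ref{lem:gaussian-estimates}~(\ref{item:gaussian-norm-tail}) with $h=3$ gives $\Prob(\|R_n\|\ge2)\le e^{-n(3-\log4)/2}$, again negligible.

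\textbf{The norm of $s$: upper bound.} This follows immediately: $\|s\|\le\|R^{-1}\|_{\rm op}\le C_0n$, but that only gives $C_0n$, not $C_0\sqrt n$. So we need a genuinely different argument. By construction, $s$ is orthogonal to $R_1,\ldots,R_{n-1}$ and satisfies $\langle R_n,s\rangle=1$. Hence $s=w/\langle R_n,w\rangle$, where $w$ is a unit normal to $\operatorname{span}(R_1,\ldots,R_{n-1})$. Since $R_n$ is independent of the first $n-1$ rows, $\langle R_n,w\rangle\sim N(0,1/n)$ conditionally on $w$, so
\[
\|s\|=\frac{\sqrt n}{|Z|},\qquad Z\sim N(0,1).
\]
The upper bound $\|s\|\le C_0\sqrt n$ is the event $|Z|\ge1/C_0$, which fails with probability at most $2/(C_0\sqrt{2\pi})\le 1/C_0$; choosing $C_0\ge2^{16}$ suffices.

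\textbf{The norm of $s$: lower bound.} The bound $\|s\|\ge\sqrt n/C_0$ is the event $|Z|\le C_0$. It fails with probability at most $2e^{-C_0^2/2}$ by Lemma~\ref{lem:gaussian-estimates}~(\ref{item:gaussian-scalar-tail}), which is negligible.

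The only step needing care is noticing that the operator-norm bound on $R^{-1}$ is too weak for $\|s\|$. The exact distributional identity $\|s\|=\sqrt n/|Z|$ resolves it. Finally, take $C_0=2^{17}$ and $n$ large, and apply the union bound.
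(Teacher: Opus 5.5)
Your proof is correct and is essentially the paper's. The first three conditions come directly from Lemma~\ref{lem:gaussian-estimates}, and the key step is the same identity $s=v(A)/\langle R_n,v(A)\rangle$, so $\|s\|$ has the law of $\sqrt n/|Z|$; both tails of $\|s\|$ and a union bound then finish.

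There is one small bookkeeping slip. Your choice $C_0=2^{17}$ does not meet your own requirement $C_0\ge2.35\cdot2^{16}$, and there are five failure events rather than four. The conclusion still stands, because the actual total is $(2.35+\sqrt{2/\pi})/C_0$ plus exponentially small terms, which is below $2^{-15}<2^{-14}$.
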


\begin{proof}
The first three bounds follow from Lemma~\ref{lem:gaussian-estimates}.
Let $A$ consist of the first $n-1$ rows of $R$. Since $As=0$ and
$\langle R_n,s\rangle=1$,
\[
 s=\frac{v(A)}{\langle R_n,v(A)\rangle},\qquad
 \sqrt n\,\langle R_n,v(A)\rangle\mid A\sim N(0,1).
\]
Thus $\|s\|$ has the distribution of $\sqrt n/|G|$, where $G\sim N(0,1)$.
Using $\Prob(|G|<a)\le\sqrt{2/\pi}\,a$ and a union bound gives
\[
 \Prob(\mathcal H^c)
 \le3e^{-cn}+\frac{2.35+\sqrt{2/\pi}}{C_0}+2e^{-C_0^2/2}
 \le\frac1{2^{14}}.
\]
Choose $C_0$ sufficiently large, then use the standing lower bound on $n$.
\end{proof}

For the rest of the section, use $s,\mathcal H,C_0$ from
Lemma~\ref{lem:gaussian-rounding}. Set $L=2C_0^2n/\varepsilon_0$, where
$0<\varepsilon_0<1$ is the localization radius from
Theorem~\ref{thm:intro-main-partition} with $\eta=1/8$.
In particular, $L\ge n$.

\subsection{Distinguishing the hard pair}
\label{sec:oracle-partition}

\begin{lemma}[Separating query]
\label{lem:separator-finding}
Let $R$ have independent $N(0,1/n)$ entries. Consider the hard pair
$K_1(R),K_2(R)$ from Section~\ref{sec:hard-distribution} and any
randomized membership-query algorithm run on either body, with $R_n$
revealed. Let $\tau$ be the index
of its first query in $K_2(R)\setminus K_1(R)$, or $\infty$ if no such
query occurs. There is a universal constant $c_0>0$ such that
\begin{equation}\label{eq:early-separating-query}
 t_n=\left\lfloor c_0\frac{n^2}{\log^9 n}\right\rfloor,
 \qquad \Prob(\tau\le t_n)<\frac14.
\end{equation}
The probability is over the original, unconditioned Gaussian matrix
and the algorithm's independent internal randomness.
\end{lemma}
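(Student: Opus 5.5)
We argue by contradiction: suppose $\Prob(\tau\le t_n)\ge 1/4$. From this we will build a product partition whose separating leaves localize the normal, and then play it against Corollary~\ref{cor:variable-depth-partition}.

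\emph{Step 1: reduce to a deterministic tree.} Fix the algorithm's internal randomness and $R_n$. Run the algorithm with the strengthened oracle of Section~\ref{sec:product-partitions}, truncated at $\min(\tau,t_n)$. Before $\tau$, every query lies outside $K_2(R)\setminus K_1(R)$, and there both bodies give the same answer. So the execution up to $\tau$ does not depend on $j$, and we may simulate it with the oracle for $K_2(R)$. The strengthened oracle reveals the first violated row and its sign. This lets us detect $\tau$ from the transcript: a query is separating exactly when all hidden slabs hold and $|\langle R_n,q\rangle|\in(L,3L]$, and $R_n$ is public.

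The result is a deterministic decision tree on $(R_1,\dots,R_{n-1})$ with at most $2n+1$ responses per query and depth at most $t_n$. Its leaves form a countable convex product partition. Let $\mathcal S$ be the family of leaves that end in a separating query $q$, and set $u(P)=\overrightarrow{q}$. Averaging over the randomness and $R_n$, some fixing achieves conditional probability at least $1/4-\Prob(\mathcal H^c)-(\text{small})$ of reaching $\mathcal S$. The events involving $R_n$ (namely $\|R_n\|<2$) should be handled by choosing a good $R_n$, using independence of $R_n$ from the hidden rows.

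\emph{Step 2: a separating point on $\mathcal H$ localizes the normal.} Write $q=As'+\beta s$ in the basis given by $R^{-1}$. Here $\langle R_i,q\rangle=y_i$ with $|y_i|\le1$ for $i<n$, and $|y_n|>L$. Then $q=R^{-1}y=y_n s+R^{-1}(y_1,\dots,y_{n-1},0)$. The second term has norm at most $\|R^{-1}\|_{\rm op}\sqrt{n}\le C_0 n^{3/2}$, while $\|y_ns\|\ge L\sqrt n/C_0=2C_0n^{3/2}/\varepsilon_0$. Hence $\distp(q,s)\le\varepsilon_0$, up to a harmless constant absorbed into the choice of $L$. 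Since $s\parallel v(A)$, we get $\distp(v(A),u(P))\le\varepsilon_0$ whenever $R\in\mathcal H$ and the leaf is separating.

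\emph{Step 3: pass from pointwise to conditional localization.} The hypothesis of Theorem~\ref{thm:intro-main-partition} asks for conditional probability at least $1-1/1024$ within each part. Let $\mathcal S_{\rm loc}$ be the set of parts $P\in\mathcal S$ with $\Prob(\mathcal H^c\mid P)\le 1/1024$. By Markov, the parts violating this carry total measure at most $1024\,\Prob(\mathcal H^c)\le 2^{10}/2^{14}=1/16$. We need to adjust for the fact that $\mathcal H$ involves $R_n$, which was fixed; a good choice of $R_n$ from Step 1 handles this. This leaves $\mathcal S_{\rm loc}$ with measure at least $1/4-1/16-o(1)\ge1/8=\eta$.

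\emph{Step 4: conclude.} Corollary~\ref{cor:variable-depth-partition} with $\eta=1/8$ gives a depth at least $c_\eta n^2/\log^9 n$ with positive probability. Taking $c_0<c_{1/8}$ contradicts the fact that the truncated tree has depth at most $t_n$.

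The main obstacle is the bookkeeping of Step 3. The Markov loss from $\mathcal H^c$, the averaging over $R_n$ and the internal randomness (through a Fubini argument that selects a fixing achieving the average), and the constant in the localization must all fit within the budget $1/4\to1/8$. I expect the slack provided by the $2^{-14}$ bound in Lemma~\ref{lem:gaussian-rounding} was designed for exactly this.
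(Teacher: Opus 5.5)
Your proposal is correct and follows the paper's proof essentially step for step. It uses the same enhanced-oracle stopped tree; the paper simulates on $K_1$ rather than $K_2$, which is immaterial. It uses the same decomposition $z=y_ns+e$, which gives $\distp(z,v(A))\le 2C_0^2n/L=\varepsilon_0$ exactly, so no extra constant is needed. It makes the same Markov discard of parts with $\Prob(\mathcal H^c\mid P)>1/1024$, at cost at most $1/16$, and concludes with Corollary~\ref{cor:variable-depth-partition} at $\eta=1/8$. The one point to tighten is the order in Steps 1 and 3: the randomness and $R_n$ must be fixed \emph{after} the discard. The retained separating measure has average at least $1/4-1/16=3/16$ over the randomness and $R_n$, and you choose a fixing attaining at least this average, which is exactly what the paper does and what your closing Fubini remark describes.
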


\begin{proof}
We first show that every separating query localizes the normal on
$\mathcal H$. We then apply the product-partition theorem to rule out
finding such queries too quickly.

Fix $R\in\mathcal H$ and a separating point $z$. Set $y=Rz$.
Since $|y_i|\le1$ for $i<n$ and $|y_n|>L$,
\[
 z=y_ns+e,\qquad
 e=R^{-1}(y_1,\ldots,y_{n-1},0)^\top,\qquad
 \|e\|\le C_0n^{3/2}.
\]
The main term $y_ns$ has length at least $L\sqrt n/C_0$.
For nonzero vectors $a,a+e$, the triangle inequality gives
$\|\overrightarrow{a+e}-\overrightarrow a\|\le2\|e\|/\|a\|$.
Apply this with $a=y_ns$ to obtain
\begin{equation}\label{eq:query-localizes}
 \distp(z,v(A))\le\frac{2\|e\|}{|y_n|\|s\|}
                   \le\frac{2C_0^2n}{L}
                   \le\varepsilon_0.
\end{equation}
This holds simultaneously for every separating point, including
one chosen adaptively.

Run the algorithm on the two bodies with the same internal randomness.
Until its first separating query the responses, and hence the queries
and stopping decisions, are identical. Thus the index $\tau$ of the
first separating query is the same in both runs.

Suppose, for contradiction, that $\Prob(\tau\le t_n)\ge1/4$.
Truncate after $t_n$ queries. Fix the internal
randomness and the public row, use the enhanced oracle of
Section~\ref{sec:product-partitions}, and run on $K_1$.
Stop at the response to the first separating query.
This stopping rule is determined by the enhanced response:
the public row certifies
$L<|\langle R_n,z\rangle|\le3L$, and a NO response identifying
row $n$ certifies that the first $n-1$ constraints hold.
The stopped tree, including its other terminal leaves, therefore
partitions the first $n-1$ rows into convex product parts.
Each separating leaf $P$ has a fixed query $z(P)\ne0$; set
$u(P)=z(P)/\|z(P)\|$.

Discard separating leaves on which the conditional probability of
$\mathcal H^c$ exceeds $1/1024$. Averaged over the internal randomness
and the public row, their total discarded probability is at most
\[
                    1024\,\Prob(\mathcal H^c)\le1/16.
\]
Here we discard whole parts, not condition the rows on $\mathcal H$.
The retained separating leaves therefore have
average total measure at least $3/16$. Some fixed choice of the
randomness and public row therefore gives retained measure at least
$1/8$. With that choice fixed, the hidden rows still have their original
Gaussian product distribution. By~\eqref{eq:query-localizes}, each retained part
localizes its normal around $u(P)$ with probability at least $1-1/1024$.
Corollary~\ref{cor:variable-depth-partition}, with $\eta=1/8$, forces the
tree to have depth at least $c_{1/8}n^2/\log^9 n$.
Taking $c_0<c_{1/8}$ contradicts the depth bound $t_n$ and proves the lemma.
\end{proof}

\begin{proof}[Proof of Theorem~\ref{thm:intro-hard-pair-distinguishing}]
Couple the runs on $K_1$ and $K_2$ using the same internal randomness,
and let $T_j$ be the query count on $K_j$. Until the first separating
query, indexed by $\tau$, the runs have identical responses and
stopping decisions. If no separating query occurs, they either both fail
to terminate or give the same output. They can therefore be correct
for at most one value of $j$, so
\[
 \Prob(\text{correct})\le\frac12+\frac12\Prob(\tau<\infty).
\]
Success probability at least $2/3$ implies $\Prob(\tau<\infty)\ge1/3$.
A first separating query after $t_n$ requires more than $t_n$ queries
in both runs. Lemma~\ref{lem:separator-finding} therefore gives, for
either $j$,
\[
 \Prob(T_j>t_n)\ge\Prob(\tau<\infty)-\Prob(\tau\le t_n)>\frac1{12}.
\]
The expectation lower bound follows immediately.
\end{proof}

\subsection{Optimization, sampling, volume}

On $\mathcal H$, the same bodies satisfy the required radius bounds:
\begin{equation}\label{eq:hard-rounding}
                B^n\subseteq6K_j(R)\subseteq Cn^{3/2}B^n,
                \qquad j\in\{1,2\}.
\end{equation}
Indeed, $RK_j=[-1,1]^{n-1}\times[-b_j L,b_j L]$ contains $B^n$,
so $\|R\|_{\rm op}\le6$ gives the inner inclusion. For the outer
inclusion, write $x=y_ns+R^{-1}(y_1,\ldots,y_{n-1},0)^\top$ as above.
Every $x\in K_j$ satisfies
\[
 \|x\|\le3L C_0\sqrt n+C_0n^{3/2}=O(n^{3/2}),
\]
since $L$ is a fixed universal multiple of $n$. Taking $C$ sufficiently
large gives~\eqref{eq:hard-rounding}.
A query $x$ to $6K_j$ is simulated by $x/6$ to $K_j$,
with no change in query count. We use the accuracy guarantees only on
$\mathcal H$, while applying Lemma~\ref{lem:separator-finding} under the
original, unconditioned Gaussian distribution.

\begin{proof}[Proof of Theorems~\ref{thm:intro-optimization}
and~\ref{thm:intro-sampling} and
Corollary~\ref{cor:intro-volume}]
We show that each task yields a separating query with probability at least
$1/3$ for every $R\in\mathcal H$. A common application of
Lemma~\ref{lem:separator-finding} then gives all three lower bounds.

For optimization, use the public row to set $c=R_n/\|R_n\|$. The two
optimum values are $h$ and $3h$, where $h=6L/\|R_n\|>3n$ on $\mathcal H$.
The multiplicative guarantee gives disjoint output intervals $[h/2,h]$
and $[3h/2,3h]$. Since $h>3n$, the additive intervals $[h-n,h]$ and
$[3h-n,3h]$ lie inside the respective multiplicative intervals.
Thus no output is valid for both bodies under either guarantee.

For volume, $6K_1$ and $6K_2$ have volume ratio three: under the common
linear map $R$, they become boxes differing only by a factor of three in
their last side length. Thus no output interval with width ratio at most
two can be valid for both.

For either problem, fix $R\in\mathcal H$ and couple the runs on the two
bodies using the same randomness. By the union bound, both runs succeed
with probability at least $1/3$. On that event their outputs differ, so a
separating query must occur before either run stops: until such a query,
the runs have identical responses and stopping decisions.

For sampling on $6K_2$, the output lies in $6(K_2\setminus K_1)$ with
probability at least $2/3-1/10>1/2$, since the uniform distribution assigns
this region probability $2/3$. Append one query at the output $Z$, scaled
back to $Z/6$ on $K_2$. This produces a separating query with at least
that probability.

Let $T$ be the original algorithm's query count, on $6K_1$ for optimization
and volume, or on $6K_2$ for sampling. Let $\tau$ be the first
separating-query index in the corresponding simulated procedure.
For every $R\in\mathcal H$, the preceding arguments give
\[
                         \Prob(\tau<\infty\mid R)\ge\frac13.
\]
If $T<t_n$ and $\tau<\infty$, then $\tau\le T+1\le t_n$.
Applying Lemma~\ref{lem:separator-finding} under the unconditioned Gaussian
distribution therefore gives
\[
 \Prob(\mathcal H,\ T\ge t_n)
 \ge\frac13\Prob(\mathcal H)-\Prob(\tau\le t_n)
 >\frac13\Prob(\mathcal H)-\frac14
 >\frac1{100}\Prob(\mathcal H).
\]
Averaging over $R\in\mathcal H$ supplies an admissible input with
$\Prob(T\ge t_n)>1/100$. Since $t_n=\Omega(n^2/\log^9 n)$ and
$\E T\ge t_n\Prob(T\ge t_n)$, both the tail and expectation bounds follow.
\end{proof}

\bibliographystyle{abbrv}
\bibliography{acg}

\begin{thebibliography}{10}

\bibitem{Barany1987}
I.~B{\'a}r{\'a}ny and Z.~F{\"u}redi.
\newblock Computing the volume is difficult.
\newblock {\em Discrete Comput. Geom.}, 2(4):319--326, 1987.

\bibitem{BL76}
H.~J. Brascamp and E.~H. Lieb.
\newblock On extensions of the {Brunn-Minkowski} and {Pr{\'e}kopa-Leindler}
  theorems, including inequalities for log concave functions, and with an
  application to the diffusion equation.
\newblock {\em Journal of Functional Analysis}, 22(4):366--389, 1976.

\bibitem{CarlenCorderoErausquin2009}
E.~A. Carlen and D.~Cordero-Erausquin.
\newblock Subadditivity of the entropy and its relation to {Brascamp--Lieb}
  type inequalities.
\newblock {\em Geometric and Functional Analysis}, 19(2):373--405, 2009.

\bibitem{CarlenLieb2008}
E.~A. Carlen and E.~H. Lieb.
\newblock {Brascamp--Lieb} inequalities for non-commutative integration.
\newblock {\em Documenta Mathematica}, 13:553--584, 2008.

\bibitem{CCLW20}
S.~Chakrabarti, A.~M. Childs, T.~Li, and X.~Wu.
\newblock Quantum algorithms and lower bounds for convex optimization.
\newblock {\em Quantum}, 4:221, 2020.

\bibitem{CV18}
B.~Cousins and S.~Vempala.
\newblock {Gaussian} cooling and {$O^*(n^3)$} algorithms for volume and
  {Gaussian} volume.
\newblock {\em SIAM Journal on Computing}, 47(3):1237--1273, 2018.

\bibitem{Din57}
A.~Dinghas.
\newblock {\"U}ber eine {Klasse} superadditiver {Mengenfunktionale} von
  {Brunn-Minkowski-Lusternikschem} {Typus}.
\newblock {\em Math. Zeitschr.}, 68:111--125, 1957.

\bibitem{Durrett2019}
R.~Durrett.
\newblock {\em Probability: Theory and Examples}.
\newblock Cambridge University Press, fifth edition, 2019.

\bibitem{DyerFK89}
M.~E. Dyer, A.~M. Frieze, and R.~Kannan.
\newblock A random polynomial time algorithm for approximating the volume of
  convex bodies.
\newblock In {\em STOC}, pages 375--381, 1989.

\bibitem{E86}
G.~Elekes.
\newblock A geometric inequality and the complexity of computing volume.
\newblock {\em Discrete \& Computational Geometry}, pages 289--292, 1986.

\bibitem{GoyalRademacherVempala2015}
N.~Goyal, L.~Rademacher, and S.~Vempala.
\newblock Query complexity of sampling and small geometric partitions.
\newblock {\em Combinatorics, Probability and Computing}, 24(5):733--753, 2015.

\bibitem{GLS}
M.~Gr{\"o}tschel, L.~Lov{\'a}sz, and A.~Schrijver.
\newblock {\em Geometric Algorithms and Combinatorial Optimization}.
\newblock Springer, 1988.

\bibitem{JLLV26}
H.~Jia, A.~Laddha, Y.~T. Lee, and S.~Vempala.
\newblock Reducing isotropy and volume to {KLS}: Faster rounding and volume
  algorithms.
\newblock {\em Journal of the ACM}, 73(2):1--21, 2026.

\bibitem{Kerger2026}
P.~Kerger.
\newblock Closing the oracle-complexity gap in derivative-free convex
  optimization: A near-quadratic lower bound from exact function values.
\newblock arXiv:2607.13335, 2026.
\newblock \url{https://arxiv.org/abs/2607.13335}.

\bibitem{Kh80}
L.~G. Khachiyan.
\newblock Polynomial algorithms in linear programming.
\newblock {\em USSR Computational Mathematics and Mathematical Physics},
  20:53--72, 1980.

\bibitem{LSV19}
Y.~T. Lee, A.~Sidford, and S.~S. Vempala.
\newblock Efficient convex optimization with oracles.
\newblock In {\em Building Bridges II}, Bolyai Society Mathematical Studies,
  pages 317--335. Springer Berlin Heidelberg, 2019.

\bibitem{Lei72}
L.~Leindler.
\newblock On a certain converse of {H}\"older's inequality {II}.
\newblock {\em Acta Sci. Math. Szeged}, 33:217--223, 1972.

\bibitem{LV07}
L.~Lov{\'a}sz and S.~Vempala.
\newblock The geometry of logconcave functions and sampling algorithms.
\newblock {\em Random Struct. Algorithms}, 30(3):307--358, 2007.

\bibitem{Murawski2026}
D.~Murawski.
\newblock Comparing moments of real log-concave random variables.
\newblock {\em Bernoulli}, 32(3):2403--2426, 2026.

\bibitem{PolyanskiyWu2025}
Y.~Polyanskiy and Y.~Wu.
\newblock {\em Information Theory: From Coding to Learning}.
\newblock Cambridge University Press, 2025.

\bibitem{Pr73b}
A.~Pr{\'e}kopa.
\newblock Logarithmic concave measures with applications to stochastic
  programming.
\newblock {\em Acta Sci. Math. Szeged}, 32:301--316, 1971.

\bibitem{Prekopa1973}
A.~Pr{\'e}kopa.
\newblock On logarithmic concave measures and functions.
\newblock {\em Acta Scientiarum Mathematicarum}, 34:335--343, 1973.

\bibitem{Protasov1996}
V.~Y. Protasov.
\newblock Algorithms for approximate calculation of the minimum of a convex
  function from its values.
\newblock {\em Mathematical Notes}, 59(1):69--74, 1996.

\bibitem{RademacherShao2008}
L.~Rademacher and X.~Shao.
\newblock Minimal partitioning into product sets.
\newblock Manuscript, 2008.
\newblock \url{https://www.math.ucdavis.edu/~lrademac/partition.pdf}.

\bibitem{RV08}
L.~Rademacher and S.~Vempala.
\newblock Dispersion of mass and the complexity of randomized geometric
  algorithms.
\newblock {\em Advances in Mathematics}, 219(3):1037--1069, 2008.

\bibitem{SankarSpielmanTeng2006}
A.~Sankar, D.~A. Spielman, and S.-H. Teng.
\newblock Smoothed analysis of the condition numbers and growth factors of
  matrices.
\newblock {\em SIAM Journal on Matrix Analysis and Applications},
  28(2):446--476, 2006.

\bibitem{AGGW20}
J.~van Apeldoorn, A.~Gily{\'e}n, S.~Gribling, and R.~de~Wolf.
\newblock Convex optimization using quantum oracles.
\newblock {\em Quantum}, 4:220, 2020.

\end{thebibliography}

\end{document}